\newcommand{\qbps}{$\frac14$-BPS }
\newcommand{\lw}{\ell_\gamma}
\newcommand{\nw}{n_\gamma}
\newcommand{\mw}{m_\gamma}
\newcommand{\hbps}{$\frac12$-BPS }
\newcommand{\tsz}{(\tau,z,\sigma)}
\newcommand{\slz}{$ \displaystyle PSL(2,\mathbb{Z}) $ }
\newcommand{\wmat}[4]{\begin{pmatrix}
    #1 & #2 \\ #3 & #4
  \end{pmatrix}
}
\newcommand{\cvec}[2]{\begin{pmatrix}
#1 \\ #2
  \end{pmatrix}
}
\newcommand{\sgn}{\text{sgn}}
\newcommand{\abs}[1]{\left \vert #1 \right \vert}
\newcommand{\Z}{\mathbb{Z}}
\newtheorem{definition}{Definition}[section]
\newtheorem{proposition}{Proposition}[section]
\newcommand{\mat}[4]{\begin{pmatrix} #1 & #2 \\ #3 & #4 \end{pmatrix}}
\renewcommand{\t}{\tilde}
\renewcommand{\d}{\cdot}
\newcommand{\be}{\begin{equation}}
\newcommand{\ee}{\end{equation}}
\newcommand{\ba}{\begin{eqnarray}}
\newcommand{\ea}{\end{eqnarray}}
\newcommand{\lp}{\left(}
\newcommand{\rp}{\right)}
\newcommand{\ndt}{\noindent}
\newcommand{\mpage}[1]{\begin{minipage}{2.1in}
		#1
\end{minipage}}
\newcommand{\CN}{\mathcal{N}}
\newcommand{\IZ}{\mathbb{Z}}
\renewcommand{\=}{\;=\;}
\newcommand{\defeq}{\; \coloneqq \;} 
\renewcommand{\i}{\mathrm{i}}
\newcommand{\bea}{\begin{eqnarray}}
\newcommand{\eea}{\end{eqnarray}}
\newcommand{\gset}{\text{W}(n,\ell,m)}
\newcommand{\gbsm}{\Gamma_\text{BSM}(n,\ell,m)}
\newcommand{\hormat}[4]{ #2/#4  \, \rightarrow  \,#1/ #3 }
\newcolumntype{C}{>{$}c<{$}}
\title{Dyonic black hole degeneracies in~$\mathcal{N} = 4$ string theory from Dabholkar-Harvey degeneracies}
\author[a,b]{Abhishek Chowdhury }
\author[a,c]{, Abhiram Kidambi }
\author[d]{, Sameer Murthy }
\author[e]{, Valentin Reys }
\author[a]{, Timm Wrase }
\affiliation[a]{Institute of Theoretical Physics, TU Wien, Wiedner Hauptstra\ss e 8-10, A-1040 Vienna, Austria}
\affiliation[b]{IIT Bhubaneshwar, SBS Building, Argul, Khordha, 752050, Odisha, India}
\affiliation[c]{Stanford Insititute for Theoretical Physics, Stanford University, Palo Alto, California, USA}
\affiliation[d]{King's College London, Strand, London WC2R 2LS, United Kingdom}
\affiliation[e]{Institute of Theoretical Physics, KU Leuven, Celestijnenlaan 200D, B-3001 Leuven, Belgium}
\abstract{
	The degeneracies of single-centered dyonic \qbps black holes (BH) in 
	Type II string theory on~K3$\times T^2$ are known to be coefficients of certain 
	mock Jacobi forms arising from the Igusa cusp form~$\Phi_{10}$.
	In this paper we present an exact analytic formula for these BH degeneracies
	purely in terms of the degeneracies of the perturbative~$\frac12$-BPS states of the theory. 
	We use the fact that the degeneracies are completely controlled by the polar coefficients 
	of the mock Jacobi forms, using the Hardy-Ramanujan-Rademacher circle method.
	Here we present a simple formula for these polar coefficients as a 
	quadratic function of the $\frac12$-BPS degeneracies.
	We arrive at the formula by using the physical interpretation of polar coefficients 
	as negative discriminant states, and then making use of previous results in the literature to 
	track the decay of such states into pairs of \hbps states in the moduli space.
	Although there are an infinite number of such decays, 
	we show that only a finite number of them contribute to the formula. 
	The phenomenon of BH bound state metamorphosis (BSM) plays a crucial role in our analysis. 
	We show that the dyonic BSM orbits with $U$-duality invariant~$\Delta<0$
	are in exact correspondence with the solution sets of the Brahmagupta-Pell 
	equation, which implies that they are isomorphic to the group of units in the 
	order~$\mathbb{Z}[\sqrt{|\Delta|}]$ in the 
	real quadratic field~$\mathbb{Q}(\sqrt{|\Delta|})$.
	We check our formula against the known numerical data arising from the Igusa cusp form,
	for the first~1650 polar coefficients, and find perfect agreement.}
\begin{document}
	\maketitle
	
	\section{Introduction and summary of results}
	\label{sec:introduction}
	
	String theory has proven to be a powerful description of the microscopic degeneracy of states of 
	supersymmetric black holes. The original breakthrough of \cite{Sen:1995in,Strominger:1996sh} gave us two complementary
	pictures of viewing the black hole---as a bound state of microscopic excitations of strings and branes
	carrying a statistical entropy,
	and as a solution to the equations of motion of macroscopic supergravity having a thermodynamic
	Bekenstein-Hawking-Wald entropy associated to the black hole horizon. 
	A lot of progress has occurred since then on sharpening both these pictures: on the macroscopic side 
	we have learned how to include stringy effects as well as quantum gravitational effects in the calculation 
	of the exact quantum gravitational entropy; and 
	on the microscopic side we have learned how to take into account subtle effects like wall-crossing to 
	isolate the states of the black hole from the full statistical ensemble of string theory. 
	The two complementary pictures of a black hole can now be recast as 
	the exact AdS$_2$/CFT$_1$ correspondence in the near-horizon region of the black hole \cite{Sen:2008vm},
	with two well-defined quantum systems each having its own rules of calculations. 
	This formulation has allowed us to quantitatively test the picture of a black hole as an ensemble of microstates 
	well beyond the thermodynamic approximation, and has provided examples of AdS/CFT valid at finite~$N$.
	
	Perhaps more importantly, the formulation of this correspondence has allowed us to ask sharp questions 
	on both sides of the story which would not have been possible earlier---each side of the correspondence 
	provides a novel guide for organizing the observables and calculations on the other side which may not be a priori
	obvious.
	In this context, we present in this paper an exact analytic formula for the integer quantum degeneracies of dyonic black holes
	in the four-dimensional~$\CN=4$ string theory arising from Type II string theory compactified on~K3$\times T^2$.
	The origins of this formula involve an intricate interplay between physical ideas (AdS$_2$/CFT$_1$, localization in supergravity, 
	instanton sums, black hole metamorphosis), 
	and mathematical ones (Siegel modular forms, mock Jacobi forms, and their Rademacher expansions). We do not have 
	a rigorous mathematical proof of our formula, but we are able to use the above ideas to obtain 
	a precise conjectural statement relating the Fourier coefficients of 
	the inverse of the Igusa cusp form~$1/\Phi_{10}$ and the Fourier coefficients of the power of the Dedekind 
	eta function~$1/\eta^{24}$, 
	which we have checked numerically to high order. In the rest of the introduction we present, successively, 
	the context and the physics motivation, the mathematical formula, 
	the idea of the calculation, and its interpretation in gravity.

	\subsection{Motivation and context}

	A prototype for an exact gravitational entropy formula can be found for~$\frac18$-BPS dyonic black holes in~$\CN=8$ 
	string theory (Type II string theory compactified on $T^6$) \cite{Maldacena1999}. 
	In that case the exact degeneracies of supersymmetric black holes are known to be coefficients of a certain 
	Jacobi form of weight~$-2$ and 
	index~$1$,~$ \displaystyle Z_{\CN=8}(\tau,z) = \sum_{n,\ell} C_{\CN=8}(n,\ell) \, e^{2\pi\mathrm{i} n \tau}\,e^{2\pi\mathrm{i} \ell z}$.  
	The black hole is labelled by the discriminant~$4n-\ell^2$ which grows, at large charges, as the square of the area of the horizon.
	The Hardy-Ramanujan-Rademacher formula for Jacobi forms 
	provides an exact analytic expression for the coefficients~$C_{\CN=8}(n,\ell)$ of this Jacobi form as an infinite  
	sum over Bessel functions with successively decreasing arguments. Importantly, this formula
	has no free parameters, and the only inputs are the modular transformation properties (the weight and index) 
	of the Jacobi form~$Z_{\CN=8}$ and its overall normalization which is fixed.\footnote{The Rademacher 
		formula, reviewed in Appendix \ref{app:Jac}, typically has a finite number of integers (the polar degeneracies) as input, 
		but in this case the large symmetry of the theory implies there is only one independent polar degeneracy which 
		can be normalized to one.} 
	Each term in the formula is then interpreted, via a localization calculation in the gravitational theory, as a functional integral 
	over the smooth fluctuations around certain asymptotically~AdS$_2$ 
	configurations~\cite{Dabholkar:2010uh,Dabholkar:2011ec,Dabholkar:2014ema}.
	
	We would now like to do the same for situations with less supersymmetry.
	The next-simplest example is~$\frac14$-BPS dyonic black holes in $\CN=4$ string theories, which are 
	labelled by the $T$-duality invariants of the charges~$n=Q^2/2$, $\ell = Q \cdot P$, $m=P^2/2$.
	(As above, the area of the horizon grows as~$\sqrt{\Delta}$  where the discriminant is~$\Delta = 4mn-\ell^2$.)
	In a class of~$\CN=4$ string theories the generating function of~$\frac14$-BPS states 
	is also known in terms of Siegel modular forms.
	The simplest example is Type II compactified on K3$\times T^2$ in which case the generating function of the supersymmetric index 
	that counts the~$\frac14$-BPS microstates is the inverse of the Igusa cusp form~$\Phi_{10}\tsz$~\cite{Dijkgraaf1997d}. 
	The main subtlety in~$\CN=4$ string theory compared to~$\CN=8$
	string theory is that, at strong coupling, the supersymmetric index receives contributions from~$\frac14$-BPS 
	single-centered black holes as well as bound states of two $\frac12$-BPS black 
	holes. 
	The question then arises to isolate the microstates that contribute to the single-centered black hole only. 
	Doing so breaks the modular invariance which was crucial to interpret the formula as a gravitational 
	functional integral. 
	
	It was shown in~\cite{Dabholkar2012} that one can isolate the degeneracies of single-centered black holes 
	in~$\CN=4$ string theory while keeping the essence 
	of modularity intact. More precisely, the degeneracies of single-centered dyonic black holes in Type II on K3$\times T^2$
	are the Fourier coefficient of certain \emph{mock} Jacobi forms. 
	One can calculate the mock Jacobi forms and their coefficients for any set of given charges,  
	using a computer algorithm. 
	This indeed clarifies the modular nature of the degeneracies of black hole microstates, 
	but we would like to do better and find an explicit formula for them, as in the~$\CN=8$ theory. 
	Upon applying the circle method of Hardy-Ramanujan-Rademacher 
	to the known modular completions of the mock Jacobi forms, one obtains an analytic formula for the 
	black hole degeneracies~\cite{Ferrari2017a} which is 
	similar to, but more complicated than, the one in the~$\CN=8$ theory---there are some additional terms 
	coming from the fact that one has mock Jacobi and not true Jacobi forms, but 
	the bottom line is that for a given mock Jacobi form one has an infinite series of terms controlled 
	purely by a finite number of integers.
	(The explicit formula is presented in~\eqref{eq:app-mixed-mock-coeffs}.)
	The integers in question are the polar coefficients of the mock Jacobi forms themselves. Here \emph{polar state} 
	(and correspondingly \emph{polar coefficient}) 
	means states with discriminant~$\Delta = 4mn - \ell^2<0$. 
	
	In this paper we present a simple analytic formula for the polar coefficients 
	in terms of the degeneracies 
	of~$\frac12$-BPS states in~$\CN=4$ string theory. In the heterotic duality frame, these are realized as perturbative
	fluctuations of the fundamental strings i.e.,~the Dabholkar-Harvey states~\cite{Dabholkar1989}.
	This means that the full quantum degeneracy of the black hole---which is a non-perturbative bound state of 
	strings, branes, and KK-monopoles---is completely controlled by simple perturbative elements of string theory!
	The nature of the formula, presented below in~\eqref{eq:OSV-type}, is also noteworthy: the polar coefficients of the 
	mock Jacobi forms are simply linear combinations of quadratic functions of the~$\frac12$-BPS degeneracies. 
	The latter can be interpreted as counting worldsheet instantons or, more precisely, genus-one Gromov-Witten invariants.
	This structure is clearly reminiscent of the OSV formula~$Z_{\mathrm{BH}} = |Z_{\mathrm{top}}|^2$~\cite{Ooguri:2004zv},
	but the details are somewhat different. 
	The right-hand side of our formula, involving instanton degeneracies, is controlled by 
	$Z_{\mathrm{top}}$, while the left-hand side is the  ``seed" for the~$\frac14$-BPS BH degeneracies 
	via an intricate series which is dictated by the mock modular symmetry.  
	The idea of exploiting modular symmetry in order to reach a precise non-perturbative definition 
	for the OSV formula was already initiated in~\cite{Denef:2007vg}, but the technical complications of~$\CN=2$ string theories
	did not allow for an explicit formula. Here we use the fact that many aspects of~$\CN=4$ string theories are solvable in order 
	to reach such a formula.

	\subsection{The main formula}

	In order to present our main formula we briefly review the procedure~\cite{Dabholkar2012} 
	to calculate the single-centered black hole degeneracies in Type II string theory on~K3$\times T^2$. 
	One first expands the partition function in the chemical potential conjugate to the magnetic charge invariant~$m$ to obtain
	the Fourier-Jacobi expansion
	\begin{equation}
	\label{eq:Fourier-Jacobi-intro}
	\frac{1}{\Phi_{10}(\tau,z,\sigma)} \= \sum_{m\geq-1}\,\psi_m(\tau,z)\,e^{2\pi\mathrm{i} m \sigma} \, .
	\end{equation}
	The Igusa cusp form~$\Phi_{10}$ is a Siegel modular form of weight~$10$ which implies that the 
	functions~$\psi_m$ are \emph{meromorphic} Jacobi forms of weight $-10$ and index $m$. 
	Since~$\Phi_{10}$ has a double zero at~$z=0$, $\psi_m$ is meromorphic in~$z$ with a double pole at~$z=0$.
	This meromorphicity has its physical origin in the wall-crossing phenomenon, whereupon bound states 
	of two~$\frac12$-BPS centers appear or decay as one moves around the moduli space of the compactification. 
	It was shown in~\cite{Dabholkar2012} that the functions $\psi_m$ have a canonical decomposition into two pieces,
	\begin{equation}
	\psi_m(\tau,z) \= \psi_m^\mathrm{F}(\tau,z) + \psi_m^\mathrm{P}(\tau,z) \, ,
	\end{equation}
	where~$\psi_m^{\mathrm{F}}$ and~$\psi_m^{\mathrm{P}}$ count the degeneracies of dyonic~$\frac14$-BPS
	\emph{single-centered} black holes, and two-centered~$\frac12$-BPS black hole bound states, respectively.
	Further, the function $\psi_m^{\mathrm{F}}$ is a \emph{mock} Jacobi form~\cite{Dabholkar2012,Zwegers:2008zna,zagierramanujan}, 
	which is holomorphic in $z$. 
	This implies a Fourier expansion of the form
	\begin{equation}
	\label{eq:psi-F-exp}
	\psi_m^\mathrm{F}(\tau,z) \= \sum_{n,\ell}\,c_m^\mathrm{F}(n,\ell)\,e^{2\pi\mathrm{i} n \tau}\,e^{2\pi\mathrm{i} \ell z} \, .
	\end{equation}
	The microsocpic degeneracies of~$\frac14$-BPS single-centered black holes are related to these Fourier coefficients as
	\be
	d^\text{BH}_\text{micro} (n,\ell,m) \=  (-1)^{\ell+1} \, c_m^\mathrm{F}(n,\ell)  \quad \text{for} \quad  \Delta \= 4 m n -\ell^2 > 0 \,.
	\ee
	
	We now present the analytic formula for the black hole degeneracies which is a combination of 
	the following two formulas:
	\begin{enumerate}
		\item The BH coefficients~$c_m^\mathrm{F}(n,\ell)$,~$\Delta > 0$ are completely controlled by the polar 
		coefficients~$c_m^\mathrm{F}(n,\ell)$,~$\Delta<0$. The relevant formula follows from the ideas of Hardy-Ramanujan-Rademacher
		applied to mock Jacobi forms, which by now has become a well-established technique in analytic number 
		theory~\cite{bringmann2006f,Bringmann:2010sd,bringmann2012coefficients}. 
		We review this in Section \ref{sec:exact-entropy}. For the particular mock
		Jacobi forms~$\psi_m^\mathrm{F}$ the formula was obtained in~\cite{Ferrari2017a}, which we recall 
		in~\eqref{eq:app-mixed-mock-coeffs}.
		\item The polar coefficients~$c_m^\mathrm{F}(n,\ell)$, $\Delta<0$ are given by 
		\be \label{eq:OSV-type}
		c_m^\mathrm{F}(n,\ell) \= \sum_{\gamma \in \gset} (-1)^{\ell_\gamma+1}\, |\ell_\gamma|\,d (m_\gamma)\,
		d (n_\gamma) \quad \text{for} \quad \Delta \= 4 m n -\ell^2 < 0 \, .
		\ee
		We obtain this formula using the ideas and results of~\cite{Sen2011}, by tracking all possible ways that 
		a two-centered black hole bound state of total charge~$(n,\ell,m)$ decays into its constituents across a wall of 
		marginal stability. 
		Here~$\gset$ is a set of~$SL(2,\IZ)$ matrices that encode the relevant Walls of marginal stability. 
		This set is finite and we will spend a large part of the paper characterizing this set. The 
		precise formulas are given in~\eqref{eq:degwmeta}, \eqref{eq:degwmetaW}.
		The quantities $(n_\gamma, \ell_\gamma, m_\gamma)$ are the $T$-duality invariants of the charges~$(Q,P)$
		transformed by~$\gamma$,
		and~$d(n)$ is the degeneracy of~$\frac12$-BPS states with charge invariant~$n$, given by~\cite{Dabholkar1989}
		\be
		\label{eq:Fourier-eta}
		\frac{1}{\eta(\tau)^{24}} \= \sum_{n=-1}^\infty d(n) \, e^{2 \pi \i n \tau} \,.
		\ee 
	\end{enumerate}
	We have checked the main formula~\eqref{eq:OSV-type} against the polar coefficients of $\psi_m^\mathrm{F}$ 
	extracted from the Igusa cusp form~$\Phi_{10}$ using formula~\eqref{eq:Fourier-Jacobi-intro} for magnetic charge 
	invariant up to $m=30$, which corresponds to 1650 coefficients. In order to extract the polar coefficients 
	from~$\Phi_{10}$ the steps required are to (a) build~$\Phi_{10}$ from the \emph{additive lift}~\cite{eichler1985theory}, 
	(b) inverting it, and (c) extracting the relevant polar coefficients. The main computational bottleneck in this procedure 
	is the inversion. Using recursion relations~\cite{Dabholkar2012}, which is much faster than a direct division, already 
	took us a computing time of the order of hours on a MacBookPro.
	In contrast, the right-hand side of the formula~\eqref{eq:OSV-type} for a given value of~$m$ can be computed in milliseconds 
	on the same computer, which is a factor of~$\mathcal{O}(10^5)$.

	\subsection{The idea of the calculation}

	When the charges have a negative discriminant they cannot form a single-centered black hole 
	(recall that the discriminant is proportional to the square of the classical horizon area).
	We know that the only other configurations that contribute to the~$\frac14$-BPS index in~$\CN=4$
	string theory are two-centered bound states of~$\frac12$-BPS black holes~\cite{Dabholkar:2009dq,Sen2011}.
	Thus the problem becomes one of calculating all possible ways a given set of charges with negative 
	discriminant contributing to~$c_m^\mathrm{F}$ can be represented as two-centered black hole bound states. 
	
	Now, any such bound state is an~$S$-duality ($SL(2,\IZ)$) transformation of the basic bound state,
	which consists of an electrically charged~$\frac12$-BPS black hole with invariant~$n=Q^2/2$, 
	a magnetically charged~$\frac12$-BPS black hole with invariant~$m=P^2/2$, and the electromagnetic
	fields carry angular momentum~$\ell = Q \cdot P$.
	The indexed degeneracy of this system equals 
	\be \label{basicdeg}
	(-1)^{\ell+1}\, |\ell| \cdot d(n) \cdot d(m) \, .
	\ee
	The factors~$d(n)$ and~$d(m)$ in this formula are, respectively, the internal degeneracies of the electric
	and magnetic~$\frac12$-BPS black holes,
	and the factor~$(-1)^{\ell+1} |\ell |$ is the indexed number of supersymmetric ground states of the 
	quantum mechanics of the relative motion between the two centers~\cite{Denef:2000nb}.
	The degeneracy of an arbitrary bound state can be calculated by acting on the charges~$(Q,P)$ 
	by the appropriate~$S$-duality transformation and replacing the charge invariants in~\eqref{basicdeg} by their 
	transformed versions. This is precisely the structure of the formula~\eqref{eq:OSV-type}. 
	
	The final ingredient of the formula is to state precisely what are the allowed values of~$\gamma$ which labels  
	all possible bound states. A very closely related problem was solved in an elegant manner in~\cite{Sen2011},
	which we use after making small adaptations (note that the modular and elliptic structures are manifest in our presentation).
	The basic intuition comes from particle physics---any 
	bound state must decay into its fundamental constituents somewhere, and so the question of which bound 
	states exist is the same as the question of what are all the possible decays of two-centered~$\frac12$-BPS black holes. 
	As was shown in~\cite{Sen2011} the possible decays are labelled by a certain set of~$SL(2,\IZ)$ matrices. 
	The exact nature of this set is a little subtle due to a phenomenon called black hole bound state 
	metamorphosis (BSM)~\cite{Andriyash:2010yf,Sen2011,Chowdhury:2012jq}, which identifies different-looking physical configurations
	with each other. This is the step which lacks a rigorous mathematical proof, but the physical picture 
	is well-supported. The sum over~$\gset$ in~\eqref{eq:OSV-type} is precisely the sum over all possible 
	decay channels after taking metamorphosis into account. 
	Thus our checks of the formula~\eqref{eq:OSV-type} can be thought of as providing 
	more evidence for the phenomenon of metamorphosis. 
	
	The metamorphosis can be of three types: electric, magnetic, and dyonic. The corresponding identifications generate 
	orbits of length~2 in the first two cases and of infinite length in the third. In the first two cases the metamorphosis has a 
	simple~$\IZ/2\IZ$ structure, while the group structure of the dyonic case was less clear so far.  
	We show in this paper that the identifications 
	due to dyonic BSM have a group structure of~$\IZ$. Moreover, the problem of finding BSM orbits maps precisely to finding the 
	solutions to a well-studied	Diophantine equation, namely the Brahmagupta-Pell equation, whose structure is completely known. 
	In the language of algebraic number theory, this is the problem of finding the group of units in the order~$\IZ[\sqrt{|\Delta|}]$
	in the real quadratic field~$\mathbb{Q}(\sqrt{|\Delta|})$.

	\subsection{Gravitational intepretation}

	Recall that the quantum entropy of the gravitational theory is formulated as a functional integral over asymptotically~AdS$_2$
	configurations. Using the technique of supersymmetric localization in the variables of supergravity, a formula for the 
	exponential of the quantum entropy was derived in \cite{Dabholkar:2010uh,Dabholkar:2011ec, Dabholkar:2014ema}. 
	The result takes the form of an infinite sum of 
	finite-dimensional integrals over the (off-shell) fluctuations of the scalar fields 
	around the attractor background, where the integrand includes a tree-level and a one-loop factor in the off-shell theory. 
	The infinite sum is interpreted as different orbifold configurations in string theory with the same~AdS$_2$ 
	boundary~\cite{Dabholkar:2014ema}. 
	In the $\mathcal{N}=8$ theory, this result agrees exactly with the Rademacher expansion for the coefficient of the Jacobi form
	controlling the microscopic index. 
	
	We can now offer a physical interpretation of our exact degeneracy 
	formula~\eqref{eq:app-mixed-mock-coeffs},~\eqref{eq:OSV-type}
	from this point of view. The sum over~$k$ in~\eqref{eq:app-mixed-mock-coeffs} runs 
	over all positive integers with the argument of the Bessel function suppressed as~$1/k$ 
	and the Kloosterman sum depending on~$k$. This part of the structure comes 
	from a sum over~$\Gamma_\infty\backslash SL(2,\IZ)$ of the circle method, and 
	can be interpreted in the gravitational theory exactly as in the~$\CN=8$ theory,
	namely as a sum over orbifolds of the type (AdS$_2 \times S^1\times S^2)/\IZ_k$~\cite{Dabholkar:2014ema}.
	The Kloosterman sum arises from an analysis of Chern-Simons terms in the full geometry. 
	The degeneracies of polar states~$c^\text{F}_m(n,\ell)$ (with~$\Delta = 4mn-\ell^2 <0$) 
	is interpreted as the number of states of a given~$(n,\ell,m)$ which do \emph{not} 
	form a big single-centered BH. The finite sum over~$\gamma \in \gset$ in~\eqref{eq:OSV-type}
	is indicative of a further fine structure where the smallest units are the~$\frac12$-BPS instanton 
	states with their corresponding degeneracy. This is the sense in which the final degeneracy 
	formula is constructed out of the instantonic elements. 
	
\vspace{0.2cm}
	
	The outline of the paper is as follows: In Section~\ref{sec:exact-entropy} we explain in detail why one can 
	reduce the counting of \qbps states to counting bound states of \hbps states. In Section~\ref{sec:localization} 
	we discuss the macroscopic supergravity counting of \qbps states. Section~\ref{sec:neg-disc} discusses 
	several details that are important for the proper counting of negative discriminant states. In Sections~\ref{sec:negwomet} 
	and~\ref{sec:metamorphosis} we present all the relevant calculations that will lead to the explicit formula for 
	the negative discriminant states. In Section~\ref{sec:metamorphosis}, we characterize the orbits 
	of dyonic metamorphosis in terms of the orbits of the solutions to the Brahmagupta-Pell equation.
	The final Section~\ref{sec:expchecks} presents our exact black hole formula 
	and lists some numerical data that shows its validity. In Appendix~\ref{app:Jac} we review the details of the 
	Rademacher formula applied to our case of interest. Appendix~\ref{sec:finiteness} provides numerical evidence 
	for one special case that we could not solve analytically. Lastly, in Appendix~\ref{sec:furtherchecks} we provide 
	further explicit data for the interested reader.

	\section{Exact dyonic black hole degeneracies and the attractor region}
	\label{sec:exact-entropy}

	In this section we first review the microscopic counting formula for single-centered \qbps states 
	in~$\CN=4$ string theory. We then explain how an exact analytic formula for the corresponding 
	black hole degeneracies reduces to the problem of counting bound states of \hbps centers, and how this problem can be 
	efficiently solved using the results of~\cite{Sen2011}. \\
	
	Four-dimensional $\mathcal{N}=4$ string theory can be described either in terms of heterotic string theory compactified 
	on~$ \displaystyle T^6 $, or in another duality frame in terms of Type II string theory compactified on K3$\times T^2$.
	Dyonic states are charged under the 28 $U(1)$ gauge fields, with the charge vector $(Q,P)$ taking values in 
	the integral second cohomology lattice of $\Gamma_{6,22} \oplus \Gamma_{6,22}$. The theory 
	has $S$-duality group $SL(2,\mathbb{Z})$ and $T$-duality group $O(6,22,\mathbb{Z})$. 
	The $T$-duality invariants are
	\begin{equation}
	(n,\ell,m) \defeq (Q^2/2,\,Q\cdot P,\,P^2/2) \, . 
	\end{equation}
	The dyonic charges $(Q,P)$ transform as a doublet under $S$-duality, and the 
	discriminant~$\Delta = 4mn -\ell^2$ is invariant under $U$-duality. \\
	
	The microscopic degeneracies of dyonic \qbps 
	states\footnote{Here and in the following, we refer to dyons with torsion 1, i.e. $\operatorname{gcd}\left\{Q_{i} P_{j}-Q_{j} P_{i} \, , \; 1 \leq i, j \leq 28\right\}=1$. 
	A similar story for generic dyons \cite{Banerjee:2008ri,Dabholkar:2008zy} should follow along the same lines.} 
	in the above theory are given by a Fourier transform of the inverse of the Igusa 
	cusp form~$\Phi_{10}$, the unique automorphic form of weight 10 defined on~$\displaystyle Sp(2, \Z)$ 
	\cite{Dijkgraaf1997d,Dijkgraaf:1996xw,Shih:2005uc,David:2006yn}
	\begin{equation}
	\label{eq:DVV}
	d_{\frac14}(Q,P) \=(-1)^{\ell+1}  \int_{\mathcal{C}} \mathrm{d} \tau \, \mathrm{d} \sigma \, \mathrm{d} z \, 
	e^{-2 \pi \mathrm{i}\left(\tau n + z \ell + \sigma m\right)} \, \frac{1}{\Phi_{10}\tsz} \, .
	\end{equation}
	Here, $\mathcal{C}$ indicates a certain contour in the three complex dimensional space spanned by $ \tsz$, so that 
	the degeneracies on the left-hand side depend on this contour (this dependence has been suppressed in the notation). 
	Importantly, the contour~$\mathcal{C}$ depends on the moduli of the compactification \cite{Cheng2007}. 
	When moving through the moduli space for a fixed set of charges~$(Q,P)$, the degeneracies 
	jump when $\mathcal{C}$ crosses a pole in the partition function $\Phi_{10}^{-1}$.
	This is a manifestation of the \emph{wall-crossing phenomenon} where a \qbps bound state of two \hbps states appears or 
	decays upon crossing codimension-one surfaces in the moduli space.
	Thus, the moduli space is divided into chambers separated by walls of marginal stability. 
	In a given chamber the degeneracies for a given $(Q,P)$ are constant, and they jump as one crosses 
	a pole in moving to another chamber. This phenomenon will be central to our investigations in the rest of the paper, 
	and we will discuss the contour~$\mathcal{C}$ in more detail below.

	\subsection*{Single- and multi-centered degeneracies}

	In the macroscopic supergravity description, the gravitational configurations captured by the 
	index \eqref{eq:DVV} correspond to either (a) \qbps single-centered black holes, or (b) \qbps bound states 
	of two \hbps black holes \cite{Dabholkar:2009dq}.\footnote{Multi-centered 
		black hole bound states exist in theories with any number of supercharges. A bulk-analysis of preserved and 
		broken supersymmetry \cite{Dabholkar:2009dq} shows that only certain types of configurations contribute to the relevant indices:
		single-centered~$\frac18$-BPS BHs contribute to the $\frac18$-BPS index in~$\CN=8$ string theory, 
		single centered~$\frac14$-BPS BHs and two-centered bound states of $\frac12$-BPS BHs contribute to 
		the $\frac14$-BPS index in~$\CN=4$ string theory, and all single and multi-BH bound states which are $\frac12$-BPS
		contribute to the $\frac12$-BPS index in~$\CN=2$ string theory. This makes~$\CN=4$ string theory a simple 
		starting point to analyze effects of black hole bound states on the index.
	} 
	The bound states 
	exist depending on the region of the moduli space and the values of the charges $(Q,P)$ \cite{Denef:2000nb}, in accordance 
	with the wall-crossing phenomenon discussed above. In contrast, single-centered solutions exist everywhere in the 
	moduli space provided the $U$-duality invariant $\Delta$ is
	positive. For large values of~$\Delta$, this is consistent with the semi-classical picture of BHs where 
	the classical area of the black hole is~$4\pi \sqrt{\Delta}$. The conjecture of~\cite{Sen2009} extrapolates this intuition to 
	all positive values of~$\Delta$.
	Far away from the black hole, the massless 
	moduli can take any value, but the attractor mechanism~\cite{Ferrara1995} implies that they are ``attracted'' to 
	values that are completely determined by the charges near the black hole horizon. In the $\mathcal{N}=4$ string theory under consideration, this is 
	elegantly captured by the attractor contour \cite{Cheng2007}
	\begin{equation}
	\label{eq:attractor-contour}
	\mathcal{C}_* = \{ \mathrm{Im}(\tau) = 2m/\varepsilon,\,\mathrm{Im}(\sigma) = 2n/\varepsilon,\,\mathrm{Im}(z) = -\ell/\varepsilon,\;\; 0 \leq \mathrm{Re}(\tau),\mathrm{Re}(\sigma),\mathrm{Re}(z) < 1 \} \, ,
	\end{equation}
	where $\varepsilon \rightarrow 0^+$,
	so that the single-centered degeneracies $d_*(Q,P)$ evaluated using \eqref{eq:DVV} 
	and the contour $\mathcal{C}_*$ are functions of the charges $(Q,P)$ only. \\
	
	It was shown in \cite{Dabholkar2012} that these single-centered degeneracies are Fourier coefficients of certain mock Jacobi forms $\psi_m^\mathrm{F}$. It is important to note however that the converse is not true, namely that not all coefficients of $\psi_m^\mathrm{F}$ are degeneracies of single-centered black holes, and this will play an important role in what follows. 
	To construct $\psi_m^\mathrm{F}$, one begins by 
	expanding the partition function $\Phi_{10}^{-1}$ around the $\sigma \rightarrow \i\infty$ point,
	\begin{equation}
	\label{eq:Fourier-Jacobi}
	\frac{1}{\Phi_{10}(\tau,z,\sigma)} \= \sum_{m\geq-1}\,\psi_m(\tau,z)\,e^{2\pi\mathrm{i} m \sigma} \, .
	\end{equation}
	The functions $\psi_m(\tau,z)$ in this expansion are Jacobi forms of weight $-10$ and index $m$ that are \emph{meromorphic}\footnote{This meromorphicity descends from the poles in the \qbps states partition function responsible for the wall-crossing phenomenon discussed above.} in the $z$ variable.
	The attractor contour \eqref{eq:attractor-contour} then shows that to extract~$d_*(Q,P)$ from $\psi_m$, the inverse Fourier transform in $z$ should be taken along a path such that 
	\begin{equation}
	\label{eq:DMZ-contour}
	\text{Im}(z)/\text{Im}(\tau) \= -\ell/2m \, .
	\end{equation}
	This contour was called the ``attractor contour'' in~\cite{Dabholkar2012}, and applies to general meromorphic 
	Jacobi forms of index~$m$. Without loss of generality one can choose\footnote{This is allowed since the 
		full physical system has a symmetry that exchanges $\tau$ and $\sigma$. For our purposes, it will be 
		convenient to work at fixed magnetic charge $m$.} $n>m$, 
	and furthermore the fact that the degeneracies~$d_*(Q,P)$ are invariant under spectral flow 
	implies that we can restrict~$\ell$ to the window $0 \leq \ell < 2m$.
	Taking the remaining inverse Fourier transforms,~\cite{Dabholkar2012} then showed that 
	the single-centered degeneracies~$d_*(Q,P)$ 
	are the Fourier coefficients of the so-called \emph{finite part} of $\psi_m$, 
	defined as 
	\begin{equation}
	\label{eq:psi-F}
	\psi_m^\mathrm{F}(\tau,z) \defeq \psi_m(\tau,z) - \psi_m^\mathrm{P}(\tau,z) \, ,
	\end{equation}
	where
	\begin{equation}
	\label{eq:psi-P}
	\psi_m^\mathrm{P}(\tau,z) \= \frac{d(m)}{\eta(\tau)^{24}}\,\sum_{s\in\mathbb{Z}}\,\frac{q^{m s^2+s}\,\zeta^{2ms + 1}}{(1-q^s \zeta)^2} \, .
	\end{equation}
	Above, $q := e^{2\pi\mathrm{i}\tau}$, $\zeta :=e^{2\pi\mathrm{i}z}$ and $d(m)$ is defined in \eqref{eq:Fourier-eta}. The Appell-Lerch sum in \eqref{eq:psi-P} exhibits wall-crossing since its Fourier expansion differs in the strips~$\alpha -1 < \mathrm{Im}(z)/\mathrm{Im}(\tau) \leq \alpha$, with $\alpha \in \mathbb{Z}$. Subtracting $\psi_m^\mathrm{P}$ from the functions $\psi_m$ in \eqref{eq:Fourier-Jacobi} implies that the resulting finite part $\psi_m^\mathrm{F}$ is \emph{holomorphic} in $z$ and as such has an unambiguous Fourier expansion
	\begin{equation}
	\label{eq:Fourier-psi-F}
	\psi_m^\mathrm{F}(\tau,z) \= \sum_{n,\ell}\,c_m^\mathrm{F}(n,\ell)\,q^n\,\zeta^\ell \, .
	\end{equation}
	The discussion of single-centered degeneracies so far can then be summarized as
	\begin{equation}
	\label{eq:SC-degen}
	d_*(Q,P) \= (-1)^{\ell + 1}\,c_m^\mathrm{F}\bigl(n,\ell\bigr) \quad \text{for} \quad \Delta = 4mn - \ell^2 > 0 \, .
	\end{equation}
	The central mathematical result of \cite{Dabholkar2012} is that $\psi_m^\mathrm{F}$ is a \emph{mock} Jacobi form. This means that 
	its usual modular behavior under $SL(2,\mathbb{Z})$ is modified. 
	To salvage modularity it is possible to add a correction term, known as the \emph{shadow}, to build a 
	function $\widehat{\psi}_m^\mathrm{F}$ that is modular. The shadow is however non-holomorphic in $\tau$, 
	so modularity is restored at the expense of holomorphicity \cite{zagierramanujan,Zwegers:2008zna}.

\vspace{0.2cm}
	
	Starting from the completion $\widehat{\psi}_m^\mathrm{F}$, the Fourier coefficients $c_m^\mathrm{F}(n,\ell)$ for $\Delta > 0$ can be computed using a generalization of the Hardy-Ramanujan-Rademacher formula suited for mixed mock modular forms~\cite{Bringmann:2010sd,Ferrari2017a}. We briefly review this in Appendix \ref{app:Jac} and the final result is presented in \eqref{eq:app-mixed-mock-coeffs}. One of the main points of the formula is that, in order to compute the Fourier coefficients of positive discriminant states entering \eqref{eq:SC-degen}, the only required input are the \emph{polar coefficients} of $\psi_m^\mathrm{F}$, defined as
	\begin{equation}
	\label{eq:polar-def}
	\widetilde{c}_m(n,\ell) \defeq c_m^\mathrm{F}(n,\ell) \quad \text{for} \quad \Delta = 4mn - \ell^2 < 0 \, .
	\end{equation}
	By construction, the polar coefficients $\widetilde{c}_m(n,\ell)$ count the number of negative discriminant states encoded in the generating function $\psi_m^\mathrm{F}$. They will be the central focus of the present paper, and we will give an explicit formula for them based on a careful analysis of wall-crossing and bound states.

	\subsection*{The moduli space and the attractor region}

	Having reviewed the \qbps single-centered degeneracies, we now discuss in a bit more detail the structure of walls in the moduli space. 
	The moduli-dependent contour $\mathcal{C}$ in \eqref{eq:DVV} can be written in terms of the moduli-dependent 
	central charge matrix $\mathcal{Z}$ \cite{Cheng2007}. The latter can be parameterized\footnote{This parametrization 
		corresponds to a projection from the full moduli space to the two-dimensional axio-dilaton moduli space of 
		the heterotic frame.} 
	by a complex scalar $\Sigma = \Sigma_1 + \mathrm{i}\Sigma_2$ as
	\begin{equation} \label{eq:defZSigma}
	\mathcal{Z} \= \Sigma_2^{-1} \begin{pmatrix} |\Sigma|^2 & \Sigma_1 \\ \Sigma_1 & 1 \end{pmatrix} \, .
	\end{equation}
	In terms of this matrix, the contour in \eqref{eq:DVV} reads (with $\varepsilon \rightarrow 0^+$)
	\begin{equation}
	\label{eq:moduli-contour}
	\mathcal{C} = \{ \mathrm{Im}(\tau) = \Sigma_2^{-1}/\varepsilon,\,\mathrm{Im}(\sigma) = \Sigma_2^{-1}|\Sigma|^2/\varepsilon,\,\mathrm{Im}(z) = -\Sigma_2^{-1}\Sigma_1/\varepsilon,\;\; 0 \leq \mathrm{Re}(\tau),\mathrm{Re}(\sigma),\mathrm{Re}(z) < 1 \} \, .
	\end{equation}
	The expansion~\eqref{eq:Fourier-Jacobi} then corresponds to taking the limit~$\Sigma_2 \rightarrow \infty$ while keeping~$\Sigma_1$ and~$\varepsilon\,\Sigma_2$ fixed. This limit has a physical interpretation as the M-theory limit, where one of the circles inside the internal $T^2$ of the Type II frame becomes large~\cite{Dabholkar2012}. In this limit, the expansion~\eqref{eq:Fourier-Jacobi} around~$\sigma \rightarrow \mathrm{i}\infty$ takes us high into the upper half-plane parameterized by~$\Sigma$, and varying~$\Sigma_1$ moves us horizontally. This is depicted in Figure~\ref{fig:M-theory-limit}.
	\begin{figure}
	\centering
		\begin{subfigure}{0.4\textwidth}
		\hspace{-1.5cm}
			\includegraphics[width=1.6\linewidth]{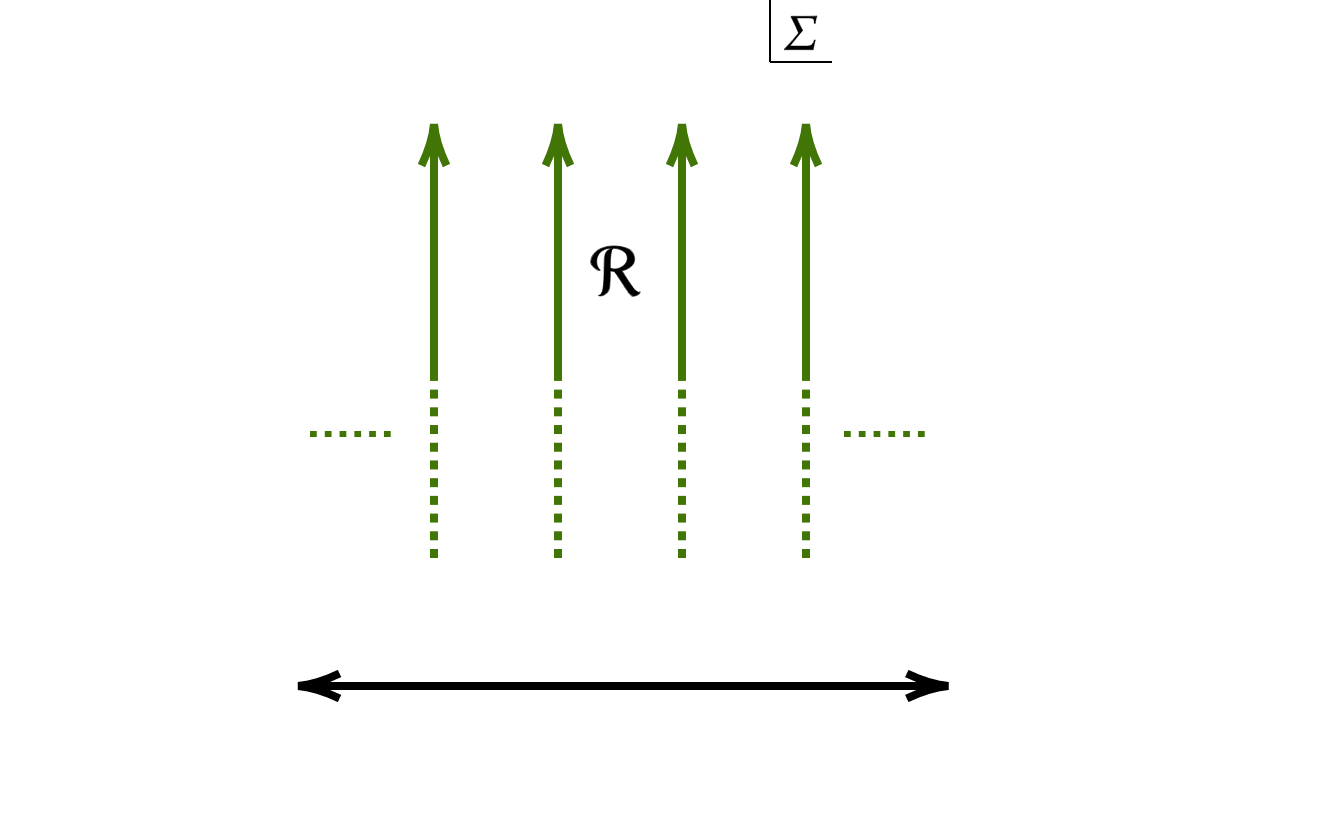}
			\caption{The M-theory limit \eqref{eq:Fourier-Jacobi}}
			\label{fig:M-theory-limit}
		\end{subfigure}
		\hspace{1cm}
		\begin{subfigure}{0.4\textwidth}
		\hspace{-1.5cm}
			\includegraphics[width=1.6\linewidth]{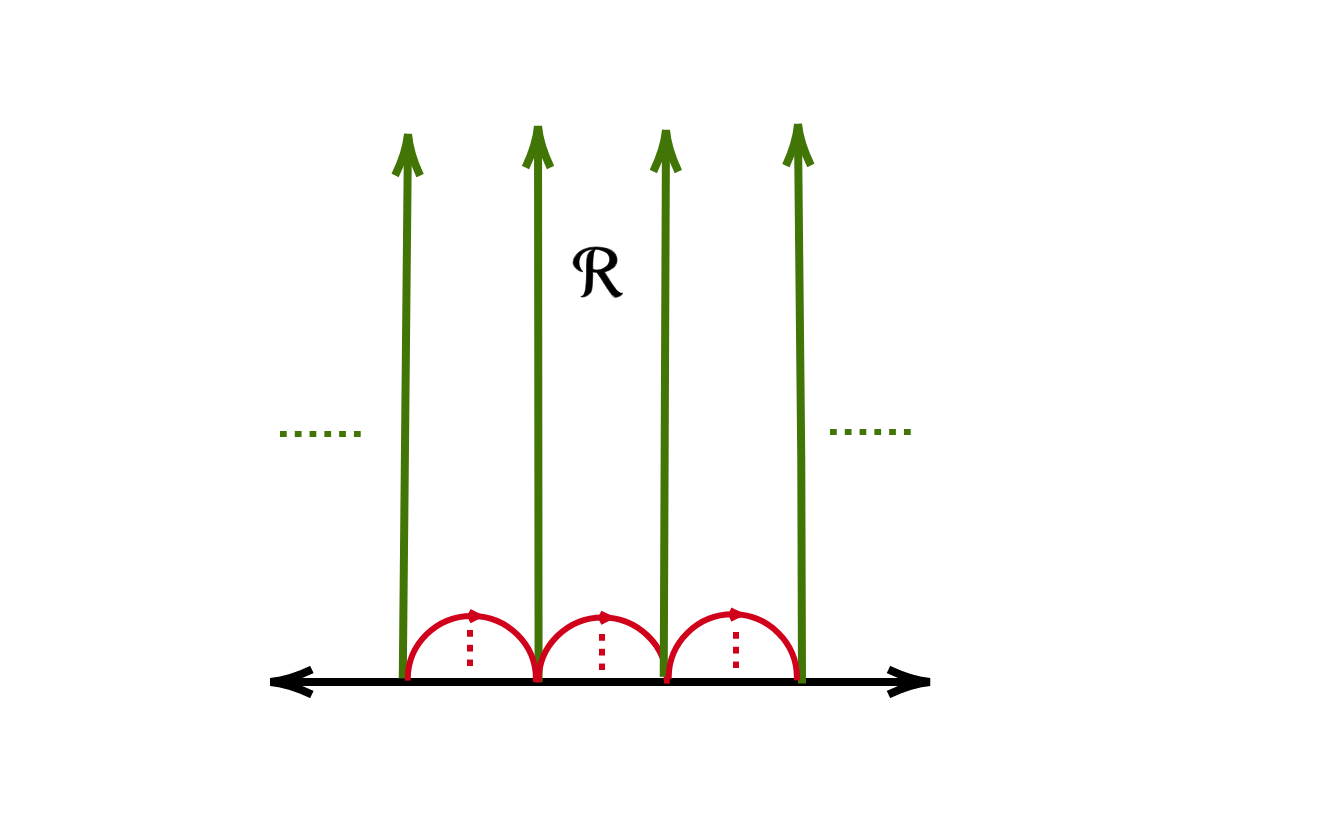}
			\caption{The full~$\Sigma$ moduli space}
			\label{fig:full-Sigma}
		\end{subfigure}
		\caption{The region~$\mathcal{R}$ in the moduli space}
	\end{figure}
	The wall-crossing captured, in the M-theory limit, by the Appell-Lerch sum \eqref{eq:psi-P} divides the moduli space into chambers separated by parallel marginal stability walls located at $\Sigma_1 = \alpha \in \mathbb{Z}$. The attractor contour \eqref{eq:DMZ-contour} then corresponds to picking a particular chamber, which we denote by $\mathcal{R}$. As mentioned below~\eqref{eq:DMZ-contour},~$\ell$ can be restricted to $0\leq \ell < 2m$, so it follows that $\mathcal{R}$ is the chamber between the walls located at $\alpha = 0$ and $\alpha = -1$. Thus,
	\begin{equation}
	\label{eq:R-region}
	\mathcal{R}\; : \;  -\text{Im}(\tau) < \text{Im}(z) \leq 0 \, .  
	\end{equation}
	
	Now, in the chamber~$\mathcal{R}$, the Fourier coefficients of~$\psi_m^\mathrm{P}$ in the range~$0 \leq \ell < 2m$ vanish because the coefficients of the Appell-Lerch sum vanish in this chamber, as can easily be checked. Therefore, the Fourier coefficients of the meromorphic Jacobi forms~$\psi_m$ in the chamber~$\mathcal{R}$ are equal to the coefficients of the finite part~$\psi_m^\mathrm{F}$. For~$\Delta > 0$, these coefficients correspond precisely to the single-centered black hole degeneracies, as given by \eqref{eq:SC-degen}. However,~$\psi_m^\mathrm{F}$ also has coefficients with~$\Delta < 0$ which live in the chamber~$\mathcal{R}$. Thus, we arrive at the following physical interpretation of the Fourier coefficients~$c_m^\mathrm{F}$: they count the indexed number of \qbps dyonic states in the chamber~$\mathcal{R}$. This is true for~$\Delta > 0$ (which are single-centered black holes) as well as, importantly, for~$\Delta < 0$ (which correspond to multi-centered black holes).
	
	Because of the Rademacher formula~\eqref{eq:app-mixed-mock-coeffs}, our task of finding an analytic formula for the single-centered degeneracies~\eqref{eq:SC-degen} is thus reduced to finding an analytic formula for the negative discriminant degeneracies, for which we will use the above physical picture. As explained in the introduction, the \qbps bound states counted by~\eqref{eq:polar-def} always decay upon crossing a wall of marginal stability. We therefore have to track them in the~$\Sigma$ moduli space and reconstruct them as a sum of their \hbps constituents. The decays corresponding to moving horizontally by varying~$\Sigma_1$ have been taken into account by~$\psi_m^\mathrm{P}$, so they will not contribute to~$c_m^\mathrm{F}(\Delta < 0)$. 
	As discussed above, the contribution from~$\psi_m^\mathrm{P}$ in the 
	region~$\mathcal{R}$ actually vanishes.\footnote{If we are however interested in another chamber of the moduli space where the Appell-Lerch sum does not vanish, it is important to remove the associated decays taking place when varying~$\Sigma_1$.} This is consistent with the analysis of~\cite{Sen2011}. 
	In addition, there can be other decays in the full moduli space, and to see them we need to go away from the M-theory limit.
	As shown in \cite{Sen2011}, the region $\mathcal{R}$ extends also vertically downwards back to the $\Sigma_2 = 0$ line. Therefore, away from the M-theory limit, the negative discriminant states contained in $\mathcal{R}$ can also decay further upon crossing circular walls (the precise shape of these walls is charge-dependent), which are shown in Figure~\ref{fig:full-Sigma}. We can now use the results of~\cite{Sen2011} to count how many negative discriminant states live in the region $\mathcal{R}$ and obtain the polar coefficients \eqref{eq:polar-def}. This will be reviewed in Section~\ref{sec:neg-disc}.
	
	\section{Localization of $ \displaystyle \mathcal{N} = 4 $ supergravity and black hole degeneracy }
	\label{sec:localization}
	
	Before presenting the derivation of the formula for the polar coefficients~$\widetilde{c}_m(n,\ell)$ defined in~\eqref{eq:polar-def}, 
	we review how the main idea originates from physical considerations.\footnote{This section can be skipped 
		without losing any of the logical steps, but it may provide some intuition towards the main result.}
	In~\cite{Murthy:2015zzy} two of the present authors were able to compute the asymptotic degeneracies of \qbps 
	single-centered BHs as a supergravity functional integral in the AdS$_2$ near-horizon geometry of the BHs, following the 
	ideas of~\cite{Sen:2008vm, Dabholkar:2010uh}.
	This computation relied on some approximations and assumptions that we spell out below, and as such did not yield 
	the exact answer matching the microscopic prediction. It did, however, lead to a result that could be interpreted as 
	an approximate relation between the polar coefficients of the counting function~$\psi_m^{\mathrm{F}}$ and the 
	Fourier coefficients of the Dedekind eta function, which was checked to be true to a good approximation.
	As we explain below, the main formula of the present paper~\eqref{eq:OSV-type} can be seen as 
	correcting the approximate result of~\cite{Murthy:2015zzy} to an exact formula.

	\subsection{The quantum entropy of \qbps single-centered black holes}

	In the introduction we mentioned the two pictures---macroscopic and microscopic---of BHs in string theory.
	While we mainly focus on the microscopic picture in the rest of the paper, the origins of our formula came 
	from a macroscopic intuition that we now review.  
	Using ideas of the~AdS$_2$/CFT$_1$ correspondence, a macroscopic supergravity description 
	for the degeneracies of microstates of supersymmetric BHs, called the quantum entropy formalism,  
	was put forward in~\cite{Sen:2008vm}.
	The near-horizon geometry of extremal black holes universally contains an AdS$_2$ factor, and the 
	proposal of~\cite{Sen:2008vm} is that the degeneracies of supersymmetric extremal black holes is a functional integral
	on this~AdS$_2$ space defined as
	\begin{equation}
	\label{eq:dmacro-def}
	d_{\rm macro}(Q,P) \= \Big\langle \exp\Bigl[\,q_I\int_{S^1} A^I\Bigr] \Big\rangle_{{\rm EAdS}_2}^{\rm finite} \, .
	\end{equation}
	Here the brackets indicate that one should compute the expectation value of the Wilson line around the 
	Euclidean time circle~$S^1$ for the~$U(1)$ gauge fields~$A^I$ under which the black hole is charged, $q_I$ 
	denotes the corresponding charges of the BH, 
	and the superscript ``finite'' indicates a particular infra-red regularization scheme to deal with the infinite volume 
	of the EAdS$_2$ factor in the near-horizon geometry (see \cite{Sen:2008vm} for details).
	
	To compute the path integral~\eqref{eq:dmacro-def} beyond the leading large-charge approximation, 
	powerful techniques of supersymmetric localization have been employed starting with the work of~\cite{Dabholkar:2010uh}. 
	Localization has been an invaluable tool in the study of partition functions in gauge theories and in many cases has 
	allowed us to reduce a complicated path integral to a much simpler \emph{finite} dimensional integral. A complete 
	review falls outside the scope of the present paper,\footnote{See \cite{Pestun:2016zxk} for an introduction and 
		reviews in the context of field theory.} 
	so we will simply give the final result obtained 
	in~\cite{Dabholkar:2011ec,Gupta:2012cy,Murthy:2013xpa,Murthy:2015yfa,Gupta:2015gga,deWit:2018dix,Jeon:2018kec} 
	for~\eqref{eq:dmacro-def} after localization and for the class of black holes we are interested in.
	For \hbps black hole solutions of an~$\mathcal{N}=2$ supergravity theory with holomorphic prepotential~$F$, 
	we have
	\begin{equation}
	\label{eq:dmacro-loc}
	d_{\rm macro}(Q,P) \= \int_{\mathcal{M}_{\cal Q}}\prod_{I=0}^{n_V}\mathrm{d}\phi^I\,\mu(\phi^I)\,\exp\Bigl[4\pi\,{\rm Im}[F(p^I,\phi^I)] - \pi\,q_I\phi^I\Bigr]\,\bigl(\chi_{\rm V}(p^I,\phi^I)\bigr)^{2 - \frac{1}{12}(n_V + 1)} \, .
	\end{equation}
	The definition of the various quantities entering \eqref{eq:dmacro-loc} are as follows. The integral is over 
	the manifold~$\mathcal{M}_{\cal Q}$, which is characterized by the bosonic field configurations that are 
	supersymmetric with respect to a specific supercharge~${\cal Q}$ preserved by the black hole solution. 
	This manifold is~$(n_V + 1)$-dimensional, where~$n_V$ is the number of abelian vector multiplets under 
	which the black hole is charged, and~$\phi^I$ denote the coordinates on~$\mathcal{M}_{\cal Q}$. 
	The integrand is completely specified by the prepotential~$F(p^I,\phi^I)$ of the theory, which is a 
	homogeneous holomorphic function of its arguments. The associated K\"{a}hler potential~$\chi_{\rm V}(p^I,\phi^I)$ 
	is built out of this prepotential. Finally, we have denoted by~$\mu(\phi^I)$ the measure on $\mathcal{M}_{\cal Q}$,
	which was not obtained from first principles in the above references, but constrained to be a function that contributes~$O(1)$ growth to the 
	entropy when all the charges are scaled to be large.
	
	To apply the formula~\eqref{eq:dmacro-loc} to the~\qbps single-centered black hole solution of the~$\mathcal{N}=4$ theory 
	discussed in Section \ref{sec:exact-entropy}, one consistently truncates the latter theory to an~$\mathcal{N}=2$ theory 
	with~$n_V = 23$ multiplets 
	and prepotential \cite{Shih:2005he}
	\begin{equation}
	\label{eq:prepot-N4}
	F(p^I,\phi^I) \= -\frac{X^1}{X^0}\,X^a\,C_{ab}\,X^b + \frac{1}{2\mathrm{i}\pi}\log\Bigl[\eta^{24}\Bigl(\frac{X^1}{X^0}\Bigr)\Bigr] \, , 
	\quad {\rm with} \quad X^I = \phi^I + \mathrm{i}\,p^I \, ,
	\end{equation}
	where~$C_{ab}$ is the intersection matrix on the middle homology of the internal K3 manifold and~$a,b=2, \ldots, 23$. 
	Using this data and assuming a certain measure on~$\mathcal{M}_{\cal Q}$,~\cite{Murthy:2015zzy,Gomes:2015xcf} 
	showed that the finite dimensional integral~\eqref{eq:dmacro-loc} could be put in the form of a sum of~$I$-Bessel functions 
	indicative of a Rademacher-type expansion for the macroscopic degeneracies of single-centered~\qbps black holes, 
	similar to the exact microscopic formula~\eqref{eq:app-mixed-mock-coeffs}. 
	The above assumption about the measure was essentially a statement of consistency with a certain way of expanding 
	the microscopic formula~\eqref{eq:DVV}, as we now explain.

	\subsection{The measure and Rational Quadratic Divisors}

	The~$z$-integral in the microscopic degeneracy formula~\eqref{eq:DVV} can be performed by calculating 
	residues at the so-called Rational Quadratic Divisors (RQDs) of the Igusa cusp form~$\Phi_{10}$. 
	The leading contribution to the integral comes from the RQD located at~$z=0$~\cite{Dijkgraaf1997d}. 
	Near~$z=0$, the Igusa cusp form behaves as
	\begin{equation}
	\Phi_{10}\tsz \= 4\pi^2\,(2z - \tau - \sigma)^{10}\,z^2\,\eta(\tau)^{24} \,\eta(\sigma)^{24} + \mathcal{O}(z^4) \, .
	\end{equation}
	The remaining integral in~$\tau$ and~$\sigma$ can then be expressed as~\cite{David2006a}
	\begin{equation}
	d_{\frac14}(Q,P) \; \simeq \; (-1)^{\ell + 1}\,\int_{\mathcal{C}_2}\,\frac{\mathrm{d}^2\tau}{\tau_2{}^2}\,e^{-\mathcal{F}(\tau_1,\tau_2)} \, ,
	\end{equation}
	where $\simeq$ indicates that there are subleading contributions coming from other RQDs 
	(additional poles in~$\Phi_{10}^{-1}$), and~$\tau = \tau_1 + \mathrm{i}\tau_2$. 
	The function~$\mathcal{F}(\tau_1,\tau_2)$ is given by
	\begin{align}
	\mathcal{F}(\tau_1,\tau_2) =&\; -\frac{\pi}{\tau_2}\left(n-\ell\tau_{1}+m(\tau_1^2 + \tau_2^2)\right) 
	+ \ln \eta^{24}(\tau_1 + \mathrm{i}\tau_2) + \ln \eta^{24}(-\tau_1+\mathrm{i}\tau_2) + 12\ln(2\tau_2) \nonumber \\
	&\; - \ln\left[\frac{1}{4\pi}\left\{26+\frac{2\pi}{\tau_2}(n-\ell \tau_{1}+m(\tau_1^2 + \tau_2^2))\right\}\right] \, ,
	\end{align}
	and the contour of integration~$\mathcal{C}_2$ is required to pass through the saddle-point 
	of~$\mathcal{F}(\tau_1,\tau_2)$. 
	This way of manipulating the microscopic degeneracy formula corresponds, in physics,
	to calculating the degeneracies of BHs whose magnetic as well as electric charges grow
	at the same rate. In contrast, the expansion studied in Section~\ref{sec:exact-entropy} 
	following~\cite{Dabholkar2012} corresponds to fixing the magnetic charges and letting the electric 
	charges grow (see Equation~\eqref{eq:Fourier-Jacobi}).

	Adding a total derivative term and comparing to the macroscopic 
	localized integral~\eqref{eq:dmacro-loc}, the authors of \cite{Murthy:2015zzy, Gomes:2015xcf} 
	concluded that the measure factor, corresponding to the leading RQD of $\Phi_{10}$ located at~$z=0$, 
	should take the form
	\begin{equation}
	\label{eq:loc-measure}
	\mu(\phi^I) \= m + E_2\Bigl(\frac{\phi^1 + \mathrm{i}p^1}{\phi^0}\Bigr) + E_2\Bigl(-\frac{\phi^1 - \mathrm{i}p^1}{\phi^0}\Bigr) \, ,
	\end{equation}
	where~$E_2$ is the Eisenstein series of weight 2, related to the Dedekind eta function as
	\begin{equation}
	E_2(\tau) \= \frac{1}{2\pi \i}\,\frac{\mathrm{d}}{\mathrm{d}\tau}\log\eta(\tau)^{24} \, .
	\end{equation}
	
	Using the measure~\eqref{eq:loc-measure} in the integral~\eqref{eq:dmacro-loc} leads to an infinite sum of~$I$-Bessel functions 
	coming from integrating term-by-term the series expansions of the prepotential and the measure~\cite{Murthy:2015zzy}. 
	It was noticed in that paper that this infinite sum begins with terms that become smaller up to a point, 
	but that the integrals start diverging after a while. 
	This behavior is characteristic of an asymptotic series, which prompted~\cite{Murthy:2015zzy} 
	to truncate the sum after a finite number of terms. This was achieved using a contour prescription 
	given in~\cite{Gomes:2015xcf}. 
	The end result, after evaluating the integrals, was then
	\begin{align}
	\begin{split}
	\label{eq:dmacro-approx}
	d_{\rm macro}(Q,P) \; \simeq \; &\; 2\pi\sum_{\substack{0\,\leq\,\tilde{\ell}\,\leq\,m \\[.5mm] \widetilde{\Delta} < 0}}\,(\tilde{\ell}
	-2\,\tilde{n})\,d(m+\tilde{n}-\tilde{\ell})\,d(\tilde{n})\,\frac{\cos{\bigl(\pi(m-\tilde{\ell})\ell/m\bigr)}}{\sqrt{m}}\; \times \\
	&\qquad\qquad\quad \times \,\Bigl(\frac{|\widetilde{\Delta}|}{\Delta}\Bigr)^{23/4}\,
	I_{23/2}\Bigl(\frac{\pi}{m}\sqrt{|\widetilde{\Delta}|\Delta}\Bigr) \, ,
	\end{split}
	\end{align}
	where~$d(n)$ is the~$n^{\mathrm{th}}$ Fourier coefficient of the Dedekind eta function as given 
	in~\eqref{eq:Fourier-eta},~$\Delta$ is the usual discriminant~$4mn - \ell^2$, and the~$I$-Bessel function 
	is defined in~\eqref{eq:app-Bessel}. Comparing the above macroscopic result to the Fourier 
	coefficients~\eqref{eq:app-Rad} and~\eqref{eq:app-Klooster}, we see that~$d_{\mathrm{macro}}$ is the 
	first~($k=1$) term in the Rademacher expansion for a Jacobi form of weight~$-10$ upon making the identification
	\begin{equation}
	\label{eq:sugra-wrong-polar}
	c_m(\tilde{n},\tilde{\ell}) \= (\tilde{\ell}-2\,\tilde{n})\,d(m+\tilde{n}-\tilde{\ell})\,d(\tilde{n}) \quad {\rm for} 
	\quad \widetilde{\Delta} = 4m\tilde{n}-\tilde{\ell}{\,}^2 < 0 \, .
	\end{equation}
	This proposal, motivated by the exact computation of a supergravity path integral, already offered a very good 
	numerical agreement with the microscopic data and hinted at an intricate relationship between the Fourier 
	coefficients of a simple modular form (the Dedekind eta function) and those of the more complicated 
	mock Jacobi forms~$\psi_m^{\rm F}$. However, detailed numerical investigations also showed that the 
	formula~\eqref{eq:sugra-wrong-polar} cannot be the complete answer, as evidenced by the small 
	discrepancies between the left- and right-hand sides highlighted in the tables of \cite{Murthy:2015zzy}. 
	
	Since the derivation reviewed above relied on the approximations related to the asymptotic nature of the series, 
	it was already clear that~\eqref{eq:dmacro-approx} is just the beginning of the complete formula. In the 
	rest of the paper, we obtain the correct and exact relationship between the polar terms of~$\psi_m^{\rm F}$ 
	and the Fourier coefficients of~$\eta(\tau)^{-24}$ based on a precise analysis of negative discriminant states 
	in~$\mathcal{N}=4$ string theory, as summarized in our main formula~\eqref{eq:OSV-type}. 
	Therefore, the results of the present paper can be interpreted as giving us the precise way to take 
	into account the subleading RQDs that correct the measure~\eqref{eq:loc-measure}, and truncate 
	the infinite sum of Bessel functions arising from~\eqref{eq:dmacro-loc}.\footnote{The idea of 
		summing up the contributions from all the RQDs of~$\Phi_{10}$ to obtain the exact degeneracy of 
		the dyonic BH was put forward in~\cite{Murthy:2009dq}, but the lack of good technology at the time 
		also led to divergent sums.} 
	\footnote{The conclusions of this paper do not mean that there is not another way to obtain the 
		exact single-centered BH degeneracies after resummation of the residues of the RQDs in a manner 
		consistent with the~$Sp(2,\IZ)$ symmetry of~$\Phi_{10}$. 
		We note, however, that such an enterprise would involve some notion of a ``mock" Siegel form
		that has not been made precise in the mathematical literature to the best of our knowledge.}

	\section{Negative discriminant states and walls of marginal stability}
	\label{sec:neg-disc}

	Now we turn back to our main goal, which is to obtain 
	an analytic formula for the degeneracies of negative discriminant \qbps states $\widetilde{c}_m(n,\ell)$ 
	as defined in \eqref{eq:polar-def} in terms of the coefficients of the Dedekind eta function. 
	In this section we set up the problem in a convenient form after reviewing some facts about negative 
	discriminant states and walls of marginal stability associated to negative discriminant state decays. 
	As reviewed in Section \ref{sec:exact-entropy}, we are interested in counting
	the number of negative discriminant states in the region~$\mathcal{R}$,
	which correspond to bound states of two~\hbps states. 
	Following~\cite{Sen2011}, a convenient way to do so is to count how bound states appear or 
	decay as we move around the moduli space parameterized by~$\Sigma$ discussed around~\eqref{eq:defZSigma}.
	When we cross a wall of marginal stability, bound states appear or decay and contribute to the degeneracies 
	of all negative discriminant states contained in~$\psi^\mathrm{F}_m$. We now review the structure of these 
	walls of marginal stability, referring the reader to~\cite{Sen2007,Sen2011} for more details.
	
	\subsection{Walls of marginal stability: Notation}
	\label{sec:walls}
	
	\begin{enumerate}
		\item In the $\Sigma$ upper half-plane, the walls of marginal stability are of two types \cite{Sen2007}: 
		
		\begin{enumerate}
			\item Semi-circles connecting two rational points $p/r$ and $q/s$ such that $ps-qr=1$. 
			We denote these walls as \textit{S-walls}.
			
			\item Straight lines connecting $\mathrm{i}\infty$ to an integer. These can be thought of as special cases of  
			the above expressions when~$r=0$ and~$p=s=1$, or when~$s=0$ and~$q = -r = 1$. We denote these walls as \textit{T-walls}.
		\end{enumerate}
		
		\begin{figure}[h]
			\centering
			\includegraphics[width=.7\textwidth]{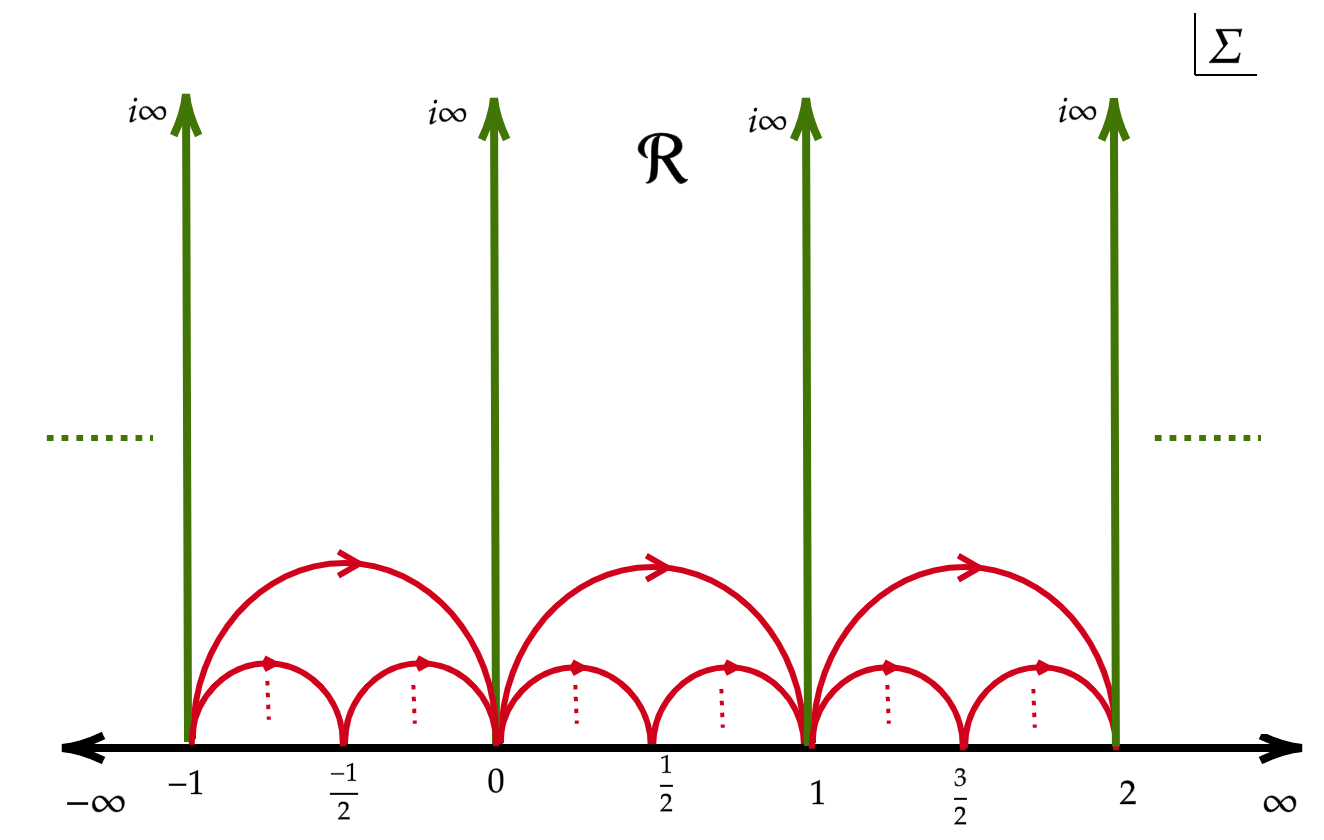}
			\caption{ Structure of T-walls (green) and S-walls (red) in the upper half-plane.}
			\label{fig:twalls}
		\end{figure}
		To any T- or S-wall we associate the following matrix, 
		\begin{align}
		\label{eq:wall}
		\gamma \= \wmat{p}{q}{r}{s} \in PSL(2,\mathbb{Z}) \, .
		\end{align}
		
		\item Given an initial charge vector $ \displaystyle (n,\ell,m) =  (Q^2/2,\,Q \cdot P,\,P^2/2)$, there is an associated charge 
		breakdown at a wall $ \displaystyle \gamma  $ of the form \eqref{eq:wall}, given by
		\begin{align}
		\label{eq:chargebreak}
		\cvec{Q}{P} \longrightarrow \cvec{p (sQ - qP)}{r (sQ -qP)} + \cvec{q(-rQ + pP)}{s(-rQ + pP)} \, ,
		\end{align}
		and which corresponds to a \qbps BH decaying into two \hbps centers. The charges of the two centers are 
		given by~$\gamma \cdot \cvec{Q_\gamma}{0}$ and~$\gamma \cdot \cvec{0}{P_\gamma}$, with 
		\begin{equation}
		\cvec{Q_\gamma}{P_\gamma} = \gamma^{-1} \cdot \cvec{Q}{P} \, , 
		\end{equation}
		which shows that after the breakdown one center is purely electric while the other is purely magnetic in the new frame. 
		We define $(n_\gamma, \ell_\gamma, m_\gamma) = (Q_\gamma^2/2,\,Q_\gamma\cdot P_\gamma,\,P_\gamma^2/2)$, 
		which are given explicitly by 
		\begin{align}
		\label{eq:transformedcharges}
		\begin{split}
		\nw &\=  s^2 n +  q^2 m - s q \ell \, , \\
		\lw &\= - 2s r n - 2 p q m + \ell( ps + qr) \, , \\
		\mw &\= r^2 n + p^2 m - p r \ell \, .
		\end{split}
		\end{align}
		
		\item The set of matrices that characterize the walls in the $\Sigma$ upper half-plane can be divided into subsets that satisfy the following properties:
		\begin{align}
		\label{eq:wallcategory}
		\begin{split}
		&\Gamma_S^+ \defeq \left \lbrace \gamma = \wmat{p}{q}{r}{s}\in PSL(2,\mathbb{Z}) \ \Big \vert \  r > 0, \, s > 0   \right \rbrace \, , \\
		&\Gamma_S^- \defeq \left \lbrace \gamma = \wmat{p}{q}{r}{s}\in PSL(2,\mathbb{Z}) \ \Big \vert \ r > 0, \, s < 0   \right \rbrace \, , \\
		& \Gamma_T \defeq \left \lbrace \gamma = \wmat{p}{q}{r}{s}\in PSL(2,\mathbb{Z}) \ \Big \vert \  r s = 0 \right \rbrace \, .
		\end{split}
		\end{align}
		Because the above matrices have unit determinant, the walls in $\Gamma_S^+$ have $p/r > q/s$, and the walls in $\Gamma_S^-$ have $p/r < q/s$.
		We denote by~$\Gamma_S = \Gamma_S^-  \cup \Gamma_S^+$ the full set of S-walls. Notice that~$PSL(2,\mathbb{Z}) = \Gamma_S \cup \Gamma_T$.
		
		\item We define the \emph{orientation} of a wall $\gamma$ to be $ \displaystyle q/s \rightarrow p/r $. With respect to this orientation,
		a bound state of \hbps states exists in the chamber to the right of the wall if $ \displaystyle \ell_\gamma < 0 $, and in the chamber to the 
		left of the wall if $\ell_\gamma > 0$ \cite{Sen2011}.
		
		\item The attractor region $ \displaystyle \mathcal{R} $ in \eqref{eq:R-region} is the region of the $\Sigma$ upper 
		half-plane bounded by 
		the T-walls $0 \rightarrow \mathrm{i}\infty$, $1\rightarrow \mathrm{i}\infty$ and the semi-circular S-wall $0 \rightarrow 1$. 
		(Note that, because of the negative sign in the contour~\eqref{eq:moduli-contour}, the 
		attractor region~$-\text{Im}(\tau)<\text{Im}(z)\le 0$ as in~\eqref{eq:R-region} maps to~$0\le\text{Re}(\Sigma)<1$.) 
		We will be 
		interested in the degeneracies of negative discriminant states in this region, as reviewed in Section~\ref{sec:exact-entropy}.
	\end{enumerate}
	
	From Point 4 combined with~\eqref{eq:transformedcharges}, it is clear that none of the T-walls contribute in the region~$\mathcal{R}$. For example, when~$r=0$,~$\gamma = \begin{psmallmatrix} 1 & q \\[.2mm] 0 & 1 \end{psmallmatrix}$ which means~$(n_\gamma,\ell_\gamma,m_\gamma) = (n+q^2 m - q\ell,\ell - 2qm,m)$. Recalling that we can restrict ourselves to~$0 \leq \ell <2m$, this shows that when $q \geq 0$, the above T-walls contribute to the right of the region~$\mathcal{R}$. This is consistent with our analysis of Section~\ref{sec:exact-entropy}, where we showed that~$\psi_m^{\mathrm{P}}$, which captures all the T-walls, actually has vanishing Fourier coefficients in the region~$\mathcal{R}$.
	
	For the S-walls, there exists a map between the sets $\Gamma_S^+$ and $\Gamma_S^-$, given by the right multiplication of an element of $\Gamma_S^+$ by the matrix 
	\begin{equation}
	\label{eq:Strafo}
	\tilde S \= \wmat{0}{-1}{1}{0} \, .
	\end{equation} 
	This map reverses the orientation of the wall and flips the sign of $\ell_\gamma$. 
	Furthermore,~$\tilde S$ squares to~$-I$, which means that it is an involution in~$PSL(2,\IZ)$.
	Therefore we can focus only on elements of $\Gamma_S^+$ when discussing the details 
	of negative discriminant states breakdowns across walls of marginal stability.

	\subsection{Towards a formula for black hole degeneracies}
	\label{sec:towards}

	Upon crossing a wall of marginal stability, the index jumps by an amount controlled by the generating function of each of the associated \hbps centers. The latter is given by the inverse of $\eta(\tau)^{24}$, whose Fourier coefficients are given by the partition function
	into~$24$ colors~$p_{24}(n)$ (cf.~Equation~\eqref{eq:Fourier-eta}). 
	Summing up all possible decays across the S-walls leads to the following counting 
	formula for negative discriminant states 
	living in the region $\mathcal{R}$:
	\begin{equation}
	\label{eq:deg1}
	\frac12 \sum_{\substack{\gamma\in\Gamma_S }}\,
	(-1)^{\ell_\gamma+1}\,\theta(\gamma,\mathcal{R})\,\vert\ell_\gamma\vert\,d(m_\gamma)\,d(n_\gamma) \, ,
	\end{equation}
	where
	the function $\theta(\gamma,\mathcal{R})$ is a step-function giving 1 if the bound state exists on the same side 
	of the wall~$\gamma$ bounding $\mathcal{R}$ and 0 otherwise. Formally, it is defined as follows
	\begin{align}
	\label{eq:thetastep}
	\theta(\gamma, \mathcal{R}) \= \Bigg \vert  \frac{\mathcal{O}(\gamma, \mathcal{R}) + \sgn(\ell_\gamma)}{2}\Bigg \vert \, , 
	\qquad \mathcal{O}(\gamma, \mathcal{R}) \= 
	\begin{cases}
	+1, \ \ \gamma \in \Gamma_S^+\\
	-1, \ \ \gamma \in \Gamma_S^- 
	\end{cases} \, .
	\end{align}
	On one hand this sum can be written in a more covariant manner by extending it to a sum over all matrices in~$PSL(2,\IZ)$,
	\begin{equation}
	\label{eq:deg1T}
	\frac12 \sum_{\substack{\gamma\in PSL(2,\IZ) }}\,
	(-1)^{\ell_\gamma+1}\,\theta(\gamma,\mathcal{R})\,\vert\ell_\gamma\vert\,d(m_\gamma)\,d(n_\gamma) \, ,
	\end{equation}
	by extending the~$\theta$ function to all of~$PSL(2,\IZ)$ via 
	\be \label{eq:thetastepR}
	\theta(\gamma, \mathcal{R})\=0 \,, \qquad \gamma \in \Gamma_T \,,
	\ee
	because the T-walls do not contribute in the region~$\mathcal{R}$ as we saw above.
	On the other hand, the sum~\eqref{eq:thetastep} can also be written as a sum over a smaller set as follows. Note that 
	the summand in equation \eqref{eq:deg1} is invariant under a transformation by the matrix~$\tilde S$ 
	given in equation \eqref{eq:Strafo} 
	because~$n_{\gamma\tilde S} = m_\gamma$, $m_{\gamma\tilde S } = n_\gamma$, 
	$\ell_{\gamma\tilde S} = -\ell_\gamma$ and~$\tilde S$ 
	exchanges~$\Gamma_S^+$ and~$\Gamma_S^-$. 
	This means that the contributions from the sum over~$\gamma \in \Gamma_S^+$ and~$\gamma \in \Gamma_S^-$ 
	are equal and we can sum over $\Gamma_S^+$ only. So, we can alternatively write~\eqref{eq:deg1} as
	\begin{equation}
	\label{eq:deg}
	\sum_{\substack{\gamma\in\Gamma_S^+ }}\,(-1)^{\ell_\gamma+1}\,
	\bigg \vert  \frac{1 + \sgn(\ell_\gamma)}{2}\bigg \vert\; \vert\ell_\gamma\vert\;d(m_\gamma)\;d(n_\gamma) \, .
	\end{equation}

	\subsection{A subtlety from bound state metamorphosis}

	While accounting for all the negative discriminant states in the region~$\mathcal{R}$, there is a further subtlety that needs to be taken 
	into account due to a phenomenon 
	known as bound state metamorphosis (BSM)~\cite{Andriyash:2010yf,Chowdhury:2012jq, Sen2011}. BSM stems from the fact that when one or 
	both \hbps centers making up a \qbps bound state carry the lowest possible charge invariant (that is, 
	when~$n_\gamma = -1$,~$m_\gamma = -1$ or~$n_\gamma = m_\gamma = -1$ for a given wall~$\gamma$), two or more bound states 
	must be identified following a precise set of rules to avoid overcounting in the index~\eqref{eq:deg}. 
	Thus we can write the set of all contributing walls as the quotient 
	\be\label{defgset}
	\gbsm \= PSL(2,\IZ)/\text{BSM} \,,
	\ee
	and write the polar degeneracies~\eqref{eq:polar-def} as
	\begin{equation}
	\label{eq:deg1again}
	\widetilde{c}_m(n,\ell)  \= \frac12 \sum_{\substack{\gamma\in \gbsm }}\,
	(-1)^{\ell_\gamma+1}\,\Theta(\gamma)\,\vert\ell_\gamma\vert\,d(m_\gamma)\,d(n_\gamma) \,.
	\end{equation}
	Here we have to introduce a new function~$\Theta(\gamma)$ which generalizes the 
	function~$\theta(\gamma,\mathcal{R})$ defined above to take into account the phenomenon of BSM so that it 
	is defined on the coset~$\gbsm$. We will be in a position to give a proper definition after a discussion of BSM
	in the following sections.
	We can also present this formula as a sum over the set~$\Gamma_S$ or~$\Gamma_S^+$ 
	modulo the identifications due to BSM for the reasons discussed above (T-walls do not contribute in~$\mathcal{R}$, 
	and~$\tilde S$ gives a map between~$\Gamma^-_S$ and~$\Gamma_S^+$): 
	\begin{equation}
	\label{eq:degagain}
	\widetilde{c}_m(n,\ell)  \= \sum_{\substack{\gamma\in\Gamma_S^+/\text{BSM} }}\,(-1)^{\ell_\gamma+1}\,
	\Theta(\gamma) \, \vert\ell_\gamma\vert\;d(m_\gamma)\;d(n_\gamma) \, .
	\end{equation}
	
	Given that the left-hand side of this formula is finite, it is reasonable to expect that given a set of 
	initial charges $ \displaystyle (n,\ell,m) $, only a finite subset of walls of marginal stability 
	gives a non-zero contribution to the above sums. 
	This expectation turns out to be correct and we can write the final formula as a sum over the~\emph{finite} set~$\gset$ 
	\begin{equation}
	\label{eq:degfinite}
	\widetilde{c}_m(n,\ell)  \= \sum_{\substack{\gamma\in\gset}}\,(-1)^{\ell_\gamma+1}\,
	\vert\ell_\gamma\vert\;d(m_\gamma)\;d(n_\gamma) \, .
	\end{equation}
	Our goal in the following sections is to now fully characterize the 
	subset $\gset$ and show that it contains a finite number of elements for a given charge vector~$(n,\ell,m)$.

	It will be convenient to split the characterization of the set $\gset$ depending on whether BSM does not or does occur. 
	This will be the subject of sections \ref{sec:negwomet} and \ref{sec:metamorphosis}, respectively.
	Before initiating the study of the finiteness of~$\gset$, we recall from the discussion below~\eqref{eq:DMZ-contour} 
	that we can restrict the charge vector to be such that $0 \leq \ell < 2m$. 
	In addition, $\psi_m^\mathrm{F}$ has even weight so there is a reflection symmetry $\ell \rightarrow - \ell$ which allows us to restrict ourselves to the case\footnote{Note that even though $\psi_m^\mathrm{F}$ is not modular but only mock modular, both its completion $\widehat{\psi}_m^\mathrm{F}$ and its shadow, and therefore $\psi_m^\mathrm{F}$ itself, enjoy this $\ell \rightarrow -\ell$ symmetry \cite{Dabholkar2012}.} 
	$\ell \in \{0,\ldots,m\}$. 
	The index $m$ runs from $-1$ to $+\infty$ in the expansion \eqref{eq:Fourier-Jacobi}. 
	For~$m=-1$ and~$m=0$, there is no macroscopic BH as explained in Section \ref{sec:exact-entropy}, and therefore 
	we only study~$m>0$ in the following. 
	Thus our goal is to study the set~$\gset$ of walls of marginal stability 
	for a charge vector~$(n,\ell,m)$ that satisfies~$m>0$,~$n\geq -1$ and~$0\leq \ell\leq m$.

	\section{Negative discriminant states without metamorphosis}
	\label{sec:negwomet}
	
	In this section, we begin to characterize~$\gset$. 
	For the time being we ignore the phenomenon of BSM (which will be the subject of the next section), 
	and show that the contribution to~$\gset$ in this case is finite.  
	Accordingly, in order to identify the walls that contribute to the polar degeneracies~$\widetilde{c}_m(n,\ell)$, 
	we study the system of 
	inequalities~$m_\gamma \geq 0$ and~$n_\gamma \geq 0$ defined in~\eqref{eq:transformedcharges} 
	for a given charge vector~$(n,\ell,m)$ 
	such that~$ \Delta = 4mn - \ell^2 < 0$ and~$0 \leq \ell \leq m$.
	As explained above, we focus on walls in~$\Gamma_S^+ \subset PSL(2,\IZ)$, 
	which allows us to choose~$r,s>0$ in the following. 
	The condition~$m_\gamma \geq 0$ then amounts to
	\begin{equation}
	m_\gamma = m\,\Bigl(\frac{p}{r}\Bigr)^2 - \ell\,\Bigl(\frac{p}{r}\Bigr) + n \geq 0 \, .
	\end{equation}
	The first equality defines a parabola in the $(p/r,y=m_\gamma)-$plane and the condition $m>0$ means that the inequality has two branches:
	\begin{equation}
	\frac{p}{r} \geq \frac{\ell + \sqrt{|\Delta|}}{2m} \quad \text{or} \quad \frac{p}{r} \leq \frac{\ell - \sqrt{|\Delta|}}{2m} \, .
	\end{equation}
	We will call these positive and negative ``runaway branches'' since $p/r$ is unbounded from above or from below, respectively. The condition $n_\gamma \geq 0$ amounts to
	\begin{equation}
	\label{eq:ng-pos}
	n_\gamma = m\,\Bigl(\frac{q}{s}\Bigr)^2 - \ell\,\Bigl(\frac{q}{s}\Bigr) + n \geq 0 \, .
	\end{equation}
	Moreover, using that the determinant of $\gamma$ must be equal to one, we have
	\begin{equation}
	\label{eq:unit-det}
	\frac{q}{s} = \frac{p}{r} - \frac{1}{rs} \, ,
	\end{equation}
	and so the first equality in~\eqref{eq:ng-pos} can also be seen as a parabola in the~$(p/r,y=n_\gamma)-$plane, shifted by~$1/(rs)$ compared to the first parabola. The condition $n_\gamma \geq 0$ also has a positive and negative runaway branch,
	\begin{equation}
	\frac{p}{r} \geq \frac{\ell + \sqrt{|\Delta|}}{2m} + \frac{1}{rs} \quad \text{or} \quad \frac{p}{r} \leq \frac{\ell - \sqrt{|\Delta|}}{2m} + \frac{1}{rs} \, .
	\end{equation}
	
	Recall from Section~\ref{sec:neg-disc} that we focus on $\Gamma_S^+$ walls which corresponds to $rs > 0$. We split the argument in two cases. Considering 
	\begin{equation}
	\frac{\ell - \sqrt{|\Delta|}}{2m} + \frac{1}{rs} < \frac{\ell + \sqrt{|\Delta|}}{2m} \, ,
	\end{equation}
	the smaller intercept with the~$p/r$ axis of the shifted~$n_\gamma$-parabola  is smaller than the larger intercept with the~$p/r$ axis of the~$m_\gamma$-parabola. This situation is illustrated in Figure~\ref{fig:runaway}.
	\begin{figure}[h]
		\centering
		\begin{subfigure}[b]{0.8\textwidth}
			\includegraphics[width=\textwidth]{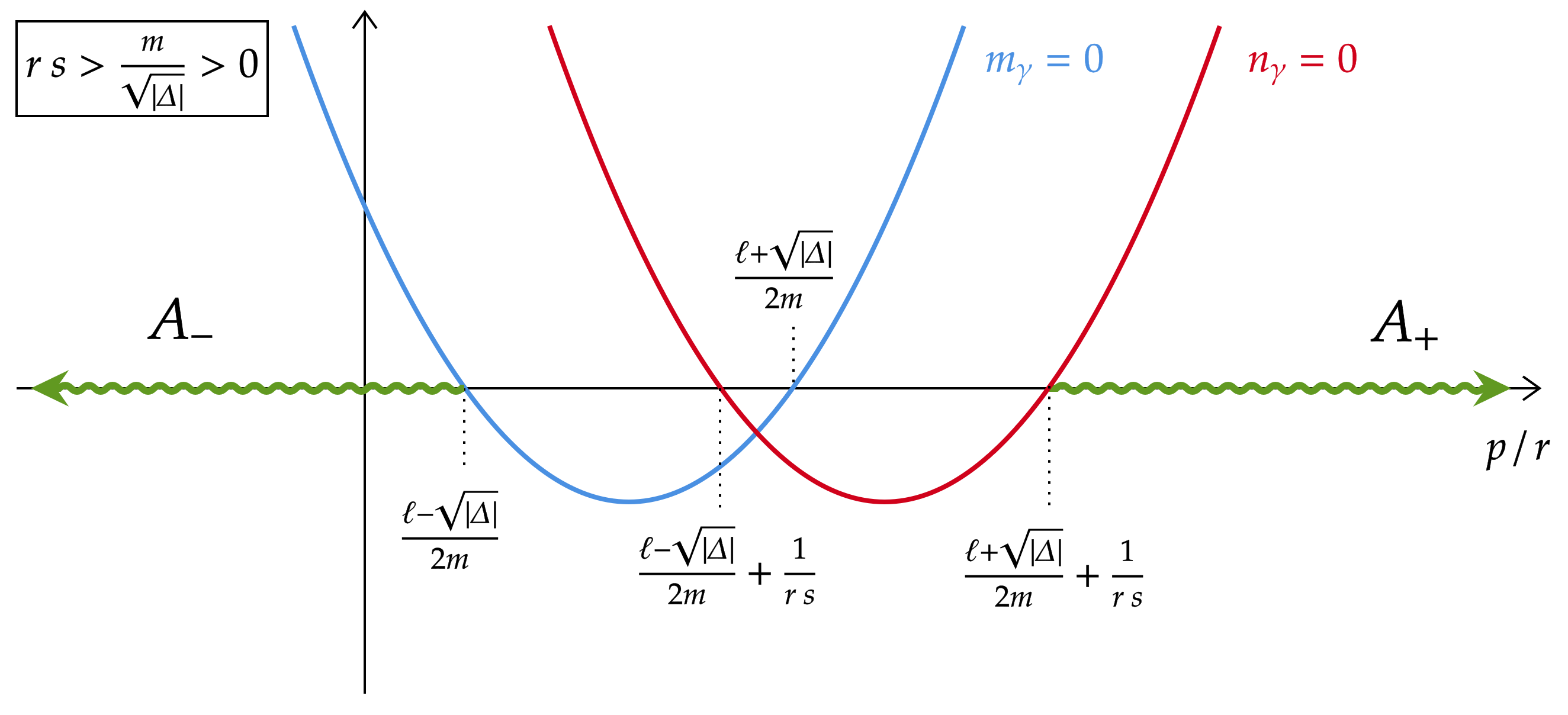}
			\caption{The two runaway branches $A_\pm$ in the case where $ r s > \frac{m}{\sqrt{\Delta}} > 0 $ }
			\label{fig:runaway}
		\end{subfigure}
		\\\vspace{5mm}
		\begin{subfigure}[b]{0.8\textwidth}
			\includegraphics[width=\textwidth]{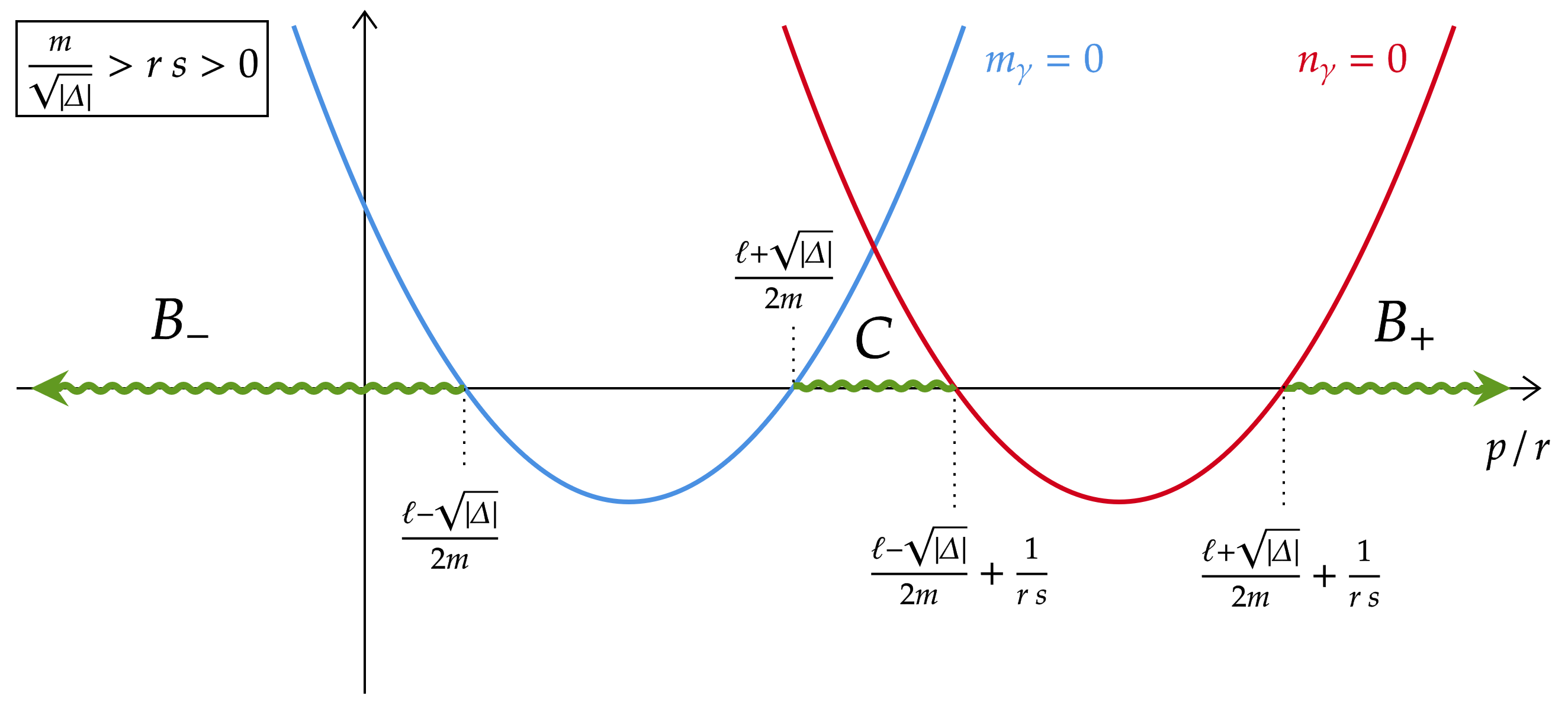}
			\caption{The runaway branches $B_\pm$ and the bounded branch $C$ when $ \frac{m}{\sqrt{\Delta}} > r s > 0 $ }
			\label{fig:bounded}
		\end{subfigure}
		\caption{The regions where $m_\gamma \geq 0$ and $n_\gamma \geq 0$ for $r s > 0$, denoted in green}
	\end{figure}
	In this case, requiring both inequalities~$m_\gamma \geq 0$ and~$n_\gamma \geq 0$ implies that
	\begin{equation}
	\label{eq:Ap}
	\frac{p}{r} \geq \frac{\ell + \sqrt{|\Delta|}}{2m} + \frac{1}{rs} \quad \text{and} \quad r s > \frac{m}{\sqrt{|\Delta|}} > 0 \, ,
	\end{equation}
	on the positive runaway branch which we denote $A_+$, or
	\begin{equation}
	\label{eq:Am}
	\frac{p}{r} \leq \frac{\ell - \sqrt{|\Delta|}}{2m} \quad \text{and} \quad r s > \frac{m}{\sqrt{|\Delta|}} > 0 \, ,
	\end{equation}
	on the negative runaway branch which we denote $A_-$. 
	For a given set of $(n,\ell,m)$, the conditions~\eqref{eq:Ap} have solutions over the integers for $(p,r,s)$. However, if we supplement this system with the condition that $\ell_\gamma > 0$ (since this is a $\Gamma_S^+$ wall this condition is necessary to have a non-zero contribution owing to the~$\Theta$ function in \eqref{eq:degagain}), then there are no 
	solutions for $(p,r,s)$. Indeed these conditions imply that
	\begin{equation}
	\begin{split}
	0 <&\; 2\,\frac{r}{s}\,n_\gamma + \ell_\gamma = \ell - 2\,\frac{q}{s}\,m =  \ell-2\lp \frac{p}{r}-\frac{1}{rs} \rp m \\ 
	\Longrightarrow \;\; \frac{p}{r} <&\; \frac{\ell}{2m} + \frac{1}{rs} \, , 
	\end{split}
	\end{equation}
	which is in contradiction with the first equality of~\eqref{eq:Ap}.
	Similarly, supplementing the branch~$A_-$ 
	by the condition~$\ell_\gamma > 0$, there are no solutions 
	for $(p,r,s)$. This can be seen by showing that the inequality $2 \frac{s}{r} m_\gamma + \ell_\gamma >0$ is in contradiction with the first inequality of~\eqref{eq:Am}. \\
	
	The other case to consider is when
	\begin{equation}
	\frac{\ell - \sqrt{|\Delta|}}{2m} + \frac{1}{rs} \geq \frac{\ell + \sqrt{|\Delta|}}{2m} \, .
	\end{equation}
	This means that the smaller intercept of the shifted parabola is larger than or equal to the larger intercept of the original one, as illustrated in Figure \ref{fig:bounded}. In this case, we still have the usual runaway branches which we call $B_+$ and $B_-$, but in addition a new branch of solutions for $p/r$ opens up, which we call the bounded branch $C$,
	\begin{equation}
	\label{eq:C-branch}
	\frac{\ell + \sqrt{|\Delta|}}{2m} \leq \frac{p}{r} \leq \frac{\ell - \sqrt{|\Delta|}}{2m} + \frac{1}{rs} \quad \text{and} \quad \frac{m}{\sqrt{|\Delta|}} \geq r s > 0 \, .
	\end{equation}
	Once again, adding the condition that $\ell_\gamma > 0$ suffices to show that there are no integer solutions $(p,r,s)$ to the system of inequalities characterizing the runaway branches $B_\pm$. The proof of this is identical to the one above for $A_\pm$. There are now however solutions for the $C$ branch. Observe that on this branch we also have a condition on the original charges: since~$r$ and~$s$ are integers,~$r s \geq 1$ and so the second condition in \eqref{eq:C-branch} demands that $m \geq \sqrt{|\Delta|}$. Including the inequality $\ell_\gamma > 0$, we obtain the following system for potential walls without BSM contributing to the polar coefficients: 
	\begin{equation}
	\label{eq:master}
	\begin{cases} 
	&\frac{\ell + \sqrt{|\Delta|}}{2m} \leq \frac{p}{r} \leq \frac{\ell - \sqrt{|\Delta|}}{2m} + \frac{1}{rs} \\[1mm]
	&\quad \frac{m}{\sqrt{|\Delta|}} \geq r s > 0 \\[1mm]
	&-2n r s - 2m p q + \ell (p s + q r) > 0
	\end{cases} \, .
	\end{equation}
	
	To analyze this system, we start by using the unit determinant condition to eliminate $q$, and then express everything in terms of the variables
	\begin{equation}
	P := p s \, , \quad R := r s \, .
	\end{equation}
	Then \eqref{eq:master} takes the form of a system of inequalities on two variables $(P,R)$:
	\begin{equation}
	\label{eq:master-PR}
	\begin{cases}
	&\frac{\ell + \sqrt{|\Delta|}}{2m} \leq \frac{P}{R} \leq \frac{\ell - \sqrt{|\Delta|}}{2m} + \frac{1}{R} \\[1mm]
	&\quad \frac{m}{\sqrt{|\Delta|}} \geq R > 0 \\[1mm]
	&- 2 n R  - 2 m \frac{P}{R}(P-1) + \ell (2P-1) > 0
	\end{cases} \, .
	\end{equation}
	We can analyze this system on a case-by-case basis, depending on the original charges $(n,\ell,m)$. Recall that we must only consider $4mn-\ell^2 < 0$, $m>0$ and $0 \leq \ell \leq m$.
	
	\begin{enumerate}
		\item
		\textbf{\underline{Case 1}}: $m > 0$, $n=-1$ \\
		In this case, there are no integer solutions to \eqref{eq:master-PR}. We see from equations \eqref{eq:transformedcharges} that $n=-1$ and 
		$n_\gamma \geq0$, $m_\gamma\geq 0$ require that $p,q\neq 0$. The left-hand-side of the first inequality in equation \eqref{eq:master-PR} implies 
		that $P/R>0$, which then implies that $p>0$ since $r,s>0$. The determinant condition $ps-qr=1$ then requires that $q>0$ as well. However, this is a 
		contradiction since the right-hand-side of the first inequality in equation \eqref{eq:master-PR} can be rewritten as
		\be
		\frac{q}{s} = \frac{P}{R}-\frac{1}{R} \leq \frac{\ell - \sqrt{|\Delta|}}{2m} < 0\,.
		\ee
		Here we used that for $m>0, n=-1$ we have $ \sqrt{|\Delta|}=\sqrt{\ell^2+4m}>\ell$.
		
		\item
		\textbf{\underline{Case 2}}: $m > 0$, $n=0$, and $\ell > 0$ \\
		In this case we find solutions given by 
		\begin{equation}
		P = 1 \, , \quad 0< R \leq \frac{m}{\ell} \, .
		\end{equation}
		Translating back to the original $(p,q,r,s)$ variables, this yields matrices of the form
		\begin{equation}
		\label{eq:walls-noBSM-1}
		\begin{pmatrix} 1 & 0 \\ r & 1 \end{pmatrix} \, , \quad \text{with} \quad 0 < r \leq \frac{m}{\ell} \, .
		\end{equation}
		Note that all entries in the above matrix are bounded from above by $m$. As we will see below, such $m$-dependent bounds always arise when considering the set of contributing walls $\gset$.
		
		\item
		\textbf{\underline{Case 3}}: $m > 0$, $n > 0$ and $\ell > 0$ \\
		This case is slightly more involved. First, notice from the left-hand-side of the first inequality in equation \eqref{eq:master-PR} that $P=ps>0$. Therefore we split the discussion depending on whether $P = 1$ or $P > 1$.
		\begin{enumerate}
			\item 
			\textbf{Case 3a: $P = 1$} \\
			In this case the inequalities \eqref{eq:master-PR} with $P=1$ impose
			\begin{equation}
			P = 1 \, , \quad 0 < R \leq \frac{\ell - \sqrt{|\Delta|}}{2n} \, .
			\end{equation}
			In the variables $(p,q,r,s)$, we therefore have a non-zero contribution to 
			the polar coefficients from matrices of the form:
			\begin{equation}
			\label{eq:walls-noBSM-2}
			\begin{pmatrix} 1 & 0 \\ r & 1 \end{pmatrix} \, , \quad \text{with} \quad 0 < r \leq \frac{\ell - \sqrt{|\Delta|}}{2n} \leq \frac{m-1}{2n} \, .
			\end{equation}
			Again, note that all entries in the above matrix are smaller than $m$. 
			
			\item 
			\textbf{Case 3b: $P > 1$} \\
			In this case, the inequalities \eqref{eq:master-PR} yield the following bounds on $P$ and $R$:
			\begin{equation}
			\label{eq:walls-noBSM-3}
			1 < P \leq \frac{1}{2}\Bigl(1+\frac{\ell}{\sqrt{|\Delta|}}\Bigr) \, , \qquad \frac{\ell + \sqrt{|\Delta|}}{2n}\,(P-1) \leq R \leq \frac{\ell - \sqrt{|\Delta|}}{2n}\,P \, .
			\end{equation}
			The corresponding walls do not start at $q/s = 0$ but instead are strictly inside the largest semi-circular S-wall 0 $\rightarrow$ 1. 
			
			Note that we can again get an $m$-dependent upper bound on $P$ by using the fact that $\ell \leq m$ and $\sqrt{|\Delta|}\geq 1$. We can also use this upper bound on $P$ in the upper bound on $R$ directly to obtain
			\be
			P \leq \frac{m+1}{2} \, , \qquad R \leq \frac{m}{ \sqrt{|\Delta|}} \, .
			\ee
			The above implies that all matrix entries of $\gamma \in PSL(2,\mathbb{Z})$ satisfying \eqref{eq:walls-noBSM-3} are bounded from above by $m$.
		\end{enumerate}
	\end{enumerate}
	
	This exhausts all possible cases for contributions without BSM: 
	the conditions~\eqref{eq:walls-noBSM-1},~\eqref{eq:walls-noBSM-2} and~\eqref{eq:walls-noBSM-3} with~$n\geq 0$ and~$\ell,m > 0$ fully characterize the set~$\gset$ in this case. By inspection, this set has a finite number of elements. Observe that all the walls giving a non-zero contribution to~\eqref{eq:degfinite} have 
	entries bounded from above by~$m$.

	\section{Effects of black hole bound state metamorphosis}
	\label{sec:metamorphosis}
	
	We now turn to identifying the walls of marginal stability for which BSM is relevant. We study the problem systematically in three different cases viz., magnetic-, electric-, and dyonic-metamorphosis, corresponding to~$m_\gamma = -1$,~$n_\gamma = -1$ and~$m_\gamma = n_\gamma = -1$, respectively. 
	
	\subsection{Magnetic metamorphosis case: $ \displaystyle m_\gamma = -1, n_\gamma \geq 0 $ }
	\label{sec:mag-BSM}
	
	As in Section~\ref{sec:negwomet}, we start with a charge vector~$(n,\ell,m)$ such that~$4mn-\ell^2 < 0$ and~$0 \leq \ell \leq m$. However, we are now interested in the walls~$\gamma$ for which~$ \displaystyle m_\gamma = -1$ and~$ \displaystyle n_\gamma\geq 0 $. 
	The idea behind magnetic metamorphosis is that when there is a wall~$ \displaystyle \gamma $ such that~$ \displaystyle m_\gamma = -1, $ then there is another wall~$ \displaystyle \t \gamma $ which has the exact same contribution to the index as~$ \displaystyle \gamma $. Furthermore, one needs to implement a precise prescription to properly account for such walls, and avoid overcounting in the polar degeneracies, as shown in~\cite{Sen2011,Chowdhury:2012jq}. 
	It will be useful to explicitly review some details of this phenomenon. To do so, we begin with the following definition:
	\begin{definition}
		\label{def:mag-meta-wall}
		For a wall $ \displaystyle \gamma $ with $ \displaystyle m_\gamma = -1, n_\gamma \geq 0 $, we define its \textit{metamorphic dual} as
		\begin{equation}
		\label{eq:mag-BSM-meta}
		\t \gamma \defeq \gamma \d \wmat{1}{-\ell_\gamma}{0}{1} \, .
		\end{equation}
	\end{definition}
	With this definition, the prescription found in~\cite{Sen2011,Chowdhury:2012jq} to properly account for magnetic-BSM can be summarized as follows (we refer the reader to the references just mentioned for a physical justification of this):
	\begin{quotation}
		\ndt A wall~$\gamma$ at which magnetic-BSM occurs contributes to the polar coefficients~$\widetilde{c}_m(n,\ell)$ 
		if and only if~$\gamma$ and its metamorphic 
		dual~$\tilde{\gamma}$ both contribute in~$\mathcal{R}$. If so, the contributions of~$\gamma$ and~$\tilde{\gamma}$ should be counted only once.
	\end{quotation}
	A necessary condition for the second part of the prescription is that both~$\gamma$ and~$\tilde{\gamma}$ have the same index contribution, as we now review. First, from Definition \ref{def:mag-meta-wall}, it is easy to see that a wall $ \displaystyle \gamma $ and its metamorphic dual $ \displaystyle \t \gamma $ have the same end point $p/r$,
	\begin{align}
	\label{eq:gamma-gammatilde}
	\gamma = \wmat{p}{q}{r}{s} \;\; \Longleftrightarrow \;\; \t \gamma = \wmat{p \ \ }{- p \ell_\gamma + q}{r\ \ }{-r \ell_\gamma + s} \, .
	\end{align} 
	With this, we prove the following statement: 
	\begin{proposition}
		\label{claim:mag-BSM}
		For a given set of charges~$(n,\ell,m)$, the wall~$ \displaystyle \t \gamma $ has the same index contribution as~$ \displaystyle \gamma $ to the 
		polar coefficients~$\widetilde{c}_m(n,\ell)$.
	\end{proposition}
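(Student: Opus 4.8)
The plan is to reduce the statement to a direct computation of the $T$-duality invariants $(n_{\t\gamma},\ell_{\t\gamma},m_{\t\gamma})$ attached to $\t\gamma$, and to compare the index contribution $(-1)^{\ell_{\t\gamma}+1}\,\abs{\ell_{\t\gamma}}\,d(m_{\t\gamma})\,d(n_{\t\gamma})$ with that of $\gamma$, namely $(-1)^{\lw+1}\,\abs{\lw}\,d(\mw)\,d(\nw)$. Rather than inserting the shifted matrix entries of~\eqref{eq:gamma-gammatilde} into the quadratic expressions~\eqref{eq:transformedcharges}, I would work directly with the seed charges $\cvec{Q_\gamma}{P_\gamma}=\gamma^{-1}\cvec{Q}{P}$, since right-multiplication by a unipotent matrix acts very simply on them. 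From $\t\gamma=\gamma\d\wmat{1}{-\lw}{0}{1}$ one has $\t\gamma^{-1}=\wmat{1}{\lw}{0}{1}\d\gamma^{-1}$, so that $\cvec{Q_{\t\gamma}}{P_{\t\gamma}}=\wmat{1}{\lw}{0}{1}\cvec{Q_\gamma}{P_\gamma}$, i.e.\ $Q_{\t\gamma}=Q_\gamma+\lw\,P_\gamma$ and $P_{\t\gamma}=P_\gamma$.

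Next I would read off the three invariants of $\t\gamma$ from $(\nw,\lw,\mw)=(Q_\gamma^2/2,\,Q_\gamma\d P_\gamma,\,P_\gamma^2/2)$. Since the magnetic charge is untouched, $m_{\t\gamma}=P_{\t\gamma}^2/2=\mw$. Expanding the remaining inner products gives
\begin{align}
\ell_{\t\gamma} &\= (Q_\gamma+\lw P_\gamma)\d P_\gamma \= \lw\,(1+2\mw)\,, \nonumber\\
n_{\t\gamma} &\= \tfrac12\,(Q_\gamma+\lw P_\gamma)^2 \= \nw + \lw^2\,(1+\mw)\,. \nonumber
\end{align}

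At this point the hypothesis $\mw=-1$ does all the work: it forces $\ell_{\t\gamma}=\lw(1-2)=-\lw$ and $n_{\t\gamma}=\nw+\lw^2\cdot 0=\nw$, while $m_{\t\gamma}=\mw=-1$ already. Substituting into the index contribution of $\t\gamma$, the factor $\abs{\ell_{\t\gamma}}=\abs{\lw}$ is unchanged, the sign $(-1)^{\ell_{\t\gamma}+1}=(-1)^{-\lw+1}=(-1)^{\lw+1}$ is unchanged since $\lw\in\IZ$, and both $\frac12$-BPS degeneracies are preserved because $d(m_{\t\gamma})=d(\mw)$ and $d(n_{\t\gamma})=d(\nw)$ (here $d(-1)$ is the leading coefficient in~\eqref{eq:Fourier-eta}). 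Hence $\t\gamma$ and $\gamma$ carry identical index contributions, which is the assertion.

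The computation is short, so the content is really organizational: every simplification is controlled by the single relation $\mw=-1$. Away from that value $\ell_{\t\gamma}$ and $n_{\t\gamma}$ would genuinely differ from $-\lw$ and $\nw$, and the two walls would \emph{not} share an index contribution. Thus the only thing to get right—the ``obstacle,'' such as it is—is recognizing that the metamorphic shift by $-\lw$ in~\eqref{eq:mag-BSM-meta} is precisely engineered so that the unipotent action on the seed electric charge becomes an index-preserving (spectral-flow-like) operation exactly when the magnetic center sits at the bottom of the $\frac12$-BPS spectrum.
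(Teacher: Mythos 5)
Your proof is correct and follows essentially the same route as the paper's: both compute $Q_{\t\gamma}=Q_\gamma+\lw P_\gamma$, $P_{\t\gamma}=P_\gamma$ from the unipotent factor, deduce $\ell_{\t\gamma}=-\lw$ and $n_{\t\gamma}=\nw$ using $\mw=P_\gamma^2/2=-1$, and conclude that the indexed degeneracy $(-1)^{\lw+1}\abs{\lw}\,d(\nw)\,d(\mw)$ is unchanged. The only difference is presentational — you keep the intermediate expressions $\ell_{\t\gamma}=\lw(1+2\mw)$ and $n_{\t\gamma}=\nw+\lw^2(1+\mw)$ general before specializing, which makes the role of the hypothesis $\mw=-1$ slightly more transparent.
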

	\begin{proof}
		First consider the wall $ \displaystyle \gamma $ for which the electric and magentic centers are
		\begin{align}
		Q_\gamma = sQ- q P \, , \quad P_\gamma = -rQ + pP \, .
		\end{align}
		Thus, we have
		\begin{align}
		\ell_\gamma = Q_\gamma \d P_\gamma = -sr Q^2 - qp P^2 + (sp + qr)\,Q\d P.
		\end{align}
		A similar calculation for $ \displaystyle \t \gamma $ shows that $Q_{\t \gamma} = Q_\gamma + \ell_\gamma P_\gamma$ and $P_{\t \gamma} = P_\gamma$, as well as
		\begin{align}
		\label{eq:ell-minusell}
		\ell_{\t \gamma} = \left ( Q_\gamma + \ell_\gamma P_\gamma \right ) \d P_\gamma = \ell_\gamma +  \ell_\gamma P_\gamma^2 = - \ell_\gamma \, ,
		\end{align}
		where 
		in the last equality 
		we have made use of the fact that $ \displaystyle m_\gamma = P_\gamma^2/2 = -1 $. 
		We also note that the above considerations imply
		\begin{align}
		\label{eq:qpeqq2}
		m_\gamma = m_{\t \gamma} = -1 \, , \quad n_{\t \gamma} = \left (n_\gamma + \ell_\gamma^2 m_\gamma +  \ell_\gamma^2 \right )   = n_\gamma \, .
		\end{align}
		From~\eqref{eq:ell-minusell},~\eqref{eq:qpeqq2} and the fact that a given bound state of charges~$(n',\ell',m')$ has indexed degeneracy~$(-1)^{\ell' + 1}|\ell'|d(n')d(m')$, we conclude that a wall~$\gamma$ and its metamorphic image~$\t \gamma$ contribute equally to the negative discriminant degeneracies.
	\end{proof}
	
	We now illustrate the potential non-zero contributions to the formula~\eqref{eq:degagain} from magnetic-BSM walls, implementing the above prescription. Recall that the walls we are summing over in the index formula are in $\Gamma_S^+$ and are thus oriented from left to right (they have $p/r > q/s$). From~\eqref{eq:gamma-gammatilde}, 
	we then have the following possible configurations for the wall $\gamma$ and its dual:
	\begin{enumerate}
		\item  
		\textbf{Case}~$ \displaystyle \frac{p}{r}  > \frac{-p\ell_\gamma+ q}{-r\ell_\gamma + s} > \frac{q}{s}$, as shown in Figure~\ref{fig:mmeta1}
		\begin{figure}[h]
			\centering
			\begin{subfigure}[b]{0.49\textwidth}
				\includegraphics[width=\textwidth]{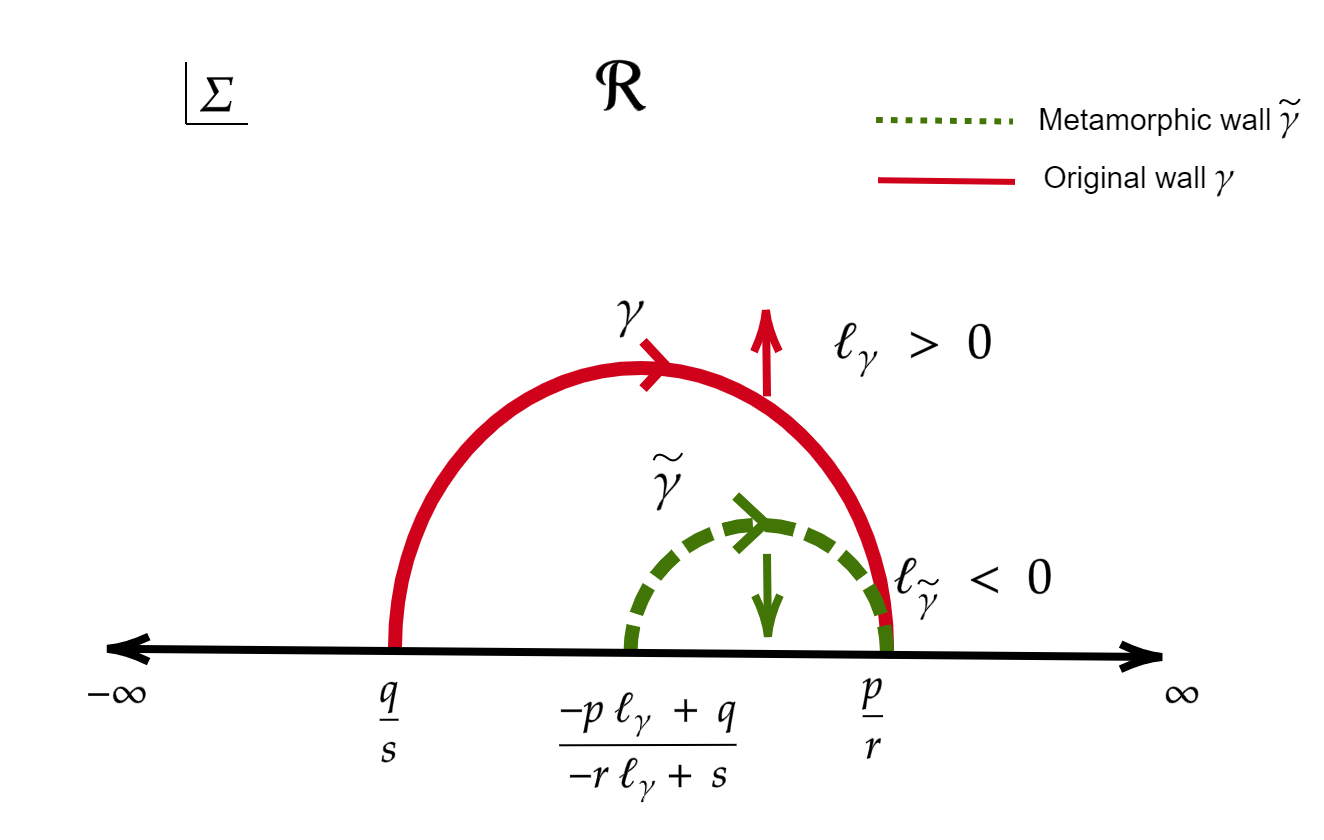}
				\caption{ Case $ \displaystyle \frac{p}{r} > \frac{- p\ell_\gamma + q}{-r \ell_\gamma + s} > \frac{q}{s} \, , \;\; \ell_\gamma = - \ell_{\t \gamma} > 0$ }
				\label{fig:mmeta1a}
			\end{subfigure}
			\begin{subfigure}[b]{0.49\textwidth}
				\includegraphics[width=\textwidth]{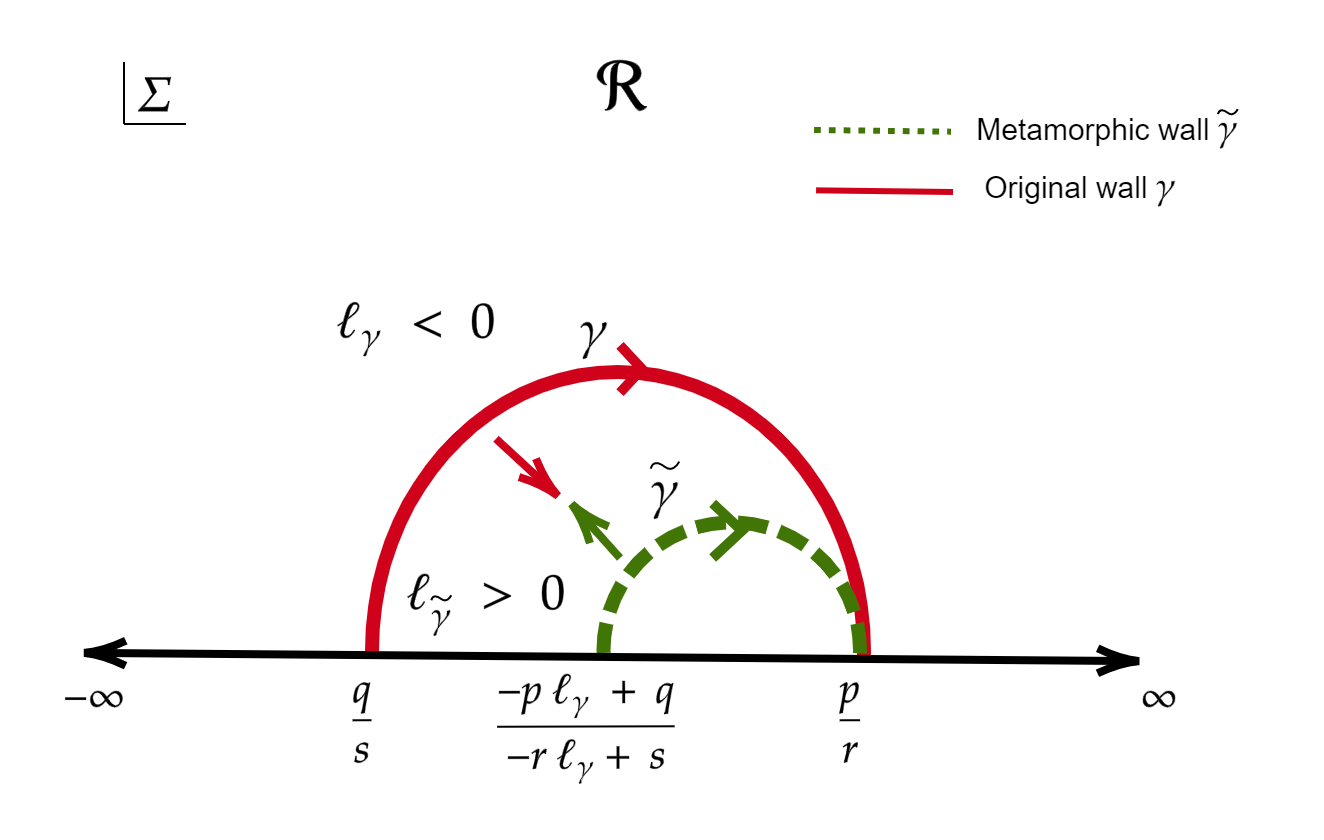}
				\caption{ Case $ \displaystyle \frac{p}{r} > \frac{- p\ell_\gamma + q}{-r \ell_\gamma + s} > \frac{q}{s} \, , \;\; \ell_\gamma = - \ell_{\t \gamma} < 0$ }
				\label{fig:mmeta1b}
			\end{subfigure}
			\caption{Metamorphosis for $\displaystyle \frac{p}{r} > \frac{- p\ell_\gamma + q}{-r \ell_\gamma + s} > \frac{q}{s}$}
			\label{fig:mmeta1}
		\end{figure}
		\begin{enumerate}
			\item     
			The situation~$\ell_\gamma = - \ell_{\t \gamma} > 0$ 
			shown in Figure~\ref{fig:mmeta1a} leads to a contradiction as follows. 
			If $-r\ell_\gamma+s>0$, then we find from $ \displaystyle  \frac{- p\ell_\gamma + q}{-r \ell_\gamma + s} > \frac{q}{s}$ 
			that $- p s \ell_\gamma + qs > -rq \ell_\gamma + qs$ which implies $0 > \ell_\gamma (ps - qr) = \ell_\gamma$ (because $\gamma \in PSL(2,\mathbb{Z})$), 
			which contradicts the assumption that $ \displaystyle  \ell_\gamma > 0$. For $-r\ell_\gamma+s<0$, we find 
			from $ \displaystyle \frac{p}{r} > \frac{- p\ell_\gamma + q}{-r \ell_\gamma + s} $ that $- p r \ell_\gamma + ps < -p r \ell_\gamma + qr$ which 
			leads to the contradiction $1=ps - qr <0$. Therefore this scenario does not occur.
			
			\item     
			The situation $ \displaystyle \ell_\gamma = - \ell_{\t \gamma} < 0$ as seen in Figure~\ref{fig:mmeta1b} 
			does occur but 
			does not contribute to the index computed in the region~$ \displaystyle \mathcal{R}$ owing to the BSM prescription presented below Definition~\ref{def:mag-meta-wall}.
		\end{enumerate}
		
		\item 
		\textbf{Case}~$ \displaystyle \frac{p}{r} > \frac{q}{s} > \frac{-p\ell_\gamma+ q}{-r\ell_\gamma + s} $, as shown in Figure~\ref{fig:mmeta2}
		\begin{figure}[h]
			\centering
			\begin{subfigure}[b]{0.49\textwidth}
				\includegraphics[width=\textwidth]{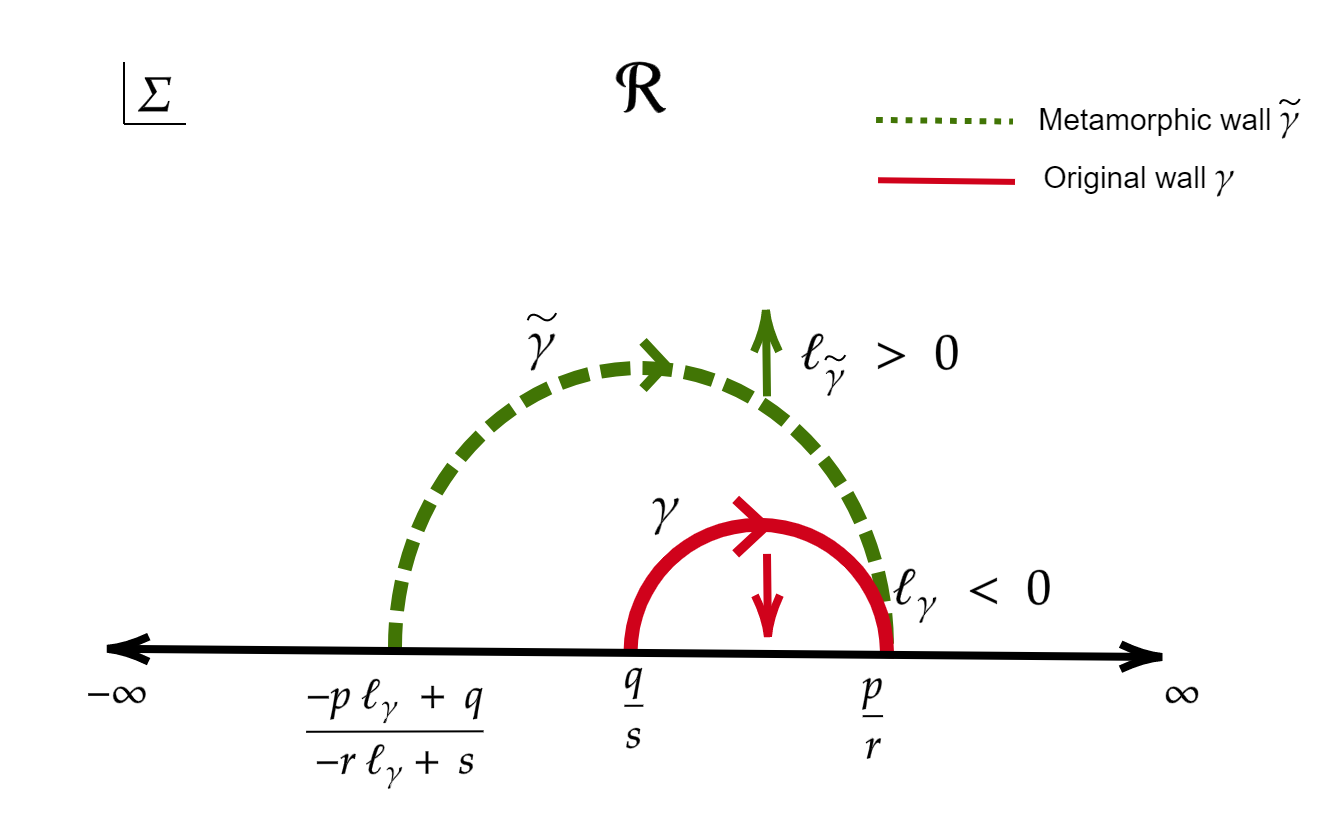}
				\caption{ Case $ \displaystyle \frac{p}{r}  > \frac{q}{s} > \frac{- p\ell_\gamma + q}{-r \ell_\gamma + s} \, , \;\; \ell_\gamma = -\ell_{\t \gamma} < 0$}
				\label{fig:mmeta2a}
			\end{subfigure}
			\begin{subfigure}[b]{0.49\textwidth}
				\includegraphics[width=\textwidth]{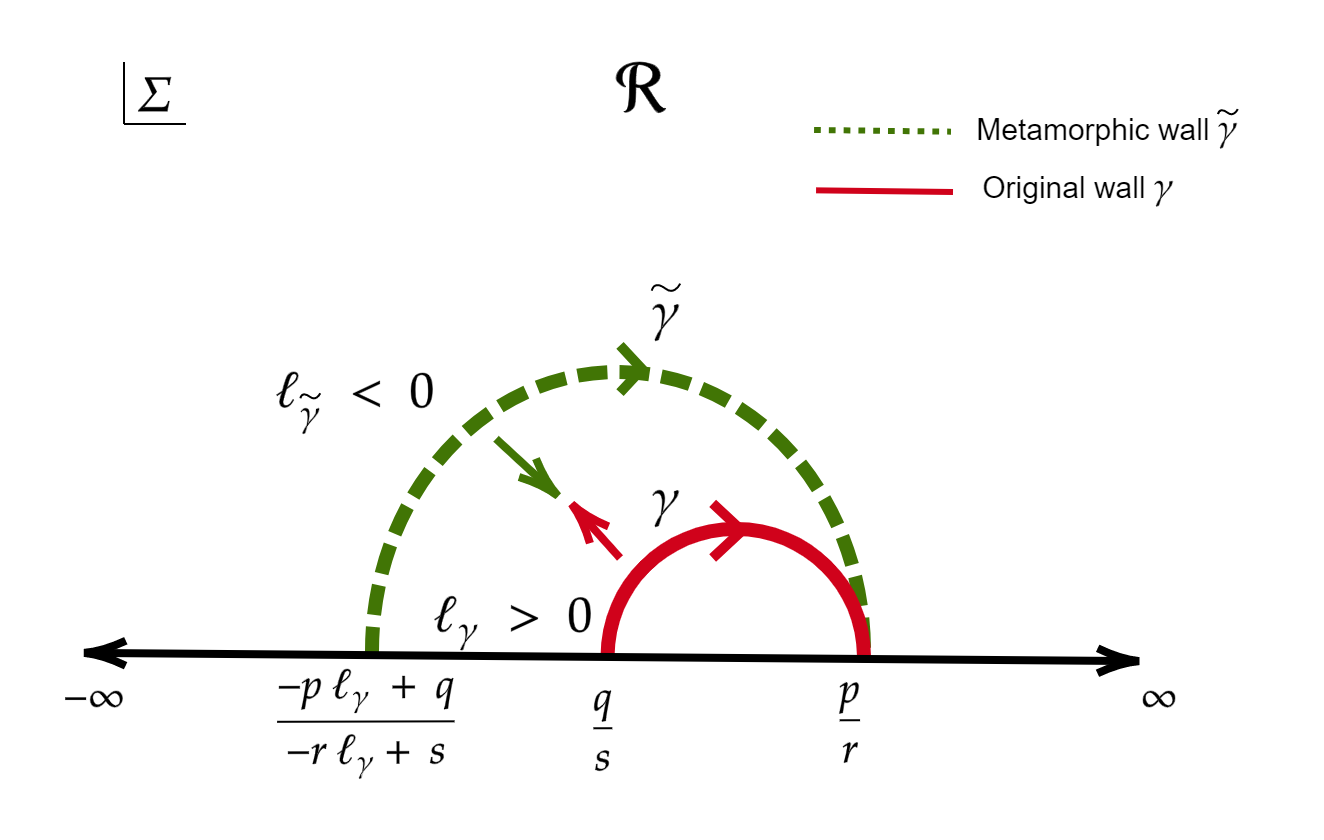}
				\caption{ Case $ \displaystyle \frac{p}{r} > \frac{q}{s} > \frac{- p\ell_\gamma + q}{-r \ell_\gamma + s} \, , \;\; \ell_\gamma = -\ell_{\t \gamma} >0$}
				\label{fig:mmeta2b}
			\end{subfigure}
			\caption{Metamorphosis for $\displaystyle \frac{p}{r}> \frac{q}{s} >  \frac{- p\ell_\gamma + q}{-r \ell_\gamma + s} $}
			\label{fig:mmeta2}
		\end{figure}
		\begin{enumerate}
			\item 
			For the case~$\ell_\gamma = -\ell_{\t \gamma} < 0$ as in Figure~\ref{fig:mmeta2a}, we run into a contradiction analogous to the one of Figure~\ref{fig:mmeta1a}.
			
			\item 
			The case of $ \displaystyle \ell_\gamma = -\ell_{\t \gamma} > 0$ as in Figure~\ref{fig:mmeta2b} does again 
			occur but does not contribute to the black hole degeneracy in the region $ \displaystyle \mathcal{R} $ owing to the BSM prescription.
		\end{enumerate}
		
		\item  
		\textbf{Case} $ \displaystyle   \frac{-p\ell_\gamma+ q}{-r\ell_\gamma + s } >\frac{p}{r} > \frac{q}{s}$ as shown in Figure~\ref{fig:mmeta3}
		\begin{figure}[h]
			\centering
			\begin{subfigure}[b]{0.49\textwidth}
				\includegraphics[width=\textwidth]{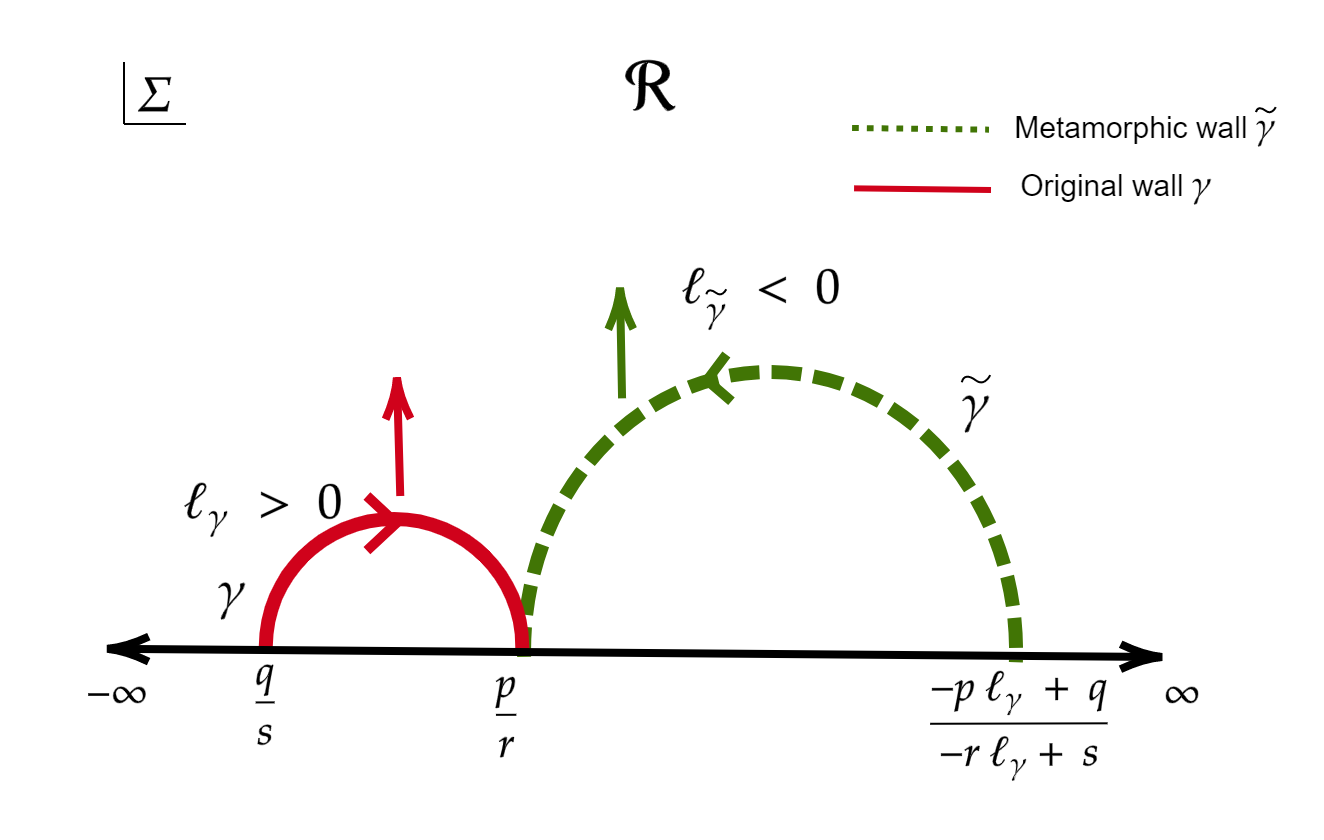}
				\caption{ Case $ \displaystyle \frac{- p\ell_\gamma + q}{-r \ell_\gamma + s} > \frac{p}{r} > \frac{q}{s} \, , \;\; \ell_\gamma = - \ell_{\t \gamma} > 0$}
				\label{fig:mmeta3a}
			\end{subfigure}
			\begin{subfigure}[b]{0.49 \textwidth}
				\includegraphics[width=\textwidth]{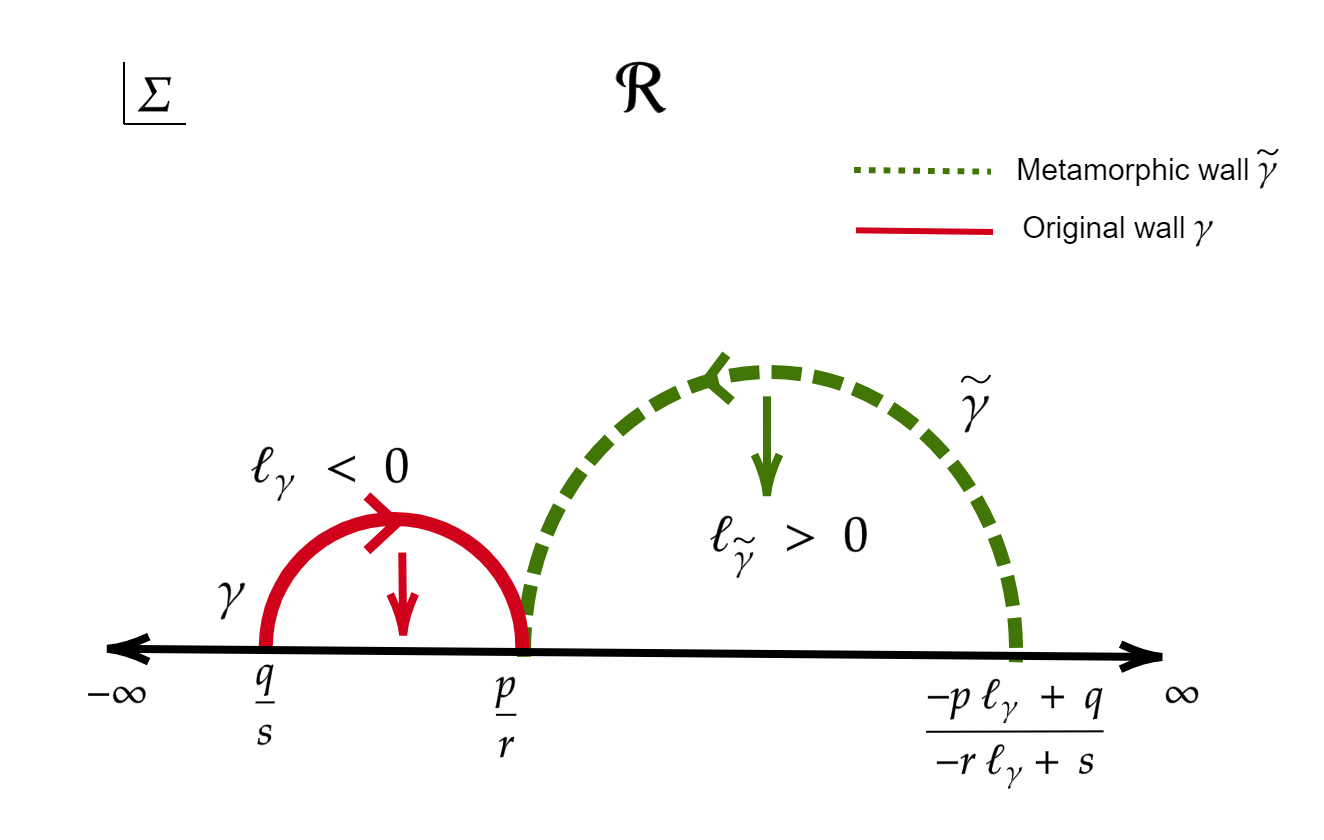}
				\caption{ Case $ \displaystyle  \frac{- p\ell_\gamma + q}{-r \ell_\gamma + s} > \frac{p}{r} > \frac{q}{s} \, , \;\; \ell_\gamma = -\ell_{\t  \gamma} < 0$}
				\label{fig:mmeta3b}
			\end{subfigure}
			\caption{Metamorphosis for $\displaystyle   \frac{- p\ell_\gamma + q}{-r \ell_\gamma + s} > \frac{p}{r}>  \frac{q}{s} $}
			\label{fig:mmeta3}
		\end{figure}
		\begin{enumerate}
			\item 
			For the case as in Figure~\ref{fig:mmeta3a}, there will be a contribution to the index in the region~$ \displaystyle \mathcal{R} $ from the 
			walls~$\gamma$ and~$\tilde{\gamma}$, in accordance with the BSM prescription. Furthermore, 
			Proposition~\ref{claim:mag-BSM} shows that both contributions are equal, and the prescription states that they must be identified to avoid overcounting.
			
			\item 
			The case shown in Figure~\ref{fig:mmeta3b} does not contribute to the black hole degeneracy in~$ \displaystyle \mathcal{R} $ 
			since neither~$\gamma$ nor~$\tilde{\gamma}$ contribute in~$\mathcal{R}$.
		\end{enumerate}
	\end{enumerate}
	
	In summary, we have shown that the only magnetic-BSM walls that give a non-trivial contribution to~\eqref{eq:degagain} must satisfy~$m_\gamma = -1$,~$n_\gamma \geq 0$,~$\ell_\gamma > 0$, as well as~$ \displaystyle \frac{-p\ell_\gamma+ q}{-r\ell_\gamma + s } > \displaystyle \frac{p}{r}$. Observe now that if~$-r\ell_\gamma + s >0$, then we can rewrite the latter inequality as~$0>ps-qr=1$ which is a contradiction. Thus, we find that the walls giving a non-trivial contribution to the index~\eqref{eq:degagain} must have~$-r\ell_\gamma + s <0$, which implies~$\ell_\gamma > s/r$ and therefore is stronger than~$\ell_\gamma >0$. \\
	
	Upon eliminating $q$ using the condition that the walls are in $PSL(2,\mathbb{Z})$, we can write the three conditions~$m_\gamma = -1$, $n_\gamma \geq 0$ and $\ell_\gamma > s/r$ as
	\begin{equation}
	\label{eq:master-mag}
	\begin{split}
	m\Bigl(\frac{p s - 1}{r}\Bigr)^2 - \ell\Bigl(\frac{p s - 1}{r}\Bigr)s + n s^2 \geq 0 \, , \\
	m p^2 - \ell p r + n r^2 = -1 \, , \\
	- 2 n r s  - 2 m \frac{p}{r}(p s-1) + \ell (2p s - 1) > \frac{s}{r} \, .
	\end{split}
	\end{equation}
	We split the discussion in various cases depending on the values of the charges~$(n,\ell,m)$, subject to the conditions $4mn-\ell^2 < 0$, $m>0$ and $0 \leq \ell \leq m$. We further focus on the~$\Gamma_S^+$ walls that have 
	$r,s>0$.
	
	\begin{enumerate}
		\item
		\textbf{\underline{Case 1}}: $m>0$, $n=-1$ \\
		In this case we solve for $r$ in \eqref{eq:master-mag} and obtain two solutions
		\begin{equation}
		r_\pm = \frac12\Bigl(\pm \sqrt{p^2|\Delta| + 4}-\ell p \Bigr) \, .
		\end{equation}
		Since $r_-$ is negative we can discard it and focus on the $r_+$ solution. Inserting this in the inequalities $n_\gamma \geq 0$ and $\ell_\gamma > s/r$, 
		we obtain the following inequalities on $s$, 
		\begin{equation}
		\text{max}\Bigl[\frac12\Bigl(\ell\sqrt{p^2|\Delta| + 4}-p|\Delta|\Bigr),0\Bigr] < s \leq \frac14\lp \ell+ \sqrt{|\Delta|}\rp \lp\sqrt{p^2|\Delta| + 4} - p \sqrt{|\Delta|}\rp \, ,
		\end{equation}
		where we have taken into account the fact that we are only interested in solutions with~$s>0$. 
		Clearly the right-hand side must be greater or equal to one for this to have solutions in~$\mathbb{Z}$, which in turn 
		translates to an upper bound on~$p$, given below. A lower bound on~$p$ arises because the lower bound on~$s$ 
		will become~$p$-dependent for sufficiently small~$p$ (certainly for~$p\leq 0$). In that case we know that the lower 
		bound~$\frac12\Bigl(\ell\sqrt{p^2|\Delta| + 4}-p|\Delta|\Bigr)=\frac12\Bigl(\ell (2r_++\ell p) -p|\Delta|\Bigr)$ is an 
		integer or a half-integer. Since~$s$ has to be strictly larger than this lower bound but smaller than the upper bound, we 
		find that the gap between the upper and lower bound on~$s$ has to be at least 1/2, which leads to a lower bound on~$p$. 
		The final range one obtains is
		\begin{equation}
		-1+\frac{1}{4m} \lp \frac{\ell (1+4m)}{\sqrt{|\Delta|}+1}\rp\leq p \leq \frac{1}{2} + \frac{1}{2m}\lp \frac{\ell(m+1)}{\sqrt{|\Delta|}} - 1\rp \, .
		\end{equation}
		Since~$0 \leq \ell \leq m$, the upper bound on~$p$ is maximized by taking~$\ell=m$, in which 
		case one obtains~$p\leq \frac12\Bigl( 1-\frac{1}{m}+\frac{m+1}{\sqrt{m(m+4)}}\Bigr) <1$, while the 
		lower bound on~$p$ trivially implies that~$p\geq 0$. So, we actually find that there are only solutions 
		with~$p=0$, which then implies that $r=r_+=1$ and $q=(ps-1)/r=-1$. The range for~$s$ simplifies 
		substantially and the only matrices that contribute in this case are
		\begin{equation}
		\label{eq:walls-magBSM-1}
		\begin{pmatrix} 0 & -1 \\ 1 & s \end{pmatrix}  \, , \quad \text{with} \quad \ell < s \leq\frac12\lp \ell+ \sqrt{|\Delta|}\rp<m+1 \, .
		\end{equation}
		Here we have used that $0\leq \ell \leq m$ to get a simple $m$-dependent upper bound on $s$.
		
		\item 
		\textbf{\underline{Case 2}}: $m>0$, $n = 0$ and $\ell > 0$ \\
		In this case, the system \eqref{eq:master-mag} imposes
		\begin{equation}
		r = \frac{1+mp^2}{\ell p} \, .
		\end{equation}
		Requiring $r>0$ to be an integer fixes $p= 1$ and $\ell\,|\,(m+1)$. Then the condition~$\ell_\gamma> s/r$ is 
		automatically satisfied for $s>0$, while the condition $m_\gamma \geq 0$ requires $s=1$. Thus, the matrices satisfying \eqref{eq:master-mag} are of the form
		\begin{equation}
		\label{eq:walls-magBSM-2}
		\begin{pmatrix} 1 & 0 \\ \frac{m+1}{\ell} & 1 \end{pmatrix}  \, , \quad \text{with} \quad \ell\,|\,(m+1) \, .
		\end{equation}
		This set of matrices has entries that are trivially bounded from above by $(m+1)/\ell$. Among all matrices 
		that contribute to the index~\eqref{eq:degagain}, we obtain here the maximal entry~$m+1$ for~$\ell=1$.
		
		\item
		\textbf{\underline{Case 3}}: $m>0$, $n > 0$ and $\ell > 0$ \\
		In this case we solve for $r$ using $m_\gamma = -1$ and obtain two solutions
		\begin{equation}
		\label{eq:r-sol-3}
		r_\pm = \frac{1}{2n}\Bigl(\ell p \pm \sqrt{p^2|\Delta| - 4n}\Bigr) \, .
		\end{equation}
		Note that sign($p) = $ sign($r_\pm)$, so our restriction to $r>0$ implies in this case $p>0$. 
		The reality of the square root in $r_\pm$ actually implies a stronger lower bound on $p$, 
		\begin{equation}
		\label{eq:p-lower}
		2\sqrt{\frac{n}{|\Delta|}} \leq p \, .
		\end{equation}
		Together with the conditions $n_\gamma \geq 0$ and $\ell_\gamma > s/r$, we find
		\begin{equation}
		\label{eq:s-window}
		\text{max}\Bigl[\frac1{2n}\Bigl(p|\Delta| \pm \ell\sqrt{p^2|\Delta| - 4n}\Bigr),0\Bigr] < s \leq \frac1{4n}\lp \ell+ \sqrt{|\Delta|} \rp \lp p\sqrt{|\Delta|}\pm \sqrt{p^2|\Delta| - 4n}\rp \, ,
		\end{equation}
		where the upper and lower signs are for~$r=r_+$ and~$r=r_-$, respectively. We should require that the right-hand side of the above 
		equation be greater than one to have integer solutions for~$s$. 	
		For the lower sign, this criterion yields an upper bound on~$p$,
		\begin{equation}
		\label{eq:p-upper}
		p \leq \frac{1}{2} + \frac{1}{2m}\lp \frac{\ell(m+1)}{\sqrt{|\Delta|}} - 1\rp \, ,
		\end{equation}
		which shows that there is a finite number of walls with~$r=r_-$ contributing to~\eqref{eq:degagain}. 
		Furthermore, the wall matrix entries are again bounded by simple $m$-dependent functions, as follows. 
		For~$p$ as in~\eqref{eq:p-upper} we notice that the upper bound is maximized 
		for~$\ell=m$ and~$|\Delta|=1$,\footnote{These values cannot actually be obtained, so there is a slightly stronger but more complicated bound.} 
		in which case one finds~$p<\frac12 \lp 2+m-\frac1m\rp<1+\frac{m}{2}$. Similarly, we can derive an~$m$-dependent upper 
		bound on~$s>0$ as follows:~$p \sqrt{|\Delta|}-\sqrt{p^2|\Delta| - 4n}$ is a monotonically decreasing function of~$p$ and 
		therefore maximal when~$p$ is at its lower bound $2\sqrt{n/|\Delta|}$ from~\eqref{eq:p-lower}. Taking into 
		account that~$1\leq n$ and $\sqrt{|\Delta|}<\ell \leq m$, we then find~$s\leq \frac{\ell+\sqrt{|\Delta|}}{2\sqrt{n}}<m$. 
		Using the upper bound on~$s$ we likewise find an upper bound on the remaining 
		entry,~$0\leq q=(ps-1)/r_- \leq \lp p\sqrt{|\Delta|}-\sqrt{p^2|\Delta|-4n}\rp/2 < \sqrt{m}/2$. \\
		For the upper sign, which corresponds to picking~$r=r_+$ in~\eqref{eq:r-sol-3}, requiring that the right-hand side 
		of~\eqref{eq:s-window} be greater than one does not yield additional constraints on~$p$. In this case, we thus only have the lower bound~\eqref{eq:p-lower}. 
		However, numerical investigations up to~$m=30$ show that the set of walls with~$r=r_+$ is finite, and in fact 
		consists of only a single element for a given value of~$m,n,\ell > 0$. Furthermore, the entries of the matrix associated 
		to such walls are always strictly less than~$m$. It seems that imposing integrality of the matrix entries on top of the above 
		conditions severely restricts the contributing walls with~$r=r_+$, although we have not managed to show this analytically. 
		We leave this as an interesting problem for the future.
	\end{enumerate}
	
	The above analysis shows that the set of walls at which magnetic-BSM occurs and that give a non-trivial contribution to the index~\eqref{eq:degagain} is finite with entries bounded from above by~$m+1$ (see Equation~\eqref{eq:walls-magBSM-2}). Aside from the case with~$m,n,\ell > 0$ and~$r = r_+$, we were able to show this analytically. Nevertheless, our numerical investigations have shown that the same conclusion holds for the latter walls. Some more details are presented in Appendix~\ref{sec:finiteness}.

	\subsection{Electric metamorphosis case: $ \displaystyle n_\gamma = -1, m_\gamma \geq 0 $ }
	\label{sec:el-BSM}

	Having expounded the details of the magnetic metamorphosis case in the previous subsection, we can make use of these results to work out the electric metamorphosis case at almost no extra cost. This follows from combining Proposition~\ref{claim:mag-BSM} with the observation below~\eqref{eq:Strafo}, which shows that acting with $\tilde{S}$ on the metamorphic dual $\tilde{\gamma}$~\eqref{eq:mag-BSM-meta} of a wall $\gamma$ with~$m_\gamma = -1$ produces a wall with~$n_\gamma = -1$ and with the same orientation as that of~$\gamma$. Indeed, the charges associated with the wall~$\tilde{\gamma} \cdot \tilde{S}$ are given by
	\begin{equation}
	\label{eq:from-mag-to-el}
	(n_{\tilde{\gamma}\tilde{S}},\,\ell_{\tilde{\gamma}\tilde{S}},\,m_{\tilde{\gamma}\tilde{S}}) 
	\= (m_\gamma,\,\ell_\gamma,\,n_\gamma) \= (-1,\,\ell_\gamma,\,n_\gamma) \, . 
	\end{equation}
	As in the magnetic-BSM case, for a wall~$\gamma$ such that electric metamorphosis occurs 
	there is another wall~$\tilde{\gamma}$ 
	which gives the same contribution to the index: 
	\begin{definition}
		\label{def:el-meta-wall}
		For a wall $ \displaystyle \gamma $ with $ \displaystyle n_\gamma = -1, \ m_\gamma \geq  0 $, we define its metamorphic dual as 
		\begin{equation}
		\label{eq:el-BSM-meta}
		\t \gamma \= \gamma \d \wmat{1}{0}{-\ell_\gamma}{1} \, .
		\end{equation}
	\end{definition}
	We must then employ a prescription analogous to the one presented below Definition~\ref{def:mag-meta-wall} for electric-BSM contributions to avoid overcounting. This is again necessary since an electric-BSM wall and its metamorphic dual~\eqref{eq:el-BSM-meta} have the same contribution to the index:
	\begin{proposition}
		\label{prop:el-BSM}
		For a given set of charges~$(n,\ell,m)$, the wall $ \displaystyle \t \gamma $ has the same index 
		contribution as $ \displaystyle \gamma $ to the 
		polar coefficients~$\widetilde{c}_m(n,\ell)$.
	\end{proposition}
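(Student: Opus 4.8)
The plan is to deduce this from the magnetic statement of Proposition~\ref{claim:mag-BSM} using the map $\tilde S$ of~\eqref{eq:Strafo}, which interchanges electric and magnetic metamorphosis, so that essentially no new computation is required. Since $\gamma$ satisfies $n_\gamma = -1$, the observation below~\eqref{eq:Strafo} gives $n_{\gamma\tilde S} = m_\gamma$, $m_{\gamma\tilde S} = n_\gamma = -1$ and $\ell_{\gamma\tilde S} = -\ell_\gamma$; hence $\gamma\tilde S$ is a magnetic-BSM wall of precisely the type treated in Proposition~\ref{claim:mag-BSM}.

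First I would compute the magnetic metamorphic dual of $\gamma\tilde S$ from Definition~\ref{def:mag-meta-wall}. Since $\ell_{\gamma\tilde S} = -\ell_\gamma$, this dual is $(\gamma\tilde S)\d\wmat{1}{\ell_\gamma}{0}{1} = \gamma\,\tilde S\,\wmat{1}{\ell_\gamma}{0}{1}$. The central step is then the elementary identity $\tilde S\,\wmat{1}{\ell_\gamma}{0}{1}\,\tilde S = \wmat{1}{0}{-\ell_\gamma}{1}$, valid in $PSL(2,\IZ)$, expressing that conjugation by $\tilde S$ turns the upper-triangular shift into the lower-triangular shift that defines the electric dual in~\eqref{eq:el-BSM-meta}. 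Multiplying this identity on the right by $\tilde S$ and using that $\tilde S$ is an involution in $PSL(2,\IZ)$ (it squares to $-I$) rearranges it to $\tilde S\,\wmat{1}{\ell_\gamma}{0}{1} = \wmat{1}{0}{-\ell_\gamma}{1}\,\tilde S$, from which it follows that the magnetic dual of $\gamma\tilde S$ equals $\t\gamma\,\tilde S$, where $\t\gamma = \gamma\d\wmat{1}{0}{-\ell_\gamma}{1}$ is the electric dual of~\eqref{eq:el-BSM-meta}.

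This closes the short chain $\gamma \mapsto \gamma\tilde S \mapsto \t\gamma\,\tilde S \mapsto \t\gamma$, in which the two outer arrows are right-multiplication by $\tilde S$ and the middle arrow is magnetic metamorphosis. I would finish by observing that every arrow preserves the index contribution $(-1)^{\ell_\gamma+1}|\ell_\gamma|\,d(n_\gamma)\,d(m_\gamma)$: the $\tilde S$-arrows do so because $\tilde S$ merely swaps $n_\gamma \leftrightarrow m_\gamma$ and sends $\ell_\gamma \to -\ell_\gamma$, all of which leave $(-1)^{\ell_\gamma+1}|\ell_\gamma|\,d(n_\gamma)\,d(m_\gamma)$ invariant, while the middle arrow does so by Proposition~\ref{claim:mag-BSM}. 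Composing the three, $\gamma$ and $\t\gamma$ carry equal index contributions, which is exactly the assertion.

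The only genuine subtlety — and thus the step I would be most careful about — is the sign bookkeeping in the conjugation identity: performed in $SL(2,\IZ)$ it yields $-\wmat{1}{0}{-\ell_\gamma}{1}$, and one must invoke $-I \sim I$ in $PSL(2,\IZ)$ to identify it with the shift appearing in~\eqref{eq:el-BSM-meta}. As an independent cross-check I would also run the direct argument, mirroring the proof of Proposition~\ref{claim:mag-BSM}: from~\eqref{eq:el-BSM-meta} one has $Q_{\t\gamma} = Q_\gamma$ and $P_{\t\gamma} = P_\gamma + \ell_\gamma Q_\gamma$, so using $Q_\gamma^2/2 = n_\gamma = -1$ gives $n_{\t\gamma} = n_\gamma = -1$, $\ell_{\t\gamma} = \ell_\gamma(1 + 2 n_\gamma) = -\ell_\gamma$ and $m_{\t\gamma} = m_\gamma + \ell_\gamma^2(1 + n_\gamma) = m_\gamma$; since $|\ell_{\t\gamma}| = |\ell_\gamma|$ together with $n_{\t\gamma} = n_\gamma$ and $m_{\t\gamma} = m_\gamma$, the two index contributions agree.
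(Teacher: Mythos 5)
Your proposal is correct. The paper's own proof of Proposition~\ref{prop:el-BSM} is a one-line statement that the argument is ``completely analogous'' to that of Proposition~\ref{claim:mag-BSM}, i.e.\ the direct computation you give as a cross-check: from~\eqref{eq:el-BSM-meta} one finds $Q_{\t\gamma}=Q_\gamma$, $P_{\t\gamma}=P_\gamma+\ell_\gamma Q_\gamma$, whence $n_{\t\gamma}=n_\gamma=-1$, $\ell_{\t\gamma}=\ell_\gamma(1+2n_\gamma)=-\ell_\gamma$ and $m_{\t\gamma}=m_\gamma+\ell_\gamma^2(1+n_\gamma)=m_\gamma$, and these identities are exactly what the intended mirror of the magnetic proof produces. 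Your primary route---reducing to the magnetic case by the conjugation identity $\tilde S\,\wmat{1}{\ell_\gamma}{0}{1}\,\tilde S=-\wmat{1}{0}{-\ell_\gamma}{1}$, so that the magnetic dual of $\gamma\tilde S$ is $\t\gamma\,\tilde S$---is a slightly more structured packaging of the same content; it is essentially a formalization of the observation the paper itself makes at the opening of Section~\ref{sec:el-BSM} via Equation~\eqref{eq:from-mag-to-el}, and it buys you the electric statement with no new charge computation, at the cost of the $PSL(2,\IZ)$ sign bookkeeping you correctly flag. Both arguments are sound and the numerical identities check out.
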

	\begin{proof}
		This is proven completely analogously to the proof of Proposition~\ref{claim:mag-BSM}.
	\end{proof}
	
		Furthermore we recall that, as explained below~\eqref{eq:thetastepR}, the summand of the counting formula for negative discriminant states is invariant under an~$\tilde{S}$-transformation. From~\eqref{eq:from-mag-to-el}, it is clear that the electric-BSM wall~$\tilde{\gamma}\cdot\tilde{S}$ gives the same contribution to the polar coefficients~$\widetilde{c}_m(n,\ell)$ as the magnetic-BSM wall~$\gamma$. Thus, in addition to the above prescription that requires us to identify an electric-BSM wall with its metamorphic dual, we also 
need to identify the contribution of electric-BSM walls with the contribution of magnetic-BSM walls to avoid further overcounting. The BSM prescription in the case of magnetic or electric walls therefore identifies four contributions together for a given set of charges~$(n,\ell,m)$.
	
	From Definition \ref{def:el-meta-wall}, it is easy to see that a wall $ \displaystyle \gamma $ and its metamorphic dual $ \displaystyle \t \gamma $ have the same starting point $q/s$,
	\begin{align}
	\gamma = \wmat{p}{q}{r}{s} \;\; \Longleftrightarrow \;\; \t \gamma = \wmat{- q \ell_\gamma + p}{\ \ q}{-s \ell_\gamma + r}{\ \ s} \, .
	\end{align} 
	Given this and the fact that we look for walls in~$\Gamma_S^+$ (with $p/r > q/s$), we have the following possible configurations for the electric-BSM wall $\gamma$ and its dual:
	\begin{enumerate}
		\item  
		\textbf{Case} $ \displaystyle \frac{p}{r}  > \frac{-q\ell_\gamma+ p}{-s\ell_\gamma + r} > \frac{q}{s}$, as shown in Figure~\ref{fig:nmeta1}
		\begin{figure}[h]
			\centering
			\begin{subfigure}[b]{0.49\textwidth}
				\includegraphics[width=\textwidth]{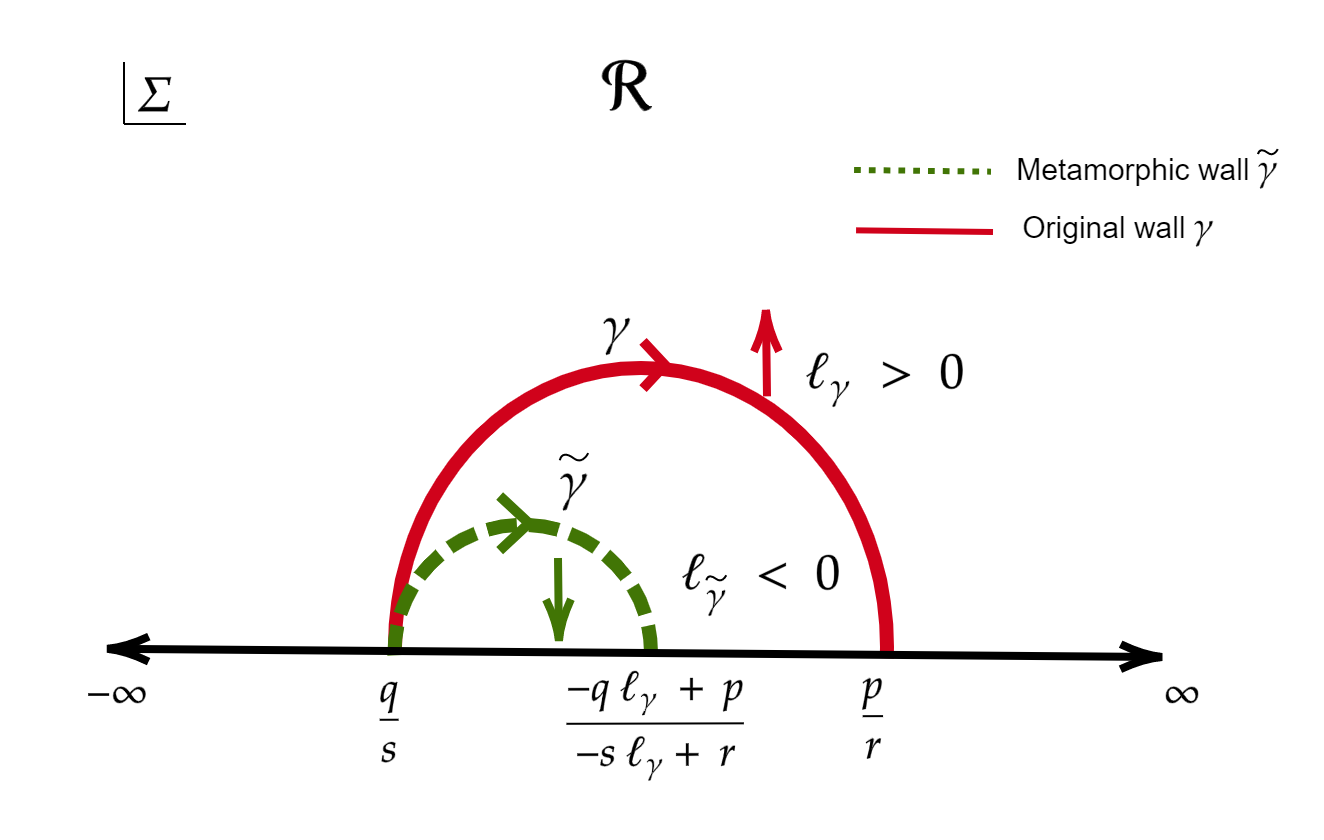}
				\caption{ Case $ \displaystyle \frac{p}{r} > \frac{- q\ell_\gamma + p}{-s \ell_\gamma + r} > \frac{q}{s} \, , \;\; \ell_\gamma = -\ell_{\t \gamma} > 0$ }
				\label{fig:nmeta1a}
			\end{subfigure}
			\begin{subfigure}[b]{0.49\textwidth}
				\includegraphics[width=\textwidth]{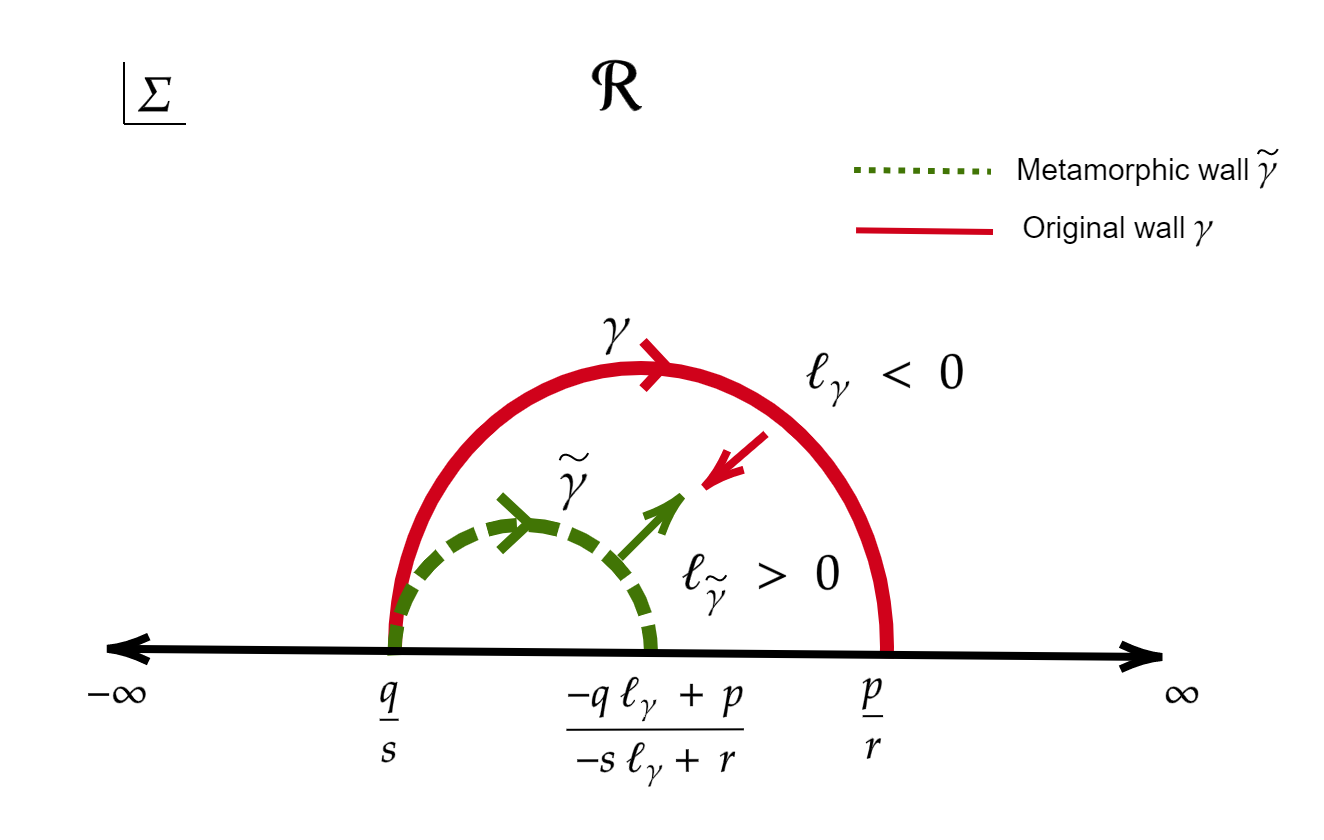}
				\caption{ Case $ \displaystyle \frac{p}{r} > \frac{- q\ell_\gamma + p}{-s \ell_\gamma + r} > \frac{q}{s} \, , \;\; \ell_\gamma = -\ell_{\t \gamma} < 0$}
				\label{fig:nmeta1b}
			\end{subfigure}
			\caption{Metamorphosis for $ \displaystyle \frac{p}{r} > \frac{- q\ell_\gamma + p}{-s \ell_\gamma + r} > \frac{q}{s}$}\label{fig:nmeta1}
		\end{figure}
		
		\begin{enumerate}
			\item     
			For the situation~$\ell_\gamma = -\ell_{\t \gamma} > 0$ as in Figure~\ref{fig:nmeta1a}, one can 
			show that this configuration leads to a contradiction, analogous to the magnetic-BSM case 
			of Figure~\ref{fig:mmeta1a}. Therefore this scenario does not occur.
			
			\item     
			The case~$\displaystyle \ell_\gamma = -\ell_{\t \gamma} < 0$ as seen in Figure~\ref{fig:nmeta1b} 
			does occur but does not contribute to the index in the region~$ \displaystyle \mathcal{R} $ owing to the BSM prescription.
		\end{enumerate}
		
		\item 
		\textbf{Case} $ \displaystyle \frac{-q\ell_\gamma+ p}{-s\ell_\gamma + r} >  \frac{p}{r}  > \frac{q}{s}$, as shown in Figure~\ref{fig:nmeta2}
		\begin{figure}[h]
			\centering
			\begin{subfigure}[b]{0.49\textwidth}
				\includegraphics[width=\textwidth]{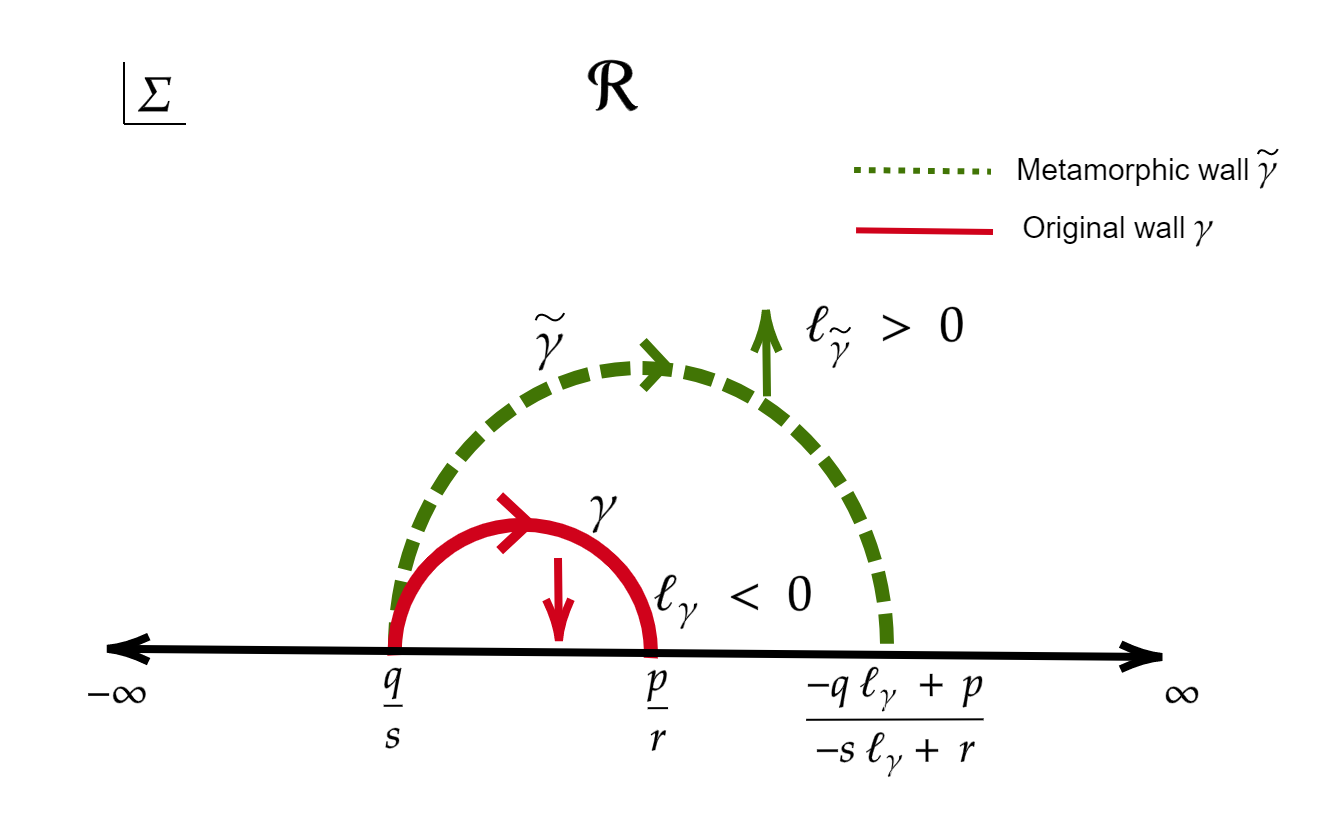}
				\caption{ Case $ \displaystyle \frac{- q\ell_\gamma + p}{-s \ell_\gamma + r} > \frac{p}{r} >  \frac{q}{s} \, , \;\; \ell_\gamma = -\ell_{\t \gamma} > 0$}
				\label{fig:nmeta2a}
			\end{subfigure}
			\begin{subfigure}[b]{0.49\textwidth}
				\includegraphics[width=\textwidth]{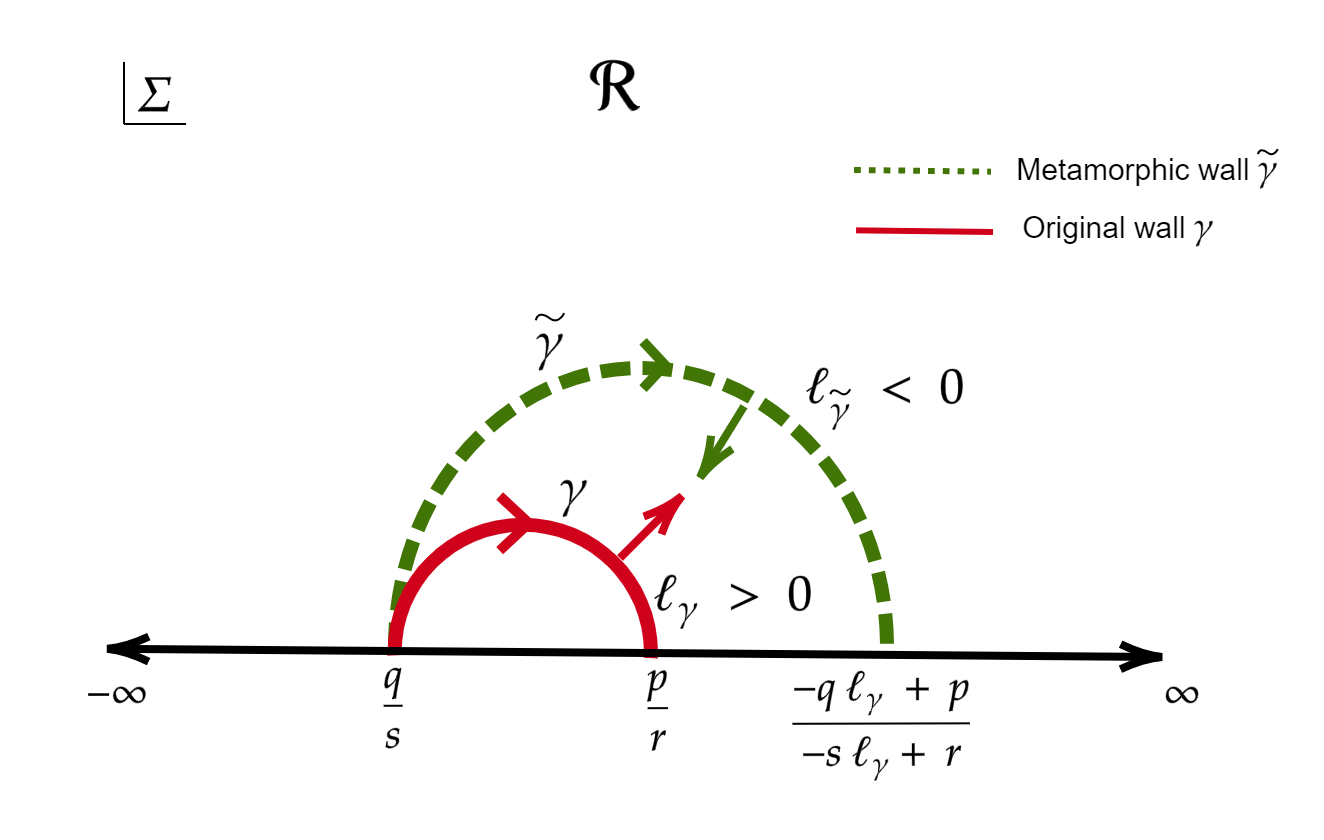}
				\caption{ Case $ \displaystyle \frac{- q\ell_\gamma + p}{-s \ell_\gamma + r} > \frac{p}{r} >  \frac{q}{s} \, , \;\; \ell_\gamma = -\ell_{\t \gamma} < 0$}
				\label{fig:nmeta2b}
			\end{subfigure}
			\caption{Metamorphosis for $ \displaystyle  \frac{- q\ell_\gamma + p}{-s \ell_\gamma + r} > \frac{p}{r} > \frac{q}{s}$}\label{fig:nmeta2}
		\end{figure}
		
		\begin{enumerate}
			\item 
			For the case~$\ell_\gamma = -\ell_{\t \gamma} < 0$ as in~\ref{fig:nmeta2a}, there is again a contradiction 
			which prevents this configuration from happening, as in the magnetic-BSM case of Figure~\ref{fig:mmeta2a}. 
			
			\item 
			The case~$\displaystyle \ell_\gamma = -\ell_{\t \gamma} > 0$ as in Figure~\ref{fig:nmeta2a} does again occur 
			but does not contribute to the index in the region~$\displaystyle \mathcal{R}$  owing to the BSM prescription.
		\end{enumerate}
		
		\item  
		\textbf{Case} $ \displaystyle   \frac{p}{r} >  \frac{q}{s} > \frac{- q\ell_\gamma + p}{-s \ell_\gamma + r} $, as shown in Figure~\ref{fig:nmeta3}
		\begin{figure}[h]
			\centering
			\begin{subfigure}[b]{0.49\textwidth}
				\includegraphics[width=\textwidth]{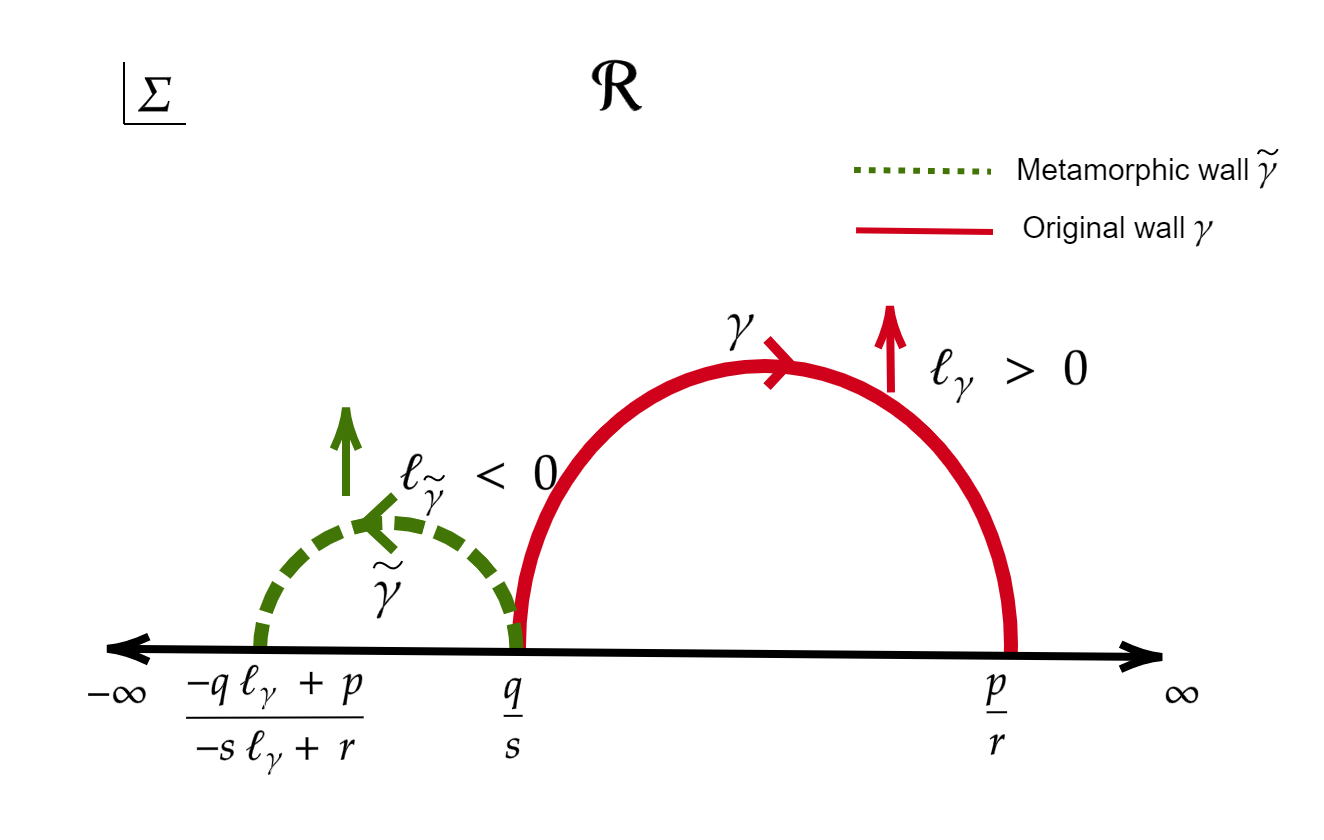}
				\caption{ Case $ \displaystyle  \frac{p}{r} >  \frac{q}{s} > \frac{- q\ell_\gamma + p}{-s \ell_\gamma + r} \, , \;\; \ell_\gamma = -\ell_{\t \gamma} > 0$}
				\label{fig:nmeta3a}
			\end{subfigure}
			\begin{subfigure}[b]{0.49\textwidth}
				\includegraphics[width=\textwidth]{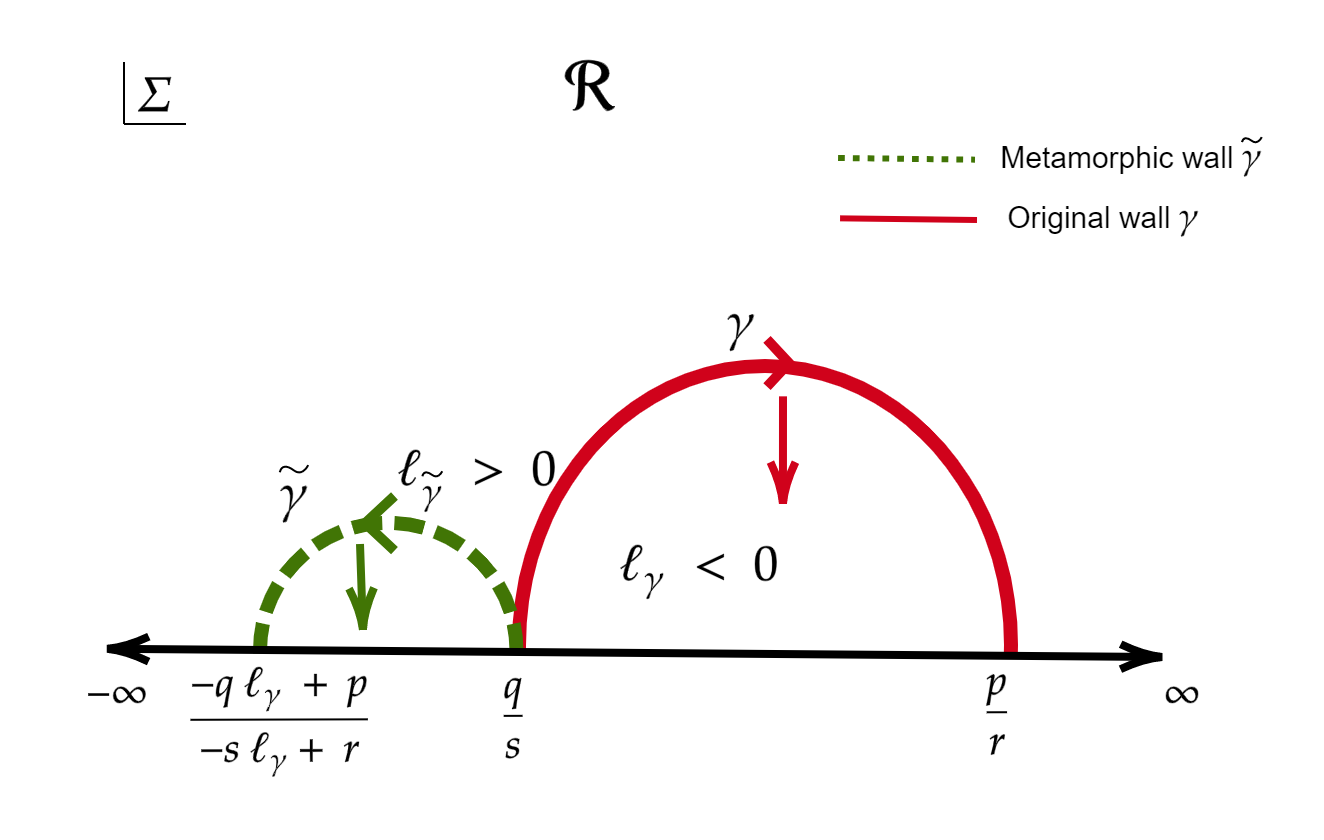}
				\caption{ Case $ \displaystyle \frac{p}{r} >  \frac{q}{s} > \frac{- q\ell_\gamma + p}{-s \ell_\gamma + r} \, , \;\; \ell_\gamma = -\ell_{\t \gamma} < 0$}
				\label{fig:nmeta3b}
			\end{subfigure}
			\caption{Metamorphosis for $ \displaystyle  \frac{p}{r} > \frac{q}{s} >  \frac{- q\ell_\gamma + p}{-s \ell_\gamma + r} $}\label{fig:nmeta3}
		\end{figure}
		
		\begin{enumerate}
			\item 
			For the case as in Figure~\ref{fig:nmeta3a}, there will be a contribution to the index in 
			the region~$ \displaystyle \mathcal{R} $ from~$\gamma$ and~$\tilde{\gamma}$. Here, just as in the magnetic-BSM case, 
			both these contributions are equal owing to Proposition~\ref{prop:el-BSM} and must be identified according to the BSM prescription.
			
			\item The case as shown in Figure~\ref{fig:nmeta3b} does not contribute to the black hole degeneracy 
			in the region~$\displaystyle \mathcal{R} $ since neither~$\gamma$ nor~$\tilde{\gamma}$ does.
		\end{enumerate}
	\end{enumerate}
	
	Just as in the previous section, the above analysis shows that the only electric-BSM walls that give a non-trivial contribution to~\eqref{eq:degagain} must satisfy~$m_\gamma \geq 0$,~$n_\gamma = -1$,~$\ell_\gamma > 0$, as well as~$ \displaystyle \frac{q}{s} > \displaystyle \frac{-q\ell_\gamma+ p}{-s\ell_\gamma + r}$. Once again, the last inequality leads to a stronger restriction on~$\ell_\gamma$, namely~$\ell_\gamma > r/s$.
	We now explicitly give the form of the walls for which electric-BSM occurs, for all values of $(n,\ell,m)$ with the usual restrictions that $4mn - \ell^2 < 0$, $m>0$ and $0\leq \ell\leq m$. We make use of the observation at the beginning of this section regarding the action of~$\tilde{S}$ on the metamorphic dual of a magnetic-BSM wall. 
	
	\begin{enumerate}
		\item
		\textbf{\underline{Case 1}}: $m>0$, $n=-1$ \\
		Acting on the metamorphic dual~\eqref{eq:mag-BSM-meta} of~\eqref{eq:walls-magBSM-1} with an~$\tilde{S}$-transformation, we obtain the walls
		\begin{equation}
		\label{eq:walls-elBSM-1s}
		\begin{pmatrix} 1 & 0 \\ s-\ell & 1 \end{pmatrix}  \, , \quad \text{with} \quad \ell < s \leq\frac12\lp \ell+ \sqrt{|\Delta|}\rp<m+1 \, .
		\end{equation}
		
		\item 
		\textbf{\underline{Case 2}}: $m>0$, $n = 0$ and $\ell > 0$ \\
		Acting on the metamorphic dual~\eqref{eq:mag-BSM-meta} of~\eqref{eq:walls-magBSM-2} with an~$\tilde{S}$-transformation, we obtain the walls
		\begin{equation}
		\label{eq:walls-elBSM-2}
		\begin{pmatrix} \ell & 1 \\ m & \frac{m+1}{\ell} \end{pmatrix}  \, , \quad \text{with} \quad \ell\,|\,(m+1) \, .
		\end{equation}
		This set of matrices has entries that are bounded from above by max$[m,(m+1)/\ell]$. 
		Among all matrices that contribute to the index, we obtain here the maximal entry~$m+1$ for~$\ell=1$.
		
		\item
		\textbf{\underline{Case 3}}: $m>0$, $n > 0$ and $\ell > 0$ \\
		Acting on the metamorphic dual~\eqref{eq:mag-BSM-meta} of the walls of Case 3 in Section~\ref{sec:mag-BSM} 
		with an~$\tilde{S}$-transformation, we obtain a finite set of electric-BSM walls. This can be shown analytically when acting on 
		walls with~$r=r_-$, and numerically when acting on walls with~$r=r_+$. Moreover, all entries are strictly bounded from above by~$m$. 
		See again Appendix~\ref{sec:finiteness} for some numerical checks.
	\end{enumerate}
	
	Just as in the magnetic-BSM analysis of Section~\ref{sec:mag-BSM}, in all the above cases we obtain a finite number of electric-BSM walls whose entries are bounded by~$m+1$ (see Equation~\eqref{eq:walls-elBSM-2}). In the next section, we turn to the final case that remains to be analyzed, which is when both transformed charges~$m_\gamma$ and~$n_\gamma$ are equal to~$-1$.

	\subsection{Dyonic metamorphosis case: $ \displaystyle m_\gamma= n_\gamma = -1 $}
	\label{sec:dyonic-BSM}
	
		The final case of metamorphosis occurs when both the electric and magnetic charges 
		attain their lowest possible values. In the previous two cases of BSM, a magnetic or electric 
		wall came with a single metamorphic dual, and as explained above the resulting four walls for a 
		given charge vector~$(n,\ell,m)$ have to be identified to obtain the correct contribution 
		to the polar coefficients~$\widetilde{c}_m(n,\ell)$. When both $ \displaystyle \mw = \nw = -1 $, there are 
		two centers to be identified and we can identify the magnetic and electric centers alternatively. 
		This generates an infinite sequence of dual walls \cite{Sen2011}. 
		The metamorphic duals can be generated in two ways depending on which center 
		we start the identification with. Since they are equivalent, we choose to start the identification with the magnetic center.
	\begin{definition}
	\label{def:dyonic-meta}
		Let $ \displaystyle \gamma $
		be a wall at which $ \displaystyle m_{\gamma} = n_{\gamma} = -1 $. The metamorphic duals are
		\begin{align} 
		\label{eq:inf_sequence}
		\t\gamma_{\,i} \=  \t \gamma_{\,i-1} \d M_{(i\,{\rm mod }\, 2)} \quad {for} \quad i>0 \, , \quad {and} \quad \t\gamma_0 = \gamma \, , 
		\end{align} 
		where $ M_1,\, M_0 $ are defined as
		\begin{align}
		\label{eq:mnmetidentify}
		\begin{split}
		M_1 \defeq \mat{1}{-\ell_\gamma}{0}{1}, \qquad M_0 \defeq \mat{1}{0}{\ell_\gamma}{1} \, .
		\end{split}
		\end{align}
	\end{definition}
	\ndt For example,~$\t \gamma_1 = \gamma \d M_1,\,  \t \gamma_2 = \t\gamma_1\d M_0, \, \t \gamma_3 = \t \gamma_2 \d M_1, \, \ldots$\footnote{Starting with the electric center, we would have ~$\t \gamma_1 = \gamma \d M_0,\,  \t \gamma_2 = \t\gamma_1\d M_1, \, \t \gamma_3 = \t \gamma_2 \d M_0, \, \ldots$} 
	Note that the identification of the electric center in~$ \displaystyle M_0 $ does not have a~`$ \displaystyle -\lw $' unlike in~\eqref{eq:el-BSM-meta} 
	and this dual wall will have the same sign of $ \displaystyle \lw $ as $ \displaystyle \gamma $. 
	
	\begin{proposition}
		For a given set of charges~$(n,\ell,m)$, the walls $ \displaystyle \t \gamma_{\,i>0}$ all have the same index contribution as $ \displaystyle \gamma $ to the 
		polar coefficients~$\widetilde{c}_m(n,\ell)$.	
	\end{proposition}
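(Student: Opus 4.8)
The plan is to prove the statement by induction on~$i$, tracking the full triple of $T$-duality invariants~$(n_{\tilde\gamma_i},\ell_{\tilde\gamma_i},m_{\tilde\gamma_i})$ rather than only the index contribution. Concretely I would establish the stronger inductive claim that for all~$i \geq 0$,
\begin{equation}
n_{\tilde\gamma_i} \= m_{\tilde\gamma_i} \= -1 \, , \qquad \ell_{\tilde\gamma_i} \= (-1)^i\,\ell_\gamma \, .
\end{equation}
The base case~$i=0$ is the hypothesis~$m_\gamma = n_\gamma = -1$ together with~$\ell_{\tilde\gamma_0} = \ell_\gamma$. Once this claim is in hand the proposition is immediate: the index contribution of~$\tilde\gamma_i$ is~$(-1)^{\ell_{\tilde\gamma_i}+1}\,|\ell_{\tilde\gamma_i}|\,d(n_{\tilde\gamma_i})\,d(m_{\tilde\gamma_i})$, and since~$\ell_{\tilde\gamma_i} = \pm\ell_\gamma$ shares the magnitude and the parity of~$\ell_\gamma$ while~$n_{\tilde\gamma_i} = m_{\tilde\gamma_i} = -1 = n_\gamma = m_\gamma$, this equals~$(-1)^{\ell_\gamma+1}\,|\ell_\gamma|\,d(n_\gamma)\,d(m_\gamma)$, the contribution of~$\gamma$.

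For the inductive step the essential observation is that at each stage the constant matrix~$M_1$ or~$M_0$ in~\eqref{eq:inf_sequence} coincides with the magnetic or electric metamorphic-dual move \emph{adapted to the current wall}~$\tilde\gamma_i$. If~$i$ is even then~$\ell_{\tilde\gamma_i} = \ell_\gamma$ by hypothesis, so~$M_1 = \begin{psmallmatrix} 1 & -\ell_\gamma \\ 0 & 1 \end{psmallmatrix} = \begin{psmallmatrix} 1 & -\ell_{\tilde\gamma_i} \\ 0 & 1 \end{psmallmatrix}$, which is exactly the magnetic dual of Definition~\ref{def:mag-meta-wall} (legitimate since~$m_{\tilde\gamma_i} = -1$); the computations~\eqref{eq:ell-minusell} and~\eqref{eq:qpeqq2} from the proof of Proposition~\ref{claim:mag-BSM} then give~$n_{\tilde\gamma_{i+1}} = n_{\tilde\gamma_i} = -1$,~$m_{\tilde\gamma_{i+1}} = -1$ and~$\ell_{\tilde\gamma_{i+1}} = -\ell_{\tilde\gamma_i} = (-1)^{i+1}\ell_\gamma$. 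If~$i$ is odd then~$\ell_{\tilde\gamma_i} = -\ell_\gamma$, so~$M_0 = \begin{psmallmatrix} 1 & 0 \\ \ell_\gamma & 1 \end{psmallmatrix} = \begin{psmallmatrix} 1 & 0 \\ -\ell_{\tilde\gamma_i} & 1 \end{psmallmatrix}$, which is exactly the electric dual of Definition~\ref{def:el-meta-wall} (legitimate since~$n_{\tilde\gamma_i} = -1$); the analogue of these computations underlying Proposition~\ref{prop:el-BSM} then yields~$n_{\tilde\gamma_{i+1}} = -1$,~$m_{\tilde\gamma_{i+1}} = m_{\tilde\gamma_i} = -1$ and~$\ell_{\tilde\gamma_{i+1}} = -\ell_{\tilde\gamma_i} = (-1)^{i+1}\ell_\gamma$. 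This closes the induction.

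The step I expect to demand the most care is precisely this matching of the \emph{fixed} matrices~$M_1,M_0$ to the sign-dependent metamorphic moves. The mechanism hinges on the fact that~$\ell$ flips sign at every step, so that feeding the \emph{original}~$\ell_\gamma$ into~$M_0$ with a~$+\ell_\gamma$ entry---as highlighted in the remark below Definition~\ref{def:dyonic-meta}---reproduces the~$-\ell_{\tilde\gamma_i}$ required by the electric move, and symmetrically for~$M_1$. I would verify this sign bookkeeping directly at the level of the constituent charges, checking that the two moves act as~$Q \mapsto Q + \ell_\gamma P$,~$P \mapsto P$ (magnetic) and~$Q \mapsto Q$,~$P \mapsto P - \ell_\gamma Q$ (electric), and using~$Q_\gamma^2 = P_\gamma^2 = -2$ with~$Q_\gamma \cdot P_\gamma = \ell_\gamma$ to confirm that neither~$n$ nor~$m$ ever leaves~$-1$. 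Everything beyond this bookkeeping is routine substitution, and no estimate beyond Propositions~\ref{claim:mag-BSM} and~\ref{prop:el-BSM} is needed.
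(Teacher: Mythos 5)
Your proof is correct and follows essentially the same route as the paper's: each step of the dyonic orbit is recognized as a magnetic or electric metamorphic move, which fixes $n_\gamma=m_\gamma=-1$ and only flips the sign of $\ell_\gamma$, leaving $(-1)^{\ell_\gamma+1}|\ell_\gamma|\,d(n_\gamma)\,d(m_\gamma)$ unchanged. Your inductive bookkeeping $\ell_{\tilde\gamma_i}=(-1)^i\ell_\gamma$, showing that the \emph{fixed} matrices $M_1,M_0$ coincide at every stage with the sign-adapted duals of Definitions~\ref{def:mag-meta-wall} and~\ref{def:el-meta-wall}, simply makes explicit the detail the paper's two-line proof leaves implicit.
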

	
	\begin{proof}
		From the previous sections, we have already shown that the matrices that identify 
		magnetic centers~\eqref{eq:mag-BSM-meta} and electric centers~\eqref{eq:el-BSM-meta} leave the value of 
		electric and magnetic charges invariant while only flipping the sign of~$\displaystyle \lw $.
		 Therefore, the infinite set of walls generated in~\eqref{eq:inf_sequence} have the same contribution to the index.
	\end{proof}
	We now characterize dyonic metamorphosis. The possible cases for dyonic metamorphosis 
	are shown in Figure~\ref{fig:nmmeta}, where only one case as shown in Figure~\ref{fig:nmmeta2} can in principle 
	contribute to the black hole degeneracy in the attractor region~$\mathcal R$. The reason for this 
	is our BSM prescription: all walls and their metamorphic must contribute in the same region to contribute to 
	the polar coefficients~$\widetilde{c}_m(n,\ell)$. 
	\begin{figure}[h]
		\centering
		\begin{subfigure}[b]{0.49\textwidth}
			\includegraphics[width=\textwidth]{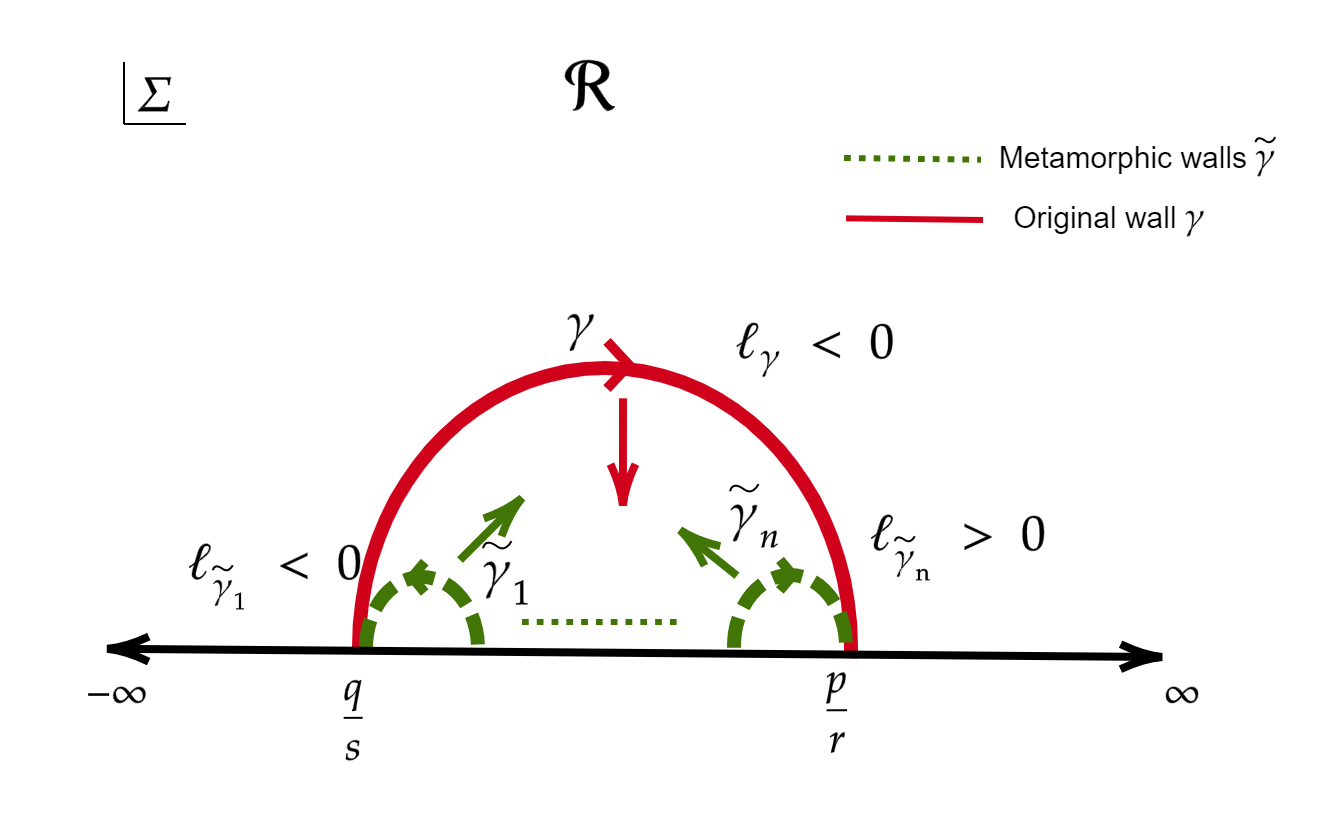}
			\caption{ Case of $\displaystyle m_\gamma = -1, n_\gamma = -1$ metamorphosis where all the metamorphic walls are inside the original wall.}
			\label{fig:nmmeta1}
		\end{subfigure} \
		\begin{subfigure}[b]{0.49 \textwidth}
			\includegraphics[width=\textwidth]{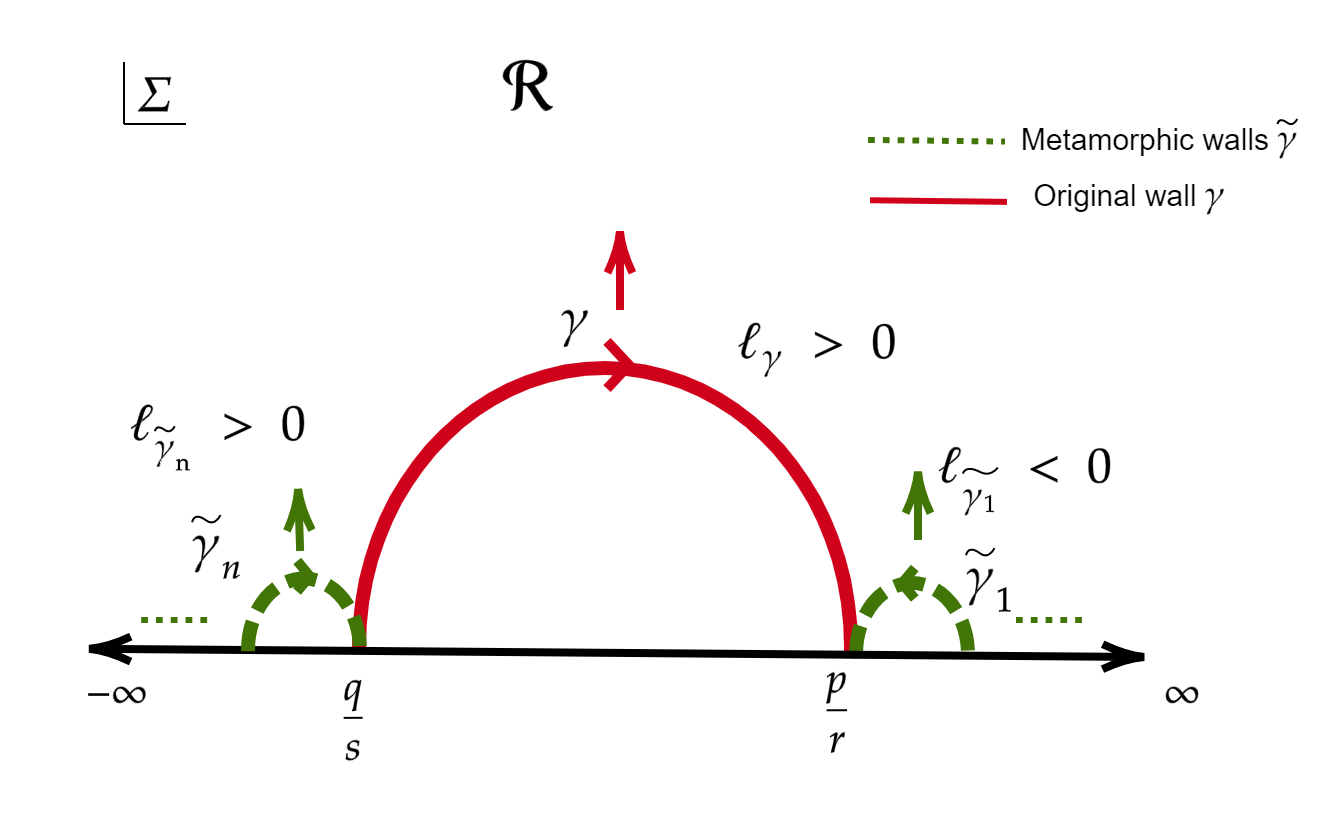}
			\caption{Case of $\displaystyle m_\gamma = -1, n_\gamma = -1$ metamorphosis where all the metamorphic walls are outside the original wall.}
			\label{fig:nmmeta2}
		\end{subfigure}
		\caption{Possible cases of metamorphosis for $\displaystyle m_\gamma = -1, n_\gamma = -1$. There are an infinite series of walls to be identified but we have not depicted them here in order to avoid cluttering of the images. }\label{fig:nmmeta}
	\end{figure}
	To obtain the explicit form of the dyonic-BSM walls
	we must solve the following system,
	\begin{align}
	\label{eq:dyonic-met-charges}
	\begin{split}
	\nw &\= s^2 n + q^2 m - s q \ell \= -1 \, , \\
	\mw &\= r^2 n + p^2 m - r p \ell \= -1  \, , \\
	\lw &\= -sr n- pq m+ \ell(ps + qr ) \= \sqrt{\abs{\Delta} + 4}  \, ,
	\end{split}
	\end{align}
	with $\wmat{p}{q}{r}{s}\in PSL(2,\mathbb{Z})$. 
	It is important to recall that the discriminant~$\Delta$ is a $U$-duality invariant. For this reason, the value of~$\ell_\gamma$ 
	is not independent and is fixed in terms of the charges~$(n,\ell,m)$ as $ \displaystyle  \lw^2 - 4 = \ell^2 - 4 m n = \abs{\Delta}$.
	We further restrict to the case where~$ \displaystyle \lw $ is positive 
	i.e.,~$ \displaystyle \lw = \sqrt{|\Delta| + 4} $ 
	so that the wall contributes to the region~$\mathcal{R}$. Given a charge vector~$(n,\ell,m)$, 
	there is an 
	infinite sequence of walls, all with associated transformed 
	charges~$ \displaystyle (\nw, \ \lw, \ \mw) = (-1, \ \sqrt{\abs{\Delta}+4}, \ -1) $, 
	which get identified by 
	the BSM prescription. \\
	
	We now study the explicit form of the contributing walls. When~$n=0$, the discriminant is~$\displaystyle |\Delta|= \ell^2 $ (with~$\ell > 0$). 
	This reduces \eqref{eq:dyonic-met-charges} to
	\begin{align}
	\label{eq:dyon-met-n0}
	\begin{split}
	\nw &\= q^2 m - s q \ell \= -1 \, , \\
	\mw &\= p^2 m - r p \ell \= -1 \, , \\
	\lw &\= -pq m+ \ell(ps + qr ) \= \sqrt{\abs{\Delta} + 4} \, . 
	\end{split}
	\end{align}
	Demanding that $\lw \in \Z$ implies that $\abs \Delta + 4$ is a perfect square 
	i.e.,~$\ell^2 + 2^2 = \ell_\gamma^2$ for~$\ell_\gamma \in\mathbb{Z}$. 
	We know, however, that there is no Pythagorean triple with 2 as an element. (The difference of two squares form an increasing 
	sequence~$3,5,7,8,\dots $ and this does not include~$2^2=4$.)
	Therefore, there is no dyonic metamorphosis for $ \displaystyle n = 0 $.\\
	
	When~$n\neq 0$, we can solve the system~\eqref{eq:dyonic-met-charges} after eliminating~$q$ 
	using the~$PSL(2,\mathbb{Z})$ relation~$q = (ps - 1)/r$. 
	From~$m_\gamma = -1$ we obtain
	\begin{equation}
	r \= r_\pm \= \frac1{2n}\Bigl(\ell p \pm \sqrt{p^2|\Delta| - 4n}\Bigr) \, .
	\end{equation}
	For each value of~$r$, the condition~$n_\gamma = -1$ is quadratic in~$s$ and yields two branches of solutions. We therefore arrive at 
	the following dyonic-BSM walls\footnote{As mentioned above, a consequence of~$U$-duality is 
	that the equation~$\ell_\gamma = \sqrt{|\Delta| + 4}$ is not independent and does not yield additional constraints.}
	\begin{equation}
	\label{eq:dyon-meta-nsol-1}
	\gamma_{+,\pm} =
	\wmat{p}{\frac{1}{2}\Bigl(\pm p \sqrt{\abs \Delta +4} + \sqrt{\abs \Delta p^2 - 4n}\Bigr)}
	{\frac1{2n}\Bigl(\ell p + \sqrt{p^2|\Delta| - 4n}\Bigr)}{\frac{1}{4n}\Bigl(\ell p + \sqrt{|\Delta|p^2-4n}\Bigr)\Bigl(\ell \pm \sqrt{|\Delta|+4}\Bigr) - mp} \, ,
	\end{equation}
	and
	\begin{equation}
	\label{eq:dyon-meta-nsol-2}
	\gamma_{-,\pm} =
	\wmat{p}{\frac{1}{2}\Bigl(\pm p \sqrt{\abs \Delta +4} - \sqrt{\abs \Delta p^2-4n}\Bigr)}
	{\frac1{2n}\Bigl(\ell p - \sqrt{p^2|\Delta| - 4n}\Bigr)}{\frac{1}{4n}\Bigl(\ell p - \sqrt{|\Delta|p^2-4n}\Bigr)\Bigl(\ell \pm \sqrt{|\Delta|+4}\Bigr) - mp} \, .
	\end{equation}
	Since~$-\gamma_{+,\pm}(-p) = \gamma_{-,\pm}(p)$ and we look for walls in~$PSL(2,\mathbb{Z})$, 
	we can focus on one type of walls, say~$\gamma_{-,\pm}$. 
	We therefore drop the first subscript and simply denote walls of the form~\eqref{eq:dyon-meta-nsol-2} as~$\gamma_\pm$.
	Examining the top-right entry of~$\gamma_\pm$, a necessary condition for these walls to have integer entries 
	is that~$\sqrt{|\Delta| + 4}, \sqrt{|\Delta|p^2 - 4n} \in \mathbb{Z}$. 	
	In the following we let~$y = p$ and~$D = |\Delta|$. The requirement that~$D+4$ is a perfect 
	square implies that~$D$ is not a square (as already observed above), and further that~$D$ 
	is congruent to 0 or 1 modulo 4. 
	
	The requirement~$\sqrt{D y^2 -4n} \in \mathbb{Z}$ can then be expressed as the requirement for~$y$ to be a solution of
	\begin{equation}
	\label{eq:Pell-gen}
	\sqrt{D\,y^2 - 4n} \= x \;\; \Longrightarrow \;\; x^2 - D\,y^2 \= -4n \, ,
	\end{equation}
	with~$x,y \in \mathbb{Z}$. We now split the discussion in two cases.

\begin{enumerate}
	\item \underline{\textbf{Case 1}}:~$n=-1$ \\
	In the case~$n=-1$, the condition~\eqref{eq:Pell-gen} takes the form
	\begin{equation}
	\label{eq:Pell}
	x^2 - D\,y^2 \= 4 \, .
	\end{equation}
	This equation is the so-called Brahmagupta-Pell equation and has been well-studied over the years.\footnote{In the literature, 
	it is common to denote ``the'' Pell equation as the equation 
	where the right-hand side is equal to one. However, the latter is a special case 
	of our equation~\eqref{eq:Pell-gen} with~$n=-1$, see e.g.,~\cite{cohen2008number, Conrad1, Conrad2}.}
	It is one of the classic Diophantine equations, and its solutions have been fully classified.
	In the language of modern algebraic number theory this problem is closely related to the problem of finding units in the ring of 
	integers of the real quadratic field~$\mathbb{Q}(\sqrt{D})$. 
	We will present the solution below in elementary terms, and later make some comments on the more formal interpretation.
	We follow the treatment of~\cite{cohen2008number, Conrad1, Conrad2}.
	The equation~\eqref{eq:Pell} has an infinity of solutions given as follows. 
	Let
	\begin{equation}
	\label{eq:unit}
	u \= u_0 + \sqrt{D}\,v_0 \, ,
	\end{equation}
	be such that~$u_0^2 - D\,v_0^2 = 4$ with the least strictly positive~$v_0$. 
	Then all solutions of~\eqref{eq:Pell} are given by~\cite{cohen2008number} 
	\begin{equation}
	\label{eq:sol-Pell}
	\frac{x + \sqrt{D}\,y}{2} \= \Bigl(\frac{u_0 + \sqrt{D}\,v_0}{2}\Bigr)^k  \quad \text{with} \;\; k \in \mathbb{Z} \, .
	\end{equation}

	In general the difficulty is to find the fundamental solution~$u$, as~$u_0$ and~$v_0$ need 
	not be small even for small~$D$.\footnote{A famous example (Fermat's challenge) is the equation~$a^2 - D\,b^2 =1$ with~$D = 61$, 
	where the fundamental solution is given by~$a_0 = 1766319049$ and~$b_0 = 226153980$. 
	As we see below, this example does not appear in the physical system we study because~$D+4=65$ is not a perfect square. 
	It is interesting to wonder, however, whether 
	such phenomena are relevant to generating large scales in nature. 
	We thank D.~Anninos for this suggestion and for interesting 
	conversations about this issue.} 
	In our case however, we can 
	can use the physics of the problem which guarantees that~$\ell_\gamma = \sqrt{D+4}$ is an integer. Therefore, 
	the fundamental solution is simply
	\begin{equation}
	\label{eq:our-unit}
	u \= \sqrt{D+4} + \sqrt{D} \;\; \Longleftrightarrow \;\; (u_0,v_0) \= (\sqrt{D+4},1) \, .
	\end{equation}
	From this solution we generate all other solutions from~\eqref{eq:sol-Pell}. Expanding that equation and matching 
	the coefficients of unity and~$\sqrt{D}$ leads to the recurrence
	\begin{equation}
	\label{eq:rec-rel-1}
	\begin{split}
	2\,x_{k+1} \=&\; \sqrt{D+4}\,x_k + D\,y_k \, , \\
	2\,y_{k+1} \=&\; x_k + \sqrt{D+4}\,y_k \, ,
	\end{split}
	\end{equation}
	for~$k\geq 0$. Given a solution~$x_k + \sqrt{D}\,y_k$ to~\eqref{eq:Pell}, the matrix~\eqref{eq:dyon-meta-nsol-2} reads
	\begin{equation}
	\label{eq:k-wall}
	\gamma_\pm(k) \=
	\wmat{y_k}{\frac{1}{2}\Bigl(\pm y_k \sqrt{D +4} - x_k\Bigr)}
	{\frac1{2}\bigl(x_k - \ell y_k\bigr)}{\frac{1}{4}\Bigl(x_k\bigl(\ell \pm \sqrt{D+4}\bigr) - y_k\bigl(D \pm \ell\sqrt{D+4}\bigr)\Bigr)} \, .
	\end{equation}
	Using the recursion relations~\eqref{eq:rec-rel-1}, we can now show that acting on the right of~$\gamma_+(k)$ 
	with~$M_1\cdot\tilde{S} = \wmat{\sqrt{D+4}}{1}{-1}{0}$ yields
	\begin{equation}
	\label{eq:plus-gen}
	\gamma_+(k)\cdot M_1\cdot\tilde{S} \= \gamma_+(k+1) \quad \forall \;\; k \geq 0 \, ,
	\end{equation}
	while acting on the right of~$\gamma_-(k)$ with~$M_0\cdot\tilde{S} = \wmat{0}{1}{-1}{\sqrt{D+4}}$ yields 
	\begin{equation}
	\label{eq:minus-gen}
	\gamma_-(k)\cdot M_0\cdot\tilde{S} \= \gamma_-(k+1) \quad \forall \;\; k \geq 0 \, .
	\end{equation}
	In the language of Definition~\ref{def:dyonic-meta}, the recurrence~\eqref{eq:plus-gen} can be written as 
	\begin{equation}
	\label{eq:gen-def-terms}
	\tilde{\gamma}_k = \begin{cases} \gamma_+(k)\cdot\tilde{S} \;\; &\text{for} \;\; k\geq 1 \;\; \text{odd} \\ \gamma_+(k) \;\; &\text{for} \;\; k\geq 2 \;\; \text{even} \end{cases} \, , \quad \text{and} \quad \tilde{\gamma}_0 = \gamma_+(0) \, ,
	\end{equation}
	where we used~$M_0 = \tilde{S}\cdot M_1 \cdot \tilde{S}$. Had we chosen to start 
	the identification with the electric center in Definition~\ref{def:dyonic-meta}, we would have 
	used~\eqref{eq:minus-gen} instead. 
	Equation~\eqref{eq:gen-def-terms} shows that all metamorphic duals of the dyonic-BSM walls~$\gamma_\pm(0)$ 
	are precisely all the solutions to the Brahmagupta-Pell equation~\eqref{eq:Pell}.
	
	The first representative of the orbit (the wall with~$k=0$) is given by 
	\begin{equation}
	\label{eq:wall-start}
	\gamma_\pm(0) \= \wmat{0}{-1}{1}{\frac12\bigl(\ell \pm \sqrt{D+4}\bigr)} \, .
	\end{equation}
	Note that~$\ell^2 = D - 4m \equiv D+4 \, (\text{mod} \, 2)$, 
	which implies that~$\ell \equiv \sqrt{D+4} \, (\text{mod} \, 2)$, so that
	the bottom-right entry of~\eqref{eq:wall-start} is always an integer. 
	Since~$\ell < \sqrt{D +4}$, the wall~$\gamma_-(0)$ is an element of~$\Gamma_S^-$. Therefore, 
	provided that $\sqrt{D+4} \in \mathbb{Z}$,   
	the wall~$\gamma_+(0) \in \Gamma_S^+$ and 
	all the metamorphic duals to be identified according to the BSM prescription are generated by the right action of~$M_1\cdot\tilde{S}$. 
	These dual walls are given by all the solutions to~\eqref{eq:Pell} as in~\eqref{eq:k-wall}. This completely characterizes dyonic-BSM in the 
	case~$n=-1$. Furthermore, since~$0 \leq \ell \leq m$, the first representative of 
	this orbit~$\gamma_+(0)$ clearly has entries bounded by~$0 < \frac12\bigl(\ell + \sqrt{D+4}\bigr) \leq m+1$.   
	
	\item \underline{\textbf{Case 2}}:~$n\geq 1$ \\
	In this case we are interested in the solutions to the so-called generalized Brahmagupta-Pell equation~\eqref{eq:Pell-gen}
	\begin{equation} \label{eq:gen-Pell}
	x^2 - D\,y^2 \= -4n \, .
	\end{equation}
	As before, we have that~$D > 0$ is not a square. Unlike in the~$n=-1$ case, 
	this equation does not necessarily have a solution for general~$D$ and~$n$. However, 
	when there is a solution~$(x_0,y_0)$ 
	then there are infinitely many solutions 
	which are all generated 
	by multiplication with powers of the fundamental unit given in~\eqref{eq:our-unit},
	\begin{equation}
	\label{eq:gen-Pell-sol}
	x + \sqrt{D}\,y \= (x_0 + \sqrt{D}\,y_0)\,\biggl(\frac{\sqrt{D+4} + \sqrt{D}}{2}\biggr)^k  \quad \text{for any} \;\; k \in \mathbb{Z} \, .
	\end{equation}
	By repeating the same steps as in Case 1 above, one can again show that the orbit of metamorphic duals is precisely 
	the solution set of the generalized Pell equation, and is generated by the matrices~\eqref{eq:plus-gen} and~\eqref{eq:minus-gen}
	acting on
	\begin{equation}
	\gamma_\pm(0) \= \wmat{y_0}{\frac12\bigl(\pm y_0 \sqrt{D+4} - x_0\bigr)}{\frac{1}{2n}
	\bigl(\ell y_0 - x_0\bigr)}{\frac{1}{4n}\bigl(\ell y_0 - x_0\bigr)\bigl(\ell \pm \sqrt{D+4}\bigr) - my_0} \, .
	\end{equation}
	As in~\eqref{eq:wall-start}, we have~$\sqrt{D+4} \in \mathbb{Z}$ and~$\ell \equiv \sqrt{D+4} \; (\text{mod} \, 2)$.
	In order for the matrix entries to be integer, 
	a sufficient condition is~$x_0 \equiv \ell y_0 \, (\text{mod} \, 2n)$. 
	By~\eqref{eq:gen-Pell} we have $x_0^2 \, \equiv D \, y_0^2 \, (\text{mod} \, 2n)$.
	Together with the fact that~$D \, \equiv \ell^2 \, (\text{mod} \, 2n)$, 
	this implies 
	that $x_0^2 \, \equiv \ell^2 \, y_0^2 \, (\text{mod} \, 2n)$, so that 
	if~$n$ is square free, then we automatically have $x_0 \equiv \ell y_0 \, (\text{mod} \, 2n)$. 
	Once this condition is met, 
	the full 
	orbit of metamorphic duals is generated by~$M_1\cdot\tilde{S}$ or~$M_0\cdot\tilde{S}$ as before.

	Since~$k$ runs over all integers in~\eqref{eq:gen-Pell-sol}, it is clear that 
	every Pell orbit---and therefore every dyonic BSM orbit---has 
	an element with smallest~$|y|$, which is called the fundamental solution.
	Although there is no existence theorem for solutions to the generalized Brahmagupta-Pell 
	equation~\eqref{eq:gen-Pell} when~$n\geq 1$, there
	is a powerful theorem~\cite{Conrad1,Conrad2} which 
	states that the fundamental solution 
	is bounded according to
	\begin{equation}
	\label{eq:sol-bound}
	x^2 \; \leq \; 2n\bigl(\sqrt{D+4} + \sqrt{D}\bigr) \, , \qquad y^2 \; \leq \; 2n\biggl(\frac{\sqrt{D+4} + \sqrt{D}}{D}\biggr) \, .
	\end{equation}
	These bounds are very restrictive, and in particular, they imply that the set of dyonic-BSM orbits is finite, 
	with a representative whose entries are strictly less than~$m+1$.
\end{enumerate}
	
	We have thus fully characterized the dyonic-BSM walls and explained how the infinite orbit of metamorphic duals 
	defined in Definition~\ref{def:dyonic-meta} is in one-to-one correspondence with the infinite orbit of solutions to 
	the (generalized) Brahmagupta-Pell equation. Crucial to being able to solve the problem was the fact that~$U$-duality 
	fixes~$\ell_\gamma = \sqrt{|\Delta| + 4}$ to be an integer. 

\vspace{0.2cm}

	It is instructive to restate the solutions of the Brahmagupta-Pell equation in the language of algebraic 
	number theory~\cite{cohen2008number, Conrad1, Conrad2}. Consider the real quadratic field~$K=\mathbb{Q}(\sqrt{D})$ 
	where $D>0$ is not a square. We will denote elements of this field either as~$(x,y)$ or as~$x+\sqrt{D} \, y$
	with $x,y \in \mathbb{Q}$. The norm of this element is~$N(x,y)=x^2-Dy^2$. 
	By a change of variables~(\cite{cohen2008number}, p.~355) 
	one can bring the basic Brahmagupta-Pell equation to the form 
	\begin{equation}
	\label{eq:Pell1}
	x^2 - D\,y^2 \= 1 \, .
	\end{equation}
	Thus we are looking for elements of norm~1. By multiplicativity of the norm, it is clear that if~$u = x+\sqrt{D} \,y$ is a 
	solution of~\eqref{eq:Pell1}, then so is~$u^k$ for~$k \in\mathbb{Z}$. (It is easy to check, by rationalizing 
	denominators and using~\eqref{eq:Pell1}, that negative powers are also good solutions.) The problem of finding all solutions 
	to the basic Brahmagupta-Pell equation is then precisely the problem of finding all units in the order~$\IZ[\sqrt{D}]$. 
	Denoting the discriminant of~$K$ as~$D_0$, we have~$D= D_0 f^2$. When~$f=1$ the solution to this problem is given
	by Dirichlet's unit theorem, that all solutions are generated as powers of the fundamental unit~$a_0+\sqrt{D} \, b_0$
	which is the unit with least positive~$b_0$. In fact this statement holds even when~$f>1$ (one can use a proof by induction 
	on the number of prime powers of~$f$). By changing variables back, we obtain the formulation~\eqref{eq:sol-Pell}.
	
	For the general case we have, after the change of variables mentioned above,
	\begin{equation}
	\label{eq:gen-Pell1}
	x^2 - D\,y^2 \= - n \, ,
	\end{equation}
	with~$n \in \IZ$ (our interest in this paper is in~$n \ge -1$ with~$n\neq 0$). In this case, we are looking for elements in~$K$ with norm~$-n$. 
	Once again it is easy to see, by the multiplicativity of the norm, that given one such element~$(x_0,y_0)$ with~$N(x_0,y_0)=-n$
	we have an infinite number of elements with the same norm generated by multiplying~$x_0+\sqrt{D} \, y_0$ by arbitrary powers of a unit. 
	The main theorem in this case says that there are a finite number of fundamental solutions~$(x_0,y_0)$ which lie in the 
	range~$|x_0| \le \sqrt{|n| u}$, $|y_0| \le \sqrt{|n|u/D}$, where~$u$ is any unit satisfying~$u>1$ and~$N(u)=1$. This last condition, translated 
	back to our variables is presented in~\eqref{eq:sol-bound}.

	\subsection*{Summary}
	
	For convenience, we now summarize the results of Sections~\ref{sec:negwomet} and~\ref{sec:metamorphosis} where 
	we have characterized all the walls contributing to the negative discriminant degeneracies~\eqref{eq:degagain}. There 
	are two notable points. 
	
	\vspace{0.1cm}
	
	\ndt \emph{Finiteness.} Examining the various cases (with and without BSM), we see that the set of relevant walls is \emph{finite}, and in fact 
	small in the following sense: it consists of S-walls with entries bounded (in absolute value) from above by~$m+1$, where the  
	upper bound is optimal for certain values of the original charges $(n,\ell,m)$, as evidenced e.g., in \eqref{eq:walls-magBSM-2}. Moreover, 
	all walls are such that~$|p/r| \leq 1$ and~$|q/s| \leq 1$ and so their endpoints lie in the strip~$\Sigma_1 \in [-1,+1]$ in 
	the~$\Sigma$ moduli space. 
	
	\vspace{0.1cm}

	\ndt \emph{Structure.} 
	The structure of walls of electric and magnetic BSM form an orbit generated by the corresponding BSM transformation which 
	acts as~$\IZ/2\IZ$. 
	The dyonic bound state metamorphosis has a very interesting characterization. 
	We already knew that there is an infinite set of different-looking gravitational configurations, all with the same total dyonic 
	charge invariants with negative discriminant, which are related by $U$-duality to each other. 
	The phenomenon of  BSM~\cite{Sen2011,Chowdhury:2012jq} says that these configurations actually should 
	not be considered as distinct physical configurations; rather, they must be identified as different avatars of the same physical entity. 
	Our considerations in this section show that the following sets are in one-to-one correspondence:
	\begin{enumerate}
	\item[1.] The orbit of dyonic metamorphic duals with charges~$(n,\ell,m)$ with~$\ell^2 - 4 m n = D >0$, and 
	\item[2a.] Solutions to the generalized Brahmagupta-Pell equation~$x^2-D \, y^2 = -4n$ 
	with fundamental solution~$(x_0,y_0)$, and the conditions $\sqrt{D+4} \in \mathbb{Z}$ 
	and~$x_0 \equiv \ell\,y_0  \, (\text{mod} \, 2n)$,  or, equivalently, 
	\item[2b.] The set of algebraic integers of norm~$-n$ in the order~$\IZ[\sqrt{D}]$ of the real quadratic field~$K=\mathbb{Q}(\sqrt{D})$
	with~$\frac12(\sqrt{D+4}+\sqrt{D})$ as the fundamental unit, as well as the second congruence condition above. 
	\end{enumerate}
	Moreover, these sets are isomorphic to each other (and to~$\IZ$) as an additive group. The generators of the groups are given, respectively, by
	the generators in Definition~\ref{def:dyonic-meta} (modulo~$\tilde{S}$), and by multiplication in~$K$ by the fundamental unit.

	\section{The exact black hole formula and experimental checks}
	\label{sec:expchecks}
	
	In this section we assemble all the elements of the previous sections into one formula, and 
	then we present checks of this formula.
	So far we have seen that the walls of marginal stability contributing to the polar coefficients according to
	Equation~\eqref{eq:degagain} are a subset of~$PSL(2,\IZ)$. 
	Bound state metamorphosis is an equivalence relation on the set~$PSL(2,\IZ)$ and therefore divides it into orbits~$\mu$.
	We denote the set of orbits as (cf.~Equation~\eqref{defgset})
	\be \label{defgbsm}
	\gbsm \= PSL(2,\IZ)/\text{BSM} \,. 
	\ee
	The orbits are of the following three types:
	\begin{enumerate}
		\item Walls for which there are no metamorphosis. 
		These walls have no duals and therefore lie in an orbit of length 1.
		\item Walls with either electric or magnetic metamorphosis, for which there is exactly one dual 
		with the same contribution to the index. These walls lie in an orbit of length 2.
		\item Walls with dyonic metamorphosis for which there are an infinite number of dual walls. 
		These walls lie in an orbit of infinite length with a group structure isomorphic to~$\IZ$.
	\end{enumerate}
	We have seen that the contribution of an orbit to the index
	is one if all its elements contribute, and zero otherwise. This can be encoded in the following 
	function defined on orbits (recalling Equations~\eqref{eq:thetastep}, \eqref{eq:thetastepR} for the 
	definition of the~$\theta$ function), 
	\be  \label{eq:new-theta}
	\Theta (\mu) \= \prod_{\gamma \in \mu} \ \theta(\gamma, \mathcal R) \,, \qquad \mu \in \gbsm \,,
	\ee
	which can be lifted to a function on the space of walls as (using the same notation)
	\be  \label{eq:new-thetagamma}
	\Theta (\gamma) \= \Theta (\mu) \,, \quad \gamma \in \mu \,.
	\ee
	We now have all the elements to present the full formula for the polar degeneracies~\eqref{eq:polar-def}. 
	in the range~$0 \leq \ell \leq m$ we have:\footnote{Recall from the discussion in Section~\ref{sec:exact-entropy} 
	that the~$\widetilde{c}_m(n,\ell)$ are 
	coefficients of a (mock) Jacobi form of index~$m$, and as such they are a function 
	of~$\Delta = 4mn - \ell^2$ and~$\ell$ mod~$(2m)$. Recall also that the modular properties 
	imply that this can further reduced to~$0\leq \ell \leq m$.}
	\be \label{eq:degwmeta}
	\widetilde{c}_m(n,\ell)  \= \frac12 
	\sum_{\substack{\gamma \in \gbsm}}\,(-1)^{\ell_\gamma+1}\,\Theta(\gamma)\,\vert\ell_\gamma\vert\,d(m_\gamma)\,d(n_\gamma) \, .
	\ee

The sum in the above formula runs over~$\gbsm$ which was defined as a coset of~$PSL(2,\IZ)$ in~\eqref{defgbsm}. 
We can also write the formula so that the sum runs over a smaller set, by using the symmetry of the theory and 
making a choice in obtaining the coset representative. Such a choice makes the formula more explicit and is useful 
for computations. We had already illustrated the idea of two such formulas in our preliminary discussion in Section~\ref{sec:towards}
where we didn't take BSM into account. 
In that case we had a sum over~$PSL(2,\IZ)$ in~\eqref{eq:deg1T} but by using the involution~$\widetilde S$, 
we could equivalently write it as a sum over~$\Gamma_S^+$ as in~\eqref{eq:deg}
with an additional factor of~$\frac12$. 
When BSM is present this discussion needs to be modified. When we have pure electric or pure magnetic BSM, 
the orbits of length~2 discussed in Case~2 above are actually part of a full symmetry orbit of length~4 via the 
following identifications:
\be \label{metaSid}
(n_\gamma,\ell_\gamma,m_\gamma) \= 
(N \neq -1,L>0,-1) \stackrel{\widetilde \gamma_\text{m}}{\longmapsto} 
(N,-L,-1) \stackrel{\widetilde S}{\longmapsto} 
(-1,L,N) \stackrel{\widetilde  \gamma_\text{e}}{\longmapsto} 
( -1,-L,N) \,.
\ee
In particular, the combined symmetry of BSM and~$\widetilde S$ implies an identification of 
two walls in~$\Gamma^+_S$, namely the first and the third of the above sequence.

	By definition, a given wall belongs to one and only one symmetry orbit, and, as we have shown in the previous sections, 
	when~$0\leq \ell \leq m$, every orbit has a non-empty intersection with the set
	\be  \label{Gsmp1}
	\left \lbrace \wmat{p}{q}{r}{s} \subset \Gamma_S^+ \; \bigg\vert \; \abs p , \abs q, \abs r, \abs s \leq m +1  \right \rbrace \,. 
	\ee
	The set~$\gset$ is defined as the set of representative of orbits of BSM combined with~$\widetilde S$ 
	in this finite set having a non-zero value of~$\Theta$.  With this definition we rewrite the 
	degeneracies of negative discriminant states for~$0\leq \ell \leq m$ as
	\be \label{eq:degwmetaW}
	\widetilde{c}_m(n,\ell)  \=  
	\sum_{\substack{\gamma \in \gset}}\,(-1)^{\ell_\gamma+1}\, \vert\ell_\gamma\vert\,d(m_\gamma)\,d(n_\gamma) \, .
	\ee

We now present checks of this formula. 
Table~\ref{tab:examples} lists negative discriminant states for $m=1,\dots, 5$. 
Column~I lists the charge invariants~$(m,n,\ell)$ with discriminant~$\Delta=4mn-\ell^2<0$. 
Note that we have changed the order of the charge invariants here with respect to the rest of the paper.
The organization is as follows: we first list~$m$ which is the index of the mock Jacobi form, followed by~$n$ and~$\ell$.
The range of~$\ell$ is~$m, \dots, 0$ which covers all the cases as explained in Section~\ref{sec:exact-entropy},
and~$n$ runs over all values that produce a negative discriminant with non-zero coefficient for~$\psi^\text{F}_m$. 
Column~II lists the walls~$\gamma \in \gset$ which contribute to the degeneracy of states with these charge invariants.
These walls have been discussed in Sections~\ref{sec:negwomet} and~\ref{sec:metamorphosis}.
The walls in Column II, as stated in \eqref{eq:wallcategory}, are semicircles from $q/s \rightarrow p/r$, 
where $\biggl( \, \begin{matrix} p & q \\ r & s \end{matrix} \, \biggr)$ is a \slz matrix. 
Column~III shows the transformed charges at the wall~$\gamma$. In the~$\gamma$-transformed S-duality frame 
the decay products are~$(Q_\gamma,0)$ and~$(0,P_\gamma)$ with invariants~$(m_\gamma,n_\gamma,\ell_\gamma)$
(cf.~\eqref{eq:chargebreak} and \eqref{eq:transformedcharges}). 
Cases with either~$m_\gamma=-1$ or~$n_\gamma=-1$ correspond, respectively, to magnetic and electric metamorphosis. 
An example is~$(m,n, \ell)=(1,-1,0)$ where we have the walls~$(m_\gamma,n_\gamma, \ell_\gamma)=(-1,0,2)$ and~$(-1,0,-2)$
(with contribution~48) are identified due to magnetic BSM as shown in the table.  
According to the discussion around~\eqref{metaSid}, we also need to identify these walls with~$(0,-1,2)$ and~$(0,-1,-2)$ (which 
we have not explicitly displayed in the table). In a similar manner, we have only displayed pure magnetic, but not pure 
electric BSM phenomena in the table. 
The cases with~$m_\gamma=n_\gamma=-1$ correspond to dyonic metamorphosis, in which 
case an infinite number of walls must be identified (see Section~\ref{sec:dyonic-BSM}). An example is~$(m,n, \ell)=(1,-1,1)$. 
Here we have exhibited four walls corresponding to the 
first two solutions to the Brahmagupta-Pell equation~\eqref{eq:Pell} (the trivial solution with~$p=0$ and the first 
non-trivial one with~$p=1$) 
and their respective first metamorphic duals ($\t\gamma$ built with~$M_1$ in Definition~\ref{def:dyonic-meta}). 
Column~IV is the index contribution of each wall and Column~V is the total index~$\widetilde{c}_m(n,\ell)$ according to our 
formula~\eqref{eq:degwmetaW}. This agrees with with a direct calculation of the polar degeneracies of~$\psi_m^\text{F}(\tau,z)$.

We have run similar checks up to~$m =30$ which includes~$1650$ polar coefficients and find 
perfect agreement between the formula~\eqref{eq:degwmetaW} and the polar coefficients of~$\psi_m^\text{F}$. 
In Appendix~\ref{sec:furtherchecks} we show more values of the total index (corresponding to Columns~I and~V 
of Table~\ref{tab:examples}). 

In~\cite{Murthy:2015zzy}, two of the authors of this paper compared the degeneracies of negative discriminant states 
of~$\psi^\text{F}_m$ for $m = 1, \cdots, 7$ and the result of the formula~\eqref{eq:sugra-wrong-polar} which, 
as explained in Section~\ref{sec:localization}, was found by a combination of physical and mathematical ideas. 
It was observed in that paper that~\eqref{eq:sugra-wrong-polar} agrees with the data from~$\psi^\text{F}_m$ 
for many but not all the cases of negative discriminant states. The main formula of this paper~\eqref{eq:degwmeta}
or, equivalently, \eqref{eq:degwmetaW}, removes these discrepancies completely. 
As an example, the approximate formula~\eqref{eq:sugra-wrong-polar} gives~$6400$ for the charges~$(m,n, \ell)=(3,-1,0)$,
while the correct answer is~$6404$. In the table we explicitly see the correction of~$4$ coming from the subleading contribution 
of dyonic-BSM walls.

A natural extension of this technique is to extend it to the cases of~$\mathbb Z_N$ CHL orbifolds. 
We have preliminary data which we hope to analyze in the near future, thereby generalizing the formula 
presented here. Here, the comparison against coefficients of the inverse of the orbifolded Igusa cusp form, $\Phi_k$ 
would involve the Rademacher technique for mock Jacobi forms on congruence subgroups of~$SL(2,\Z)$. 
This would be a generalization of the analysis for \hbps black holes under CHL orbifolds as studied 
in~\cite{Nally:2018raf} and the techniques studied in~\cite{Duncan:2009sq, Cheng:2012qc}.  

\vspace{1cm}

\setlength{\extrarowheight}{0.08cm}

\begin{longtable}{|C|C|C|C|C|}
	\hline
	\text{\textbf{I.}} & \text{\textbf{II.}}  & \text{\textbf{III.}}
	&  \text{\textbf{IV.}} & \text{\textbf{V.}}  \\
	\text{\textbf{Charges}} & \text{\textbf{Walls }}  & \text{\textbf{Transf.~charges}}
	&  \text{\textbf{Contribution}} & \text{\textbf{Net Index}}  \\
	(m , n, \ell\,; \Delta) & \gamma = \hormat{p}{q}{r}{s} 
	&  (m_\gamma, \ n_\gamma, \ \ell_\gamma )  & \text{\textbf{from wall}} &  \widetilde{c}_m(n,\ell) \\
	\hline
	\multirow{5}*{$(1,-1,1\,;-5)$} & \hormat{0}{-1}{1}{2} & (-1,\ -1,\ 3) & \multirow{5}*{3} & \multirow{5}*{3} \\
	& \hormat{0}{-1}{1}{-1} & (-1,\ -1,\ -3) & & \\
	& \hormat{1}{0}{1}{1} & (-1,\ -1,\ 3) & & \\
	& \hormat{1}{-3}{1}{-2} & (-1,\ -1,\ -3) & & \\[-1.5mm]
	& \vdots & \vdots & & \\
	\hline
	\multirow{2}*{$(1,-1,0\,;-4)$} & \hormat{0}{-1}{1}{1} & (-1,\ 0, \ 2) & \multirow{2}*{48} & \multirow{2}*{48} \\
	& \hormat{0}{-1}{1}{-1} & (-1,\ 0, \ -2) & & \\
	\hline
	\multirow{3}*{$(1, 0, 1\,; -1)$} & \hormat{1}{0}{1}{1} & ( 0, \, 0, \ 1) & 576 & \multirow{3}*{600} \\
	\cline{2-4}
	& \hormat{1}{0}{2}{1} & (-1, \, 0, \ 1) & \multirow{2}*{24} & \\
	& \hormat{1}{-1}{2}{-1} & (-1, \, 0, \ -1) & &  \\
	\hline
	\multirow{5}*{$(2,-1,2\,;-12)$} & \hormat{0}{-1}{1}{3} & (-1, \ -1, \ 4) & \multirow{5}*{4} & \multirow{5}*{4} \\
	& \hormat{0}{-1}{1}{-1} & (-1, \ -1, \ -4)  & & \\
	& \hormat{1}{0}{1}{1} & (-1, \ -1, \ 4) & & \\
	& \hormat{1}{-4}{1}{-3} & (-1, \ -1, \ -4)  & & \\[-1.5mm]
	& \vdots & \vdots & & \\
	\hline
	\multirow{2}*{$(2,-1,1\,;-9)$} & \hormat{0}{-1}{1}{2} & (-1, \ 0, \ 3) & \multirow{2}*{72} & \multirow{2}*{72} \\
	& \hormat{0}{-1}{1}{-1} & (-1, \ 0, \ -3) & & \\
	\hline
	\multirow{2}*{$(2,-1,0\,;-8)$} & \hormat{0}{-1}{1}{1} & (-1, \ 1, \ 2) & \multirow{2}*{648} & \multirow{2}*{648} \\
	& \hormat{0}{-1}{1}{-1} & (-1, \ 1, \ -2) & & \\
	\hline
	(2,0,2\,;-4) & \hormat{1}{0}{1}{1} & (0, \ 0, \ 2) & 1152 & 1152 \\
	\hline
	\multirow{4}*{$(2, 0, 1\,; -1)$} & \hormat{1}{0}{1}{1} & ( 1, \, 0, \ 1) & 7776 & \multirow{4}*{8376} \\
	\cline{2-4}
	& \hormat{1}{0}{2}{1} & (0, \, 0, \ 1) & 576 & \\
	\cline{2-4}
	& \hormat{1}{0}{3}{1} & (-1, \, 0, \ 1) & \multirow{2}*{24} &  \\
	& \hormat{1}{-1}{3}{-2} & (-1, \, 0, \ -1) & &  \\
	\hline
	\multirow{5}*{$(3,-1,3\,;-21)$} & \hormat{0}{-1}{1}{4} & (-1,\ -1,\ 5) & \multirow{5}*{5} & \multirow{5}*{5} \\
	& \hormat{0}{-1}{1}{-1} & (-1,\ -1,\ -5) & & \\
	& \hormat{1}{0}{1}{1} & (-1,\ -1,\ 5) & & \\
	& \hormat{1}{-5}{1}{-4} & (-1,\ -1,\ -5) & & \\[-1.5mm]
	& \vdots & \vdots & & \\
	\hline
	\multirow{2}*{$(3,-1,2\,;-16)$} & \hormat{0}{-1}{1}{3} & (-1, \ 0,\ 4) & \multirow{2}*{96} & \multirow{2}*{96} \\
	& \hormat{0}{-1}{1}{-1} & (-1, \ 0,\ -4) & & \\
	\hline \pagebreak \hline
	\multirow{2}*{$(3,-1,1\,;-13)$} & \hormat{0}{-1}{1}{2} & (-1, \ 1,\ 3) & \multirow{2}*{972} & \multirow{2}*{972} \\
	& \hormat{0}{-1}{1}{-1} & (-1, \ 1,\ -3) & & \\
	\hline
	\multirow{7}*{$(3, -1, 0\,; -12)$} & \hormat{0}{-1}{1}{1} & ( -1, \, 2, \ 2) & \multirow{2}*{6400} & \multirow{7}*{6404}  \\
	& \hormat{0}{-1}{1}{-1} &  ( -1, \, 2, \ -2) & & \\
	\cline{2-4}
	& \hormat{0}{-1}{1}{2} & (-1, \, -1, \ 4) & \multirow{5}*{4} & \\
	& \hormat{0}{-1}{1}{-2} & (-1,\, -1, \ -4) & & \\
	& \hormat{1}{0}{2}{1} & (-1, \, -1, \ 4) & &  \\
	& \hormat{1}{-4}{2}{-7} & (-1, \, -1, \ -4) & &  \\[-1.5mm]
	& \vdots & \vdots & & \\
	\hline
	(3,0,3\,;-9) & \hormat{1}{0}{1}{1} & (0, \ 0, \, 3) & 1728 & 1728 \\
	\hline
	\multirow{3}*{$(3,0,2\,;-4)$} & \hormat{1}{0}{1}{1} & (1, \ 0, \ 2) & 15552 & \multirow{3}*{15600} \\
	\cline{2-4}
	& \hormat{1}{0}{2}{1} & (-1, \ 0, \ 2) & \multirow{2}*{48} & \\
	& \hormat{1}{-2}{2}{-3} & (-1, \ 0, \ -2) & & \\
	\hline
	\multirow{5}*{$(3, 0, 1\,; -1)$} & \hormat{1}{0}{1}{1} & ( 2, \, 0, \ 1) & 76800 & \multirow{5}*{85176}  \\
	\cline{2-4}
	& \hormat{1}{0}{2}{1} & (1, \, 0, \ 1) & 7776 &  \\
	\cline{2-4}
	& \hormat{1}{0}{3}{1} & (0, \, 0, \ 1) & 576 & \\
	\cline{2-4}
	& \hormat{1}{0}{4}{1} & (-1, \, 0, \ 1) & \multirow{2}*{24} &  \\
	& \hormat{1}{-1}{4}{-3} & (-1, \, 0, \ -1) & &  \\
	\hline
	\multirow{5}*{$(4,-1,4\,;-32)$} & \hormat{0}{-1}{1}{5} & (-1,\ -1,\ 6) & \multirow{5}*{6} & \multirow{5}*{6} \\
	& \hormat{0}{-1}{1}{-1} & (-1,\ -1,\ -6) & & \\
	& \hormat{1}{0}{1}{11} & (-1,\ -1,\ 6) & & \\
	& \hormat{1}{-6}{1}{-5} & (-1,\ -1,\ -6) & & \\[-1.5mm]
	& \vdots & \vdots & & \\
	\hline
	\multirow{2}*{$(4,-1,3\,;-25)$} & \hormat{0}{-1}{1}{4} & (-1, \ 0,\ 5) & \multirow{2}*{120} & \multirow{2}*{120} \\
	& \hormat{0}{-1}{1}{-1} & (-1, \ 0,\ -5) & & \\
	\hline
	\multirow{2}*{$(4,-1,2\,;-20)$} & \hormat{0}{-1}{1}{3} & (-1, \ 1,\ 4) & \multirow{2}*{1296} & \multirow{2}*{1296} \\
	& \hormat{0}{-1}{1}{-1} & (-1, \ 1,\ -4) & & \\
	\hline
	\multirow{2}*{$(4,-1,1\,;-25)$} & \hormat{0}{-1}{1}{2} & (-1, \ 2,\ 3) & \multirow{2}*{9600} & \multirow{2}*{9600} \\
	& \hormat{0}{-1}{1}{-1} & (-1, \ 2,\ -3) & & \\
	\hline
	\multirow{4}*{$(4, -1, 0\,; -16)$} & \hormat{0}{-1}{1}{1} & (-1, \, 3, \ 2) & \multirow{2}*{51300} & \multirow{4}*{51396}  \\
	& \hormat{0}{-1}{1}{-1} & (-1, \, 3, \ -2) & & \\
	\cline{2-4}
	& \hormat{0}{-1}{1}{2} & (-1, \ 0, \ 4) & \multirow{2}*{96} &  \\
	& \hormat{0}{-1}{1}{-2} & (-1, \ 0, \ -4) & &  \\
	\hline
	(4,0,4\,;-16) & \hormat{1}{0}{1}{1} & (0, \ 0, \ 4) & 2304 & 2304 \\
	\hline
	(4,0,3\,;-9) & \hormat{1}{0}{1}{1} & (1, \ 0, \ 3) & 23328 & 23328 \\
	\hline
	\multirow{2}*{$(4,0,2\,;-4)$} & \hormat{1}{0}{1}{1} & (2, \ 0, \ 2) & 153600 & \multirow{2}*{154752} \\
	\cline{2-4}
	& \hormat{1}{0}{2}{1} & (0, \ 0, \ 2) & 1152 & \\
	\hline
	\multirow{6}*{$(4, 0,1\,; -1)$} & \hormat{1}{0}{1}{1} & (3, \ 0, \ 1) & 615600 & \multirow{6}*{700776}  \\
	\cline{2-4}
	& \hormat{1}{0}{2}{1} & (2, \ 0, \ 1) & 76800 &  \\
	\cline{2-4}
	& \hormat{1}{0}{3}{1} & (1, \ 0, \ 1) & 7776 & \\
	\cline{2-4}
	& \hormat{1}{0}{4}{1} & (0, \ 0, \, 1) & 576 &  \\
	\cline{2-4}
	& \hormat{1}{0}{5}{1} & (-1, \, 0, \ 1) & \multirow{2}*{24} &  \\
	& \hormat{1}{-1}{5}{-4} & (-1, \, 0, \ -1) & &  \\
	\hline
	\multirow{5}*{$(5,-1,5\,;-45)$} & \hormat{0}{-1}{1}{6} & (-1,\ -1,\ 7) & \multirow{5}*{7} & \multirow{5}*{7} \\
	& \hormat{0}{-1}{1}{-1} & (-1,\ -1,\ -7) & & \\
	& \hormat{1}{0}{1}{1} & (-1,\ -1,\ 7) & & \\
	& \hormat{1}{-7}{1}{-6} & (-1,\ -1,\ -7) & & \\[-1.5mm]
	& \vdots & \vdots & & \\
	\hline
	\multirow{2}*{$(5,-1,4\,;-36)$} & \hormat{0}{-1}{1}{5} & (-1, \ 0,\ 6) & \multirow{2}*{144} & \multirow{2}*{144} \\
	& \hormat{0}{-1}{1}{-1} & (-1, \ 0,\ -6) & & \\
	\hline
	\multirow{2}*{$(5,-1,3\,;-29)$} & \hormat{0}{-1}{1}{4} & (-1, \ 1,\ 5) & \multirow{2}*{1620} & \multirow{2}*{1620} \\
	& \hormat{0}{-1}{1}{-1} & (-1, \ 1,\ -5) & & \\
	\hline
	\multirow{2}*{$(5,-1,2\,;-24)$} & \hormat{0}{-1}{1}{3} & (-1, \ 2,\ 4) & \multirow{2}*{12800} & \multirow{2}*{12800} \\
	& \hormat{0}{-1}{1}{-1} & (-1, \ 2,\ -4) & & \\
	\hline
	\multirow{7}*{$(5,-1,1\,;-21)$} & \hormat{0}{-1}{1}{2} & (-1, \ 3, \ 3) & \multirow{2}*{76950} & \multirow{7}*{76955} \\
	& \hormat{0}{-1}{1}{-1} & (-1, \ 3,\ -3) & & \\
	\cline{2-4}
	& \hormat{0}{-1}{1}{3} & (-1, \ -1, \ 5) & \multirow{5}*{5} & \\
	& \hormat{0}{-1}{1}{-2} & (-1, \ -1, \ -5) & & \\
	& \hormat{1}{0}{2}{1} & (-1, \ -1, \ 5) & & \\
	& \hormat{1}{-5}{2}{-9} & (-1, \ -1, \ -5) & & \\[-1.5mm]
	& \vdots & \vdots & & \\
	\hline
	\multirow{4}*{$(5,-1,0\,;-20)$} & \hormat{0}{-1}{1}{1} & (-1, \ 4, \ 2) & \multirow{2}*{352512} & \multirow{4}*{353808} \\
	& \hormat{0}{-1}{1}{-1} & (-1, \ 4, \ -2) & & \\
	\cline{2-4}
	& \hormat{0}{-1}{1}{2} & (-1, \ 1, \ 4) & \multirow{2}*{1296} & \\
	& \hormat{0}{-1}{1}{-2} & (-1, \ 1, \ -4) & & \\
	\hline
	(5,0,5\,;-25) & \hormat{1}{0}{1}{1} & (0,\ 0,\ 5) & 2880 & 2880 \\
	\hline
	(5,0,4\,;-16) & \hormat{1}{0}{1}{1} & (1,\ 0,\ 4) & 31104 & 31104 \\
	\hline \pagebreak \hline
	\multirow{3}*{$(5,0,3\,;-9)$} & \hormat{1}{0}{1}{1} & (2,\ 0,\ 3) & 230400 & \multirow{3}*{230472} \\
	\cline{2-4}
	& \hormat{1}{0}{2}{1} & (-1,\ 0,\ 3) & \multirow{2}*{72} & \\
	& \hormat{1}{-3}{2}{-5} & (-1,\ 0,\ -3) & & \\
	\hline	
	\multirow{4}*{$(5,0,2\,;-4)$} & \hormat{1}{0}{1}{1} & (3,\ 0,\ 2) & 1231200 & \multirow{4}*{1246800} \\
	\cline{2-4}
	& \hormat{1}{0}{2}{1} & (1,\ 0,\ 2) & 15552 & \\
	\cline{2-4}
	& \hormat{1}{0}{3}{1} & (-1,\ 0,\ 2) & \multirow{2}*{48} & \\
	& \hormat{1}{-2}{3}{-5} & (-1,\ 0,\ -2) & & \\
	\hline
	\multirow{7}*{$(5,0,1\,; -1)$} & \hormat{1}{0}{1}{1} & (4, \ 0, \ 1) & 4230144 & \multirow{7}*{4930920}  \\
	\cline{2-4}
	& \hormat{1}{0}{2}{1} & (3, \ 0, \ 1) & 615600 &  \\
	\cline{2-4}
	& \hormat{1}{0}{3}{1} & (2, \, 0, \ 1) & 76800 &  \\
	\cline{2-4}
	& \hormat{1}{0}{4}{1} & (1, \, 0, \ 1) & 7776 & \\
	\cline{2-4}
	& \hormat{1}{0}{5}{1} & (0, \, 0, \ 1) & 576 & \\
	\cline{2-4}
	& \hormat{1}{0}{6}{1} & (-1, \, 0, \ 1) & \multirow{2}*{24} &  \\
	& \hormat{1}{-1}{6}{-5} & (-1, \, 0, \ -1) & &  \\
	\hline
	\multirow{8}*{$(5,1,5\,; -5)$} & \hormat{1}{0}{1}{1} & (1, \ 1, \ 3) & 314928 & \multirow{8}*{315255} \\
	\cline{2-4}
	& \hormat{1}{0}{2}{1} & (-1, \ 1, \ 1)& \multirow{2}*{324} &  \\
	& \hormat{1}{-1}{2}{-1}& (-1, \ 1, \ -1) & &   \\
	\cline{2-4}
	& \hormat{1}{1}{2}{3} & (-1, \ -1, \ 3) & \multirow{5}*{3} & \\
	&\hormat{1}{-2}{2}{-3} & (-1 \ -1, \ -3) & & \\
	& \hormat{2}{1}{3}{2} & (-1, \ -1, \ 3) &  &\\
	& \hormat{2}{-5}{3}{-7} & (-1, \ -1, \ -3)  &  & \\[-1.5mm]
	& \vdots & \vdots & & \\
	\hline
	\caption{Table of examples detailing original charge vector, contributing walls, associated charge breakdowns 
		at walls and index contributions.}
	\label{tab:examples}
\end{longtable}

\section*{Acknowledgements}

We would like to thank Gabriel Cardoso, Alejandra Castro, Atish Dabholkar, Justin David, Shamit Kachru, Richard Nally, 
Suresh Nampuri, James Newton, Brandon Rayhaun, Ashoke Sen, Dimitri Skliros, Alberto Zaffaroni, and especially 
Francesca Ferrari for valuable discussions. The work of A.~C., A.~K. and T.~W. is supported by the Austrian Science 
Fund (FWF): P 285552 and a Scientific \& Technological Cooperation between Austria and India: Project No.~IN 27/2018. 
The work of S.~M.~is supported by the ERC Consolidator Grant N.~681908, ``Quantum black holes: A microscopic 
window into the microstructure of gravity'', and by the STFC grant ST/P000258/1. The work of V.~R.~is supported in 
part by INFN, the KU Leuven C1 grant ZKD1118 C16/16/005, and by the ERC Starting Grant 637844-HBQFTNCER. 
The work of A.~K. is further supported by  the FWF project: W1252-N27, and the Austrian Marshall Plan fellowship.

\ndt A.~C., A.~K., S.~M., V.~R. acknowledge the \textit{School and Workshop on Supersymmetric Localization and 
Holography: Black Hole Entropy and Wilson Loops} held in 2018 at ICTP, Trieste for creating a stimulating environment 
where this project was first discussed. Part of the research was done when the authors attended a workshop on 
moonshine at the Erwin-Schr\"odinger-Institut (ESI) in Vienna in September 2018. We thank the ESI for support 
and providing a stimulating work environment. A.~K. and S.~M. also thank the workshop on ``\textit{Number theory 
and quantum physics}" organized by the Bethe Centre for Theoretical Physics, Uni.~Bonn, for creating a stimulating 
environment where part of this work was done.

\appendix

\section{(Mock) Jacobi forms and the Rademacher expansion}
\label{app:Jac}

A Jacobi form \cite{eichler1985theory} of weight~$w$ and index~$m$ with respect to the fundamental modular group~$SL(2,\mathbb{Z})$ is a holomorphic function~$\varphi(\tau,z):\mathbb{H}\times \mathbb{C} \rightarrow \mathbb{C}$ (where~$\mathbb{H}$ is the upper half-plane) which satisfies two functional equations,
\begin{align}
\label{eq:app-mod-transform}  
\varphi\Bigl(\frac{a\tau+b}{c\tau+d},\frac{z}{c\tau+d}\Bigr) \= 
&\, (c\tau+d)^w\,e^{\frac{2\pi \i mc z^2}{c\tau+d}}\,\varphi(\tau,z)  \qquad \forall 
\quad \wmat{a}{b}{c}{d} \in SL(2,\mathbb{Z}) \, , \\[2mm]
\label{eq:app-elliptic-transfo}
\varphi(\tau, z+\lambda\tau+\mu) \=&\, e^{-2\pi \i m(\lambda^2 \tau + 2 \lambda z)} \varphi(\tau, z) 
\qquad \quad \; \forall \quad \lambda,\,\mu \in \mathbb{Z} \, . 
\vspace{.3cm}
\end{align}
Due to the periodicity properties encoded in the above equations, $\varphi(\tau,z)$ has a Fourier expansion 
\begin{equation}
\varphi(\tau,z) \= \sum_{n,\ell\in\mathbb{Z}}\,c(n,\ell)\,q^n\,\zeta^\ell \, ,
\end{equation}
where~$q\!\!\defeq\!\!e^{2\pi \i \tau}$ and~$\zeta\!\!\defeq\!\!e^{2\pi \i z}$.
Owing to \eqref{eq:app-elliptic-transfo}, a Jacobi form of weight $w$ and index $m$ can be 
decomposed into a vector-valued modular form of weight $w-1/2$ via its theta-decomposition
\begin{equation}
\label{eq:app-theta-decomp}
\varphi(\tau,z) \= \sum_{\ell \in \mathbb{Z}/2m\mathbb{Z}} h_\ell(\tau)\,\vartheta_{m,\ell}(\tau,z) \, ,
\end{equation} 
where the components $h_\ell(\tau)$ take the form 
\begin{equation}
h_\ell(\tau) \= \sum_\Delta\,c(n,\ell)\,q^{\Delta/4m} \, , \quad \Delta = 4mn - \ell^2 \, .
\end{equation}
The $\vartheta_{m,\ell}(\tau,z)$ denote the standard weight 1/2, index $m$ theta functions, 
\begin{equation}
\label{eq:app-vartheta}
\vartheta_{m,\ell}(\tau,z) \defeq \sum\limits_{\substack{r\in\mathbb{Z}\\ r\equiv \ell \, \text{mod}\, 2m}}\,q^{r^2/4m}\,\zeta^{r} \, .
\end{equation}

The Rademacher expansion provides a powerful tool to reconstruct the Fourier coefficients of Jacobi forms.
We illustrate it here for weights $w+1/2$ smaller or equal to zero, modular group $SL(2,\mathbb{Z})$ and generic multiplier system $\psi(\gamma)$.  
Once the modular properties of the modular forms $h_\ell(\tau)$ are known, the only extra ingredient required to determine the Fourier coefficients $c(n,\ell)$ with $\Delta \geq 0$ are the \emph{polar coefficients}, i.e. the terms with negative powers of $q$ in the Fourier expansion
\begin{equation}
h_\ell(\tau) \= \sum_{\widetilde{\Delta} < 0} c(\tilde{n},\tilde{\ell})\,q^{\widetilde{\Delta}/4m} +\sum_{{\Delta} \geq 0} c(n,\ell)\,q^{{\Delta}/4m}  \, .
\end{equation}
In turn, the Rademacher expansion for the Fourier coefficients of $h_\ell(\tau)$ takes the form
\begin{equation}
\label{eq:app-Rad}
c(n,\ell) \= 2\pi\,\sum_{k=1}^{\infty}\,\sum_{\substack{\widetilde{\ell}\in\mathbb{Z}/2m\mathbb{Z} \\ \widetilde{\Delta} < 0}}\,c(\tilde{n},\tilde{\ell})\,\frac{Kl\bigl(\frac{\Delta}{4m},\frac{\widetilde{\Delta}}{4m}\,;k,\psi\bigr)_{\ell\widetilde{\ell}}}{k}\,\Bigl(\frac{|\widetilde{\Delta}|}{\Delta}\Bigr)^{\frac{1-w}{2}}\,I_{1-w}\Bigl(\frac{\pi}{mk}\sqrt{|\widetilde{\Delta}|\Delta}\Bigr) \, .
\end{equation}
Here, $I_\rho(x)$ is the I-Bessel function of weight $\rho$, which has the following integral representation for $x \in \mathbb{R}^*$,
\begin{equation}
\label{eq:app-Bessel}
I_\rho(x) \= \frac{1}{2\pi\mathrm{i}}\,\Bigl(\frac{x}{2}\Bigr)^\rho\,\int_{\epsilon-\mathrm{i}\infty}^{\epsilon+\mathrm{i}\infty}\,t^{-\rho-1}\,e^{\,t+\tfrac{x^2}{4t}}\,\mathrm{d}t \, ,
\end{equation}
and asymptotics
\begin{equation}
\label{eq:app-Bessel-asymptotic}
I_{\rho}(x) \underset{x \rightarrow \infty}{\sim} \frac{e^x}{\sqrt{2\pi x}}\Bigl(1 - \frac{\mu - 1}{8x} + \frac{(\mu - 1)(\mu - 3^2)}{2!(8x)^3} - \frac{(\mu - 1)(\mu - 3^2)(\mu - 5^2)}{3!(8x)^5} + \ldots\Bigr) \, ,
\end{equation}
with $\mu = 4\rho^2$. In \eqref{eq:app-Rad}, $Kl\bigl(\tfrac{\Delta}{4m},\tfrac{\widetilde{\Delta}}{4m}\,;k,\psi)_{\ell\widetilde{\ell}}\;$ is the generalized Kloosterman sum
\begin{equation}
\label{eq:app-Klooster}
Kl(\mu,\nu\,;k,\psi)_{\ell\widetilde{\ell}} \defeq \sum_{\substack{0 \leq h< k \\ (h,k) = 1}}\,e^{2\pi \i\,\bigl(-\tfrac{h}{k}\mu+\tfrac{h'}{k}\nu \bigr)}\;\psi(\gamma)_{\ell\widetilde{\ell}} \, ,
\end{equation}
with $\gamma = \begin{pmatrix} h' & -\tfrac{hh'+1}{k} \\ k & -h \end{pmatrix} \in SL(2,\mathbb{Z})$ and $hh' \equiv -1$ (mod $k$).\\

There exists a generalization of the Rademacher expansion applicable to cases where the function $\varphi$ is a \emph{mock} Jacobi form~\cite{bringmann2006f,Bringmann:2010sd,bringmann2011extension,bringmann2012coefficients}. As discussed in the main text, the function that is relevant to our story is the mock Jacobi form $\psi_m^\mathrm{F}$. In this case, the generalized Rademacher expansion for the Fourier coefficients $c_m^\mathrm{F}(n,\ell)$, $\Delta \geq 0$, was obtained in \cite{Ferrari2017a}. It reads:
\begin{align}
\label{eq:app-mixed-mock-coeffs}
c^{\mathrm{F}}_m(n,\ell) \=&\, 2\pi\,\sum_{k=1}^\infty\,\sum_{\substack{\widetilde{\ell} \in \mathbb{Z}/2m\mathbb{Z} \\ 4m\widetilde{n} - \widetilde{\ell}^2 < 0}}\,c^{\mathrm{F}}_m(\widetilde{n},\widetilde{\ell})\,\frac{Kl\bigl(\frac{\Delta}{4m},\frac{\widetilde{\Delta}}{4m}\,;k,\psi\bigr)_{\ell\widetilde{\ell}}}{k}\,\biggl(\frac{|\widetilde{\Delta}|}{\Delta}\biggr)^{23/4}\,I_{23/2}\biggl(\frac{\pi}{m k}\sqrt{|\widetilde{\Delta}|\Delta}\biggr) \nonumber \\[1mm]
&\,+ \sqrt{2m}\,\sum_{k=1}^\infty\,\frac{Kl\bigl(\frac{\Delta}{4m},-1\,;k,\psi\bigr)_{\ell 0}}{\sqrt{k}}\,\biggl(\frac{4m}{\Delta}\biggr)^6\,I_{12}\biggl(\frac{2\pi}{k\sqrt{m}}\sqrt{\Delta}\biggr) \\[1mm]
&\,-\frac{1}{2\pi}\,\sum_{k=1}^\infty\,\sum_{\substack{j\in\mathbb{Z}/2m\mathbb{Z} \\ g \in \mathbb{Z}/2mk\mathbb{Z} \\ g \equiv j (\text{mod }2m)}} \frac{Kl\bigl(\frac{\Delta}{4m},-1-\frac{g^2}{4m}\,;k,\psi\bigr)_{\ell j}}{k^2}\,\biggl(\frac{4m}{\Delta}\biggr)^{25/4}\,\times \nonumber \\
&\qquad\quad\;\; \times \int_{-1/\sqrt{m}}^{+1/\sqrt{m}}\,f_{k,g,m}(u)\,I_{25/2}\Biggl(\frac{2\pi}{k\sqrt{m}}\sqrt{\Delta(1-m u^2)}\Biggr)(1-m u^2)^{25/4}\,\mathrm{d}u \, , \nonumber
\end{align}
where the multiplier system $\psi(\gamma)$ is given explicitly in \cite{Ferrari2017a} in terms of the (known) multiplier system of the Jacobi theta functions \eqref{eq:app-vartheta}, and the function $f_{k,g,m}$ in the last line is given by
\begin{equation}
\label{eq:app-fkgmu}
f_{k,g,m}(u) \defeq \begin{cases} \displaystyle\frac{\pi^2}{\text{sinh}^2(\frac{\pi u}{k}-\frac{\pi\mathrm{i}g}{2mk})} & \text{if} \, g\not\equiv 0 \,(\text{mod } 2mk) \, , \\[4mm] \displaystyle\frac{\pi^2}{\text{sinh}^2(\frac{\pi u}{k})}-\frac{k^2}{u^2} & \text{if} \, g\equiv 0 \,(\text{mod } 2mk) \, .
\end{cases}
\end{equation}
The last two terms in \eqref{eq:app-mixed-mock-coeffs} arise due to the mock modular nature of $\psi_m^{\mathrm{F}}$. Note that these terms have no free parameter, which is a consequence of the fact that the shadow of~$\psi_m^{\mathrm{F}}$ has a single polar coefficient equal to one. The first line is the standard Rademacher expansion for a Jacobi form of weight $-10$ and index $m$. Although the above formula may appear daunting, its main feature is that the coefficients $c_m^\mathrm{F}(n,\ell)$ for $\Delta \geq 0$ (the left-hand side) are completely determined by the polar coefficients $c_m^\mathrm{F}(n,\ell)$ for $\Delta < 0$, and the modular properties of $\psi_m^\mathrm{F}$ such as its weight, index, and multiplier system.

\section{Checks of finiteness of the set~$\gset$}
\label{sec:finiteness}

In this appendix, we present the magnetic-BSM walls of Section~\ref{sec:mag-BSM}, Case 3 that have~$r=r_+$. As explained there, we were not able to derive an analytic upper bound on their~$p$ entry. However, we were able to check that for a given charge vector~$(n,\ell,m)$ only a single\footnote{As explained in the main text, this wall comes with its metamorphic dual, whose contribution to the index gets identified.} wall contributes to the polar coefficients. Below we give the explicit form of these walls for all charge vectors with~$m\leq 30$ where they exist. They are built as~$PSL(2,\mathbb{Z})$ matrices with~$p,s$ entries consistent with~\eqref{eq:p-lower} and~\eqref{eq:s-window}, together with a numerical upper bound~$p \leq 10^6$. The associated transformed charges are given in the third column. Lastly, we also display the corresponding electric-BSM walls constructed following the procedure outlined in Sections~\ref{sec:el-BSM}  and~\ref{sec:expchecks}, together with their transformed charges.

\begin{longtable}{|C|C|C|C|C|}
	\hline
	\text{\textbf{Charges}} & \text{\textbf{ Mag. Walls }}  & \text{\textbf{Mag.~charges}} & \text{\textbf{ Elec. Walls }} & \text{\textbf{Elec.~charges}} \\
	(m , n, \ell\,; \Delta) & \gamma = \hormat{p}{q}{r}{s} &  (m_\gamma, \ n_\gamma, \ \ell_\gamma ) & \gamma = \hormat{p}{q}{r}{s} &  (m_\gamma, \ n_\gamma, \ \ell_\gamma ) \\
	\hline
	(10,2,9\,;-1) & \hormat{3}{2}{7}{5} & (-1,\ 0,\ 1) & \hormat{1}{3}{2}{7} & (0,\ -1,\ 1) \\
	\hline
	(14,2,11\,;-9) & \hormat{1}{2}{3}{7} & (-1,\ 0,\ 3) & \hormat{1}{1}{2}{3} & (0,\ -1,\ 3) \\
	(14,3,13\,;-1) & \hormat{4}{3}{9}{7} & (-1,\ 0,\ 1) & \hormat{1}{4}{2}{9} & (0,\ -1,\ 1) \\
	\hline
	(16,3,14\,;-4) & \hormat{2}{3}{5}{8} & (-1,\ 0,\ 2) & \hormat{1}{2}{2}{5} & (0,\ -1,\ 2) \\
	\hline
	(18,4,17\,;-1) & \hormat{5}{4}{11}{9} & (-1,\ 0,\ 1) & \hormat{1}{5}{2}{11} & (0,\ -1,\ 1) \\
	\hline
	(20,3,16\,;-16) & \hormat{1}{3}{3}{10} & (-1,\ 0,\ 4) & \hormat{1}{1}{2}{3} & (0,\ -1,\ 4) \\
	\hline
	(21,2,13\,;-1) & \hormat{3}{2}{10}{7} & (-1,\ 0,\ 1) & \hormat{1}{3}{3}{10} & (0,\ -1,\ 1) \\
	\hline
	(22,5,21\,;-1) & \hormat{6}{5}{13}{11} &  (-1,\ 0,\ 1) & \hormat{1}{6}{2}{13} & (0,\ -1,\ 1) \\
	\hline
	(23,3,17\,;-13) & \hormat{1}{2}{3}{7} & (-1,\ 1,\ 3) & \hormat{1}{1}{2}{3} &  (1,\ -1,\ 3) \\
	\hline
	(24,5,22\,;-4) & \hormat{3}{5}{7}{12} & (-1,\ 0,\ 2) & \hormat{1}{3}{2}{7} &  (0,\ -1,\ 2) \\
	\hline
	(26,4,21\,;-25) & \hormat{1}{4}{3}{13} & (-1,\ 0,\ 5) & \hormat{1}{1}{2}{3} & (0,\ -1,\ 5)  \\
	(26,5,23\,;-9) & \hormat{2}{5}{5}{13} & (-1,\ 0,\ 3) & \hormat{1}{2}{2}{5} & (0,\ -1,\ 3) \\
	(26,6,25\,;-1) & \hormat{7}{6}{15}{13} & (-1,\ 0,\ 1) & \hormat{1}{7}{2}{15} & (0,\ -1,\ 1) \\
	\hline
	(27,2,15\,;-9) & \hormat{1}{2}{4}{9} & (-1,\ 0,\ 3) & \hormat{1}{1}{3}{4} & (0,\ -1,\ 3) \\
	\hline
	(29,4,22\,;-20) & \hormat{1}{3}{3}{10} & (-1,\ 1,\ 4) & \hormat{1}{1}{2}{3} & (1,\ -1,\ 4) \\
	\hline
	(30,3,19,\;-1) & \hormat{4}{3}{13}{10} & (-1,\ 0,\ 1) & \hormat{1}{4}{3}{13} & (0,\ -1,\ 1) \\
	(30,7,29,\;-1) & \hormat{8}{7}{17}{15} & (-1,\ 0,\ 1) & \hormat{1}{8}{2}{17} & (0,\ -1,\ 1) \\
	\hline
\end{longtable}

\section{A sample of polar degeneracies}
\label{sec:furtherchecks}

In this section, we use our formula~\eqref{eq:degwmeta} to compute the polar coefficients of~$\psi_m^\mathrm{F}(\tau, z)$ for a few sample cases of~$ \displaystyle m $. 
The results presented below precisely agree with the polar coefficients extracted from the inverse of the Igusa cusp form~$\Phi_{10}^{-1}$ following the method outlined in the introduction below~\eqref{eq:Fourier-eta}.
For the sake of brevity, we only present a few examples owing to the large amount of data. \\
\mpage{
	\begin{longtable}{|C|C|}
		\hline
		(m,n,\ell) & \textbf{Degeneracy} \\  \hline
		(8,-1,-8) & 10 \\  \hline
		(8,-1,-7) & 216 \\  \hline
		(8,-1,-6) & 2592 \\  \hline
		(8,-1,-5) & 22400 \\  \hline
		(8,-1,-4) & 153900 \\  \hline
		(8,-1,-3) & 881280 \\  \hline
		(8,-1,-2) & 4295024 \\  \hline
		(8,-1,-1) & 17807488 \\  \hline
		(8,-1,0) & 61062180 \\  \hline
		(8,0,-8) & 4608 \\  \hline
		(8,0,-7) & 54432 \\  \hline
		(8,0,-6) & 460800 \\  \hline
		(8,0,-5) & 3078000 \\  \hline
		(8,0,-4) & 16922880 \\  \hline
		(8,0,-3) & 77538312 \\  \hline
		(8,0,-2) & 293278848 \\  \hline
		(8,0,-1) & 897317904 \\  \hline
		(12,-1,-12) & 14 \\  \hline
		(12,-1,-11) & 312 \\  \hline
		(12,-1,-10) & 3888 \\  \hline
\end{longtable}}
\mpage{
	\begin{longtable}{|C|C|}
		\hline
		(m,n,\ell) & \textbf{Degeneracy} \\  \hline
		(12,-1,-9) & 35200 \\  \hline
		(12,-1,-8) & 256500 \\  \hline
		(12,-1,-7) & 1586304 \\  \hline
		(12,-1,-6) & 8589760 \\  \hline
		(12,-1,-5) & 41513472 \\  \hline
		(12,-1,-4) & 181071642 \\  \hline
		(12,-1,-3) & 715942400 \\  \hline
		(12,-1,-2) & 2558054736 \\  \hline
		(12,-1,-1) & 8144997288 \\  \hline
		(12,-1,0) & 22401525768 \\  \hline
		(12,0,-12) & 6912 \\  \hline
		(12,0,-11) & 85536 \\  \hline
		(12,0,-10) & 768000 \\  \hline
		(12,0,-9) & 5540400 \\  \hline
		(12,0,-8) & 33841152 \\  \hline
		(12,0,-7) & 180384960 \\  \hline
		(12,0,-6) & 853994880 \\  \hline
		(12,0,-5) & 3621813000 \\  \hline
		(12,0,-4) & 13762586880 \\  \hline
		(12,0,-3) & 46454793840 \\  \hline
\end{longtable}}
\mpage{
	\begin{longtable}{|C|C|}
		\hline
		(m,n,\ell) & \textbf{Degeneracy} \\  \hline
		(12,0,-2) & 137011625088 \\  \hline
		(12,0,-1) & 346542104640 \\ \hline
		(15,-1,-15) & 17 \\  \hline
		(15,-1,-14) & 384 \\  \hline
		(15,-1,-13) & 4860 \\  \hline
		(15,-1,-12) & 44800 \\  \hline
		(15,-1,-11) & 333450 \\  \hline
		(15,-1,-10) & 2115072 \\  \hline
		(15,-1,-9) & 11810920 \\  \hline
		(15,-1,-8) & 59304960 \\  \hline
		(15,-1,-7) & 271607175 \\  \hline
		(15,-1,-6) & 1145472010 \\  \hline
		(15,-1,-5) & 4474748016 \\  \hline
		(15,-1,-4) & 16230894480 \\  \hline
		(15,-1,-3) & 54579105710 \\  \hline
		(15,-1,-2) & 168940316442 \\  \hline
		(15,-1,-1) & 473847914250 \\  \hline
		(15,-1,0) & 1169926333888 \\  \hline
		(15,0,-15) & 8640 \\  \hline
		(15,0,-14) & 108864 \\  \hline
\end{longtable}}
\newpage
\begin{minipage}{3in}
	\begin{longtable}{|C|C|}
		\hline
		(m,n,\ell) & \textbf{Degeneracy} \\  \hline
		(15,0,-13) & 998400 \\  \hline
		(15,0,-12) & 7387200 \\  \hline
		(15,0,-11) & 46531584 \\  \hline
		(15,0,-10) & 257692800 \\  \hline
		(15,0,-9) & 1280987136 \\  \hline
		(15,0,-8) & 5794286592 \\  \hline
		(15,0,-7) & 24054966432 \\  \hline
		(15,0,-6) & 92055592800 \\  \hline
		(15,0,-5) & 324742634880 \\  \hline
		(15,0,-4) & 1050674127360 \\  \hline
		(15,0,-3) & 3084121200240 \\  \hline
		(15,0,-2) & 8086496395392 \\  \hline
		(15,0,-1) & 18639111229056 \\ \hline
		(20,-1,-20) & 22 \\  \hline
		(20,-1,-19) & 504 \\  \hline
		(20,-1,-18) & 6480 \\  \hline
		(20,-1,-17) & 60800 \\  \hline
		(20,-1,-16) & 461700 \\  \hline
		(20,-1,-15) & 2996352 \\  \hline
		(20,-1,-14) & 17179520 \\  \hline
		(20,-1,-13) & 88957440 \\  \hline
		(20,-1,-12) & 422500050 \\  \hline
		(20,-1,-11) & 1861392000 \\  \hline
		(20,-1,-10) & 7670991600 \\  \hline
		(20,-1,-9) & 29756263680 \\  \hline
		(20,-1,-8) & 109143179628 \\  \hline
		(20,-1,-7) & 379708336000 \\  \hline
		(20,-1,-6) & 1255072397760 \\  \hline
		(20,-1,-5) & 3941870551552 \\  \hline
		(20,-1,-4) & 11741887027420 \\  \hline
		(20,-1,-3) & 33017035944960 \\  \hline
		(20,-1,-2) & 86858448321760 \\  \hline
		(20,-1,-1) & 210502750565336 \\  \hline
		(20,-1,0) & 458681404549752 \\  \hline
		(20,0,-20) & 11520 \\  \hline
	\end{longtable}
\end{minipage}
\begin{minipage}{3in}
	\begin{longtable}{|C|C|}
		\hline
		(m,n,\ell) & \textbf{Degeneracy} \\  \hline
		(20,0,-19) & 147744 \\  \hline
		(20,0,-18) & 1382400 \\  \hline
		(20,0,-17) & 10465200 \\  \hline
		(20,0,-16) & 67682304 \\  \hline
		(20,0,-15) & 386539200 \\  \hline
		(20,0,-14) & 1992646656 \\  \hline
		(20,0,-13) & 9415715400 \\  \hline
		(20,0,-12) & 41236992000 \\  \hline
		(20,0,-11) & 168761815200 \\  \hline
		(20,0,-10) & 649227576960 \\  \hline
		(20,0,-9) & 2357493364944 \\  \hline
		(20,0,-8) & 8100477591552 \\  \hline
		(20,0,-7) & 26357479662696 \\  \hline
		(20,0,-6) & 81109429456896 \\  \hline
		(20,0,-5) & 235139573743080 \\  \hline
		(20,0,-4) & 637627506612480 \\  \hline
		(20,0,-3) & 1600236038494008 \\  \hline
		(20,0,-2) & 3668952120405120 \\  \hline
		(20,0,-1) & 7591723325520696 \\ \hline
		(25,-1,-25) & 27 \\  \hline
		(25,-1,-24) & 624 \\  \hline
		(25,-1,-23) & 8100 \\  \hline
		(25,-1,-22) & 76800 \\  \hline
		(25,-1,-21) & 589950 \\  \hline
		(25,-1,-20) & 3877632 \\  \hline
		(25,-1,-19) & 22548120 \\  \hline
		(25,-1,-18) & 118609920 \\  \hline
		(25,-1,-17) & 573392925 \\  \hline
		(25,-1,-16) & 2577312000 \\  \hline
		(25,-1,-15) & 10867238100 \\  \hline
		(25,-1,-14) & 43281838080 \\  \hline
		(25,-1,-13) & 163714769010 \\  \hline
		(25,-1,-12) & 590657356800 \\  \hline
		(25,-1,-11) & 2039489782215 \\  \hline
	\end{longtable}
\end{minipage}
\newpage
\begin{minipage}{3in}
	\begin{longtable}{|C|C|}
		\hline
		(m,n,\ell) & \textbf{Degeneracy} \\  \hline
		(25,-1,-10) & 6757400879544 \\  \hline
		(25,-1,-9) & 21524693185035 \\  \hline
		(25,-1,-8) & 65996213108640 \\  \hline
		(25,-1,-7) & 194862258910125 \\  \hline
		(25,-1,-6) & 553836164704200 \\  \hline
		(25,-1,-5) & 1512859258863720 \\  \hline
		(25,-1,-4) & 3959104942633920 \\  \hline
		(25,-1,-3) & 9871029907055100 \\  \hline
		(25,-1,-2) & 23235254202421080 \\  \hline
		(25,-1,-1) & 50912869133641230 \\  \hline
		(25,-1,0) & 101777445949328016 \\  \hline
		(25,0,-25) & 14400 \\  \hline
		(25,0,-24) & 186624 \\  \hline
		(25,0,-23) & 1766400 \\  \hline
		(25,0,-22) & 13543200 \\  \hline
		(25,0,-21) & 88833024 \\  \hline
		(25,0,-20) & 515385600 \\  \hline
		(25,0,-19) & 2704306176 \\  \hline
		(25,0,-18) & 13037144400 \\  \hline
		(25,0,-17) & 58419072000 \\  \hline
		(25,0,-16) & 245471731200 \\  \hline
		(25,0,-15) & 973841356800 \\  \hline
		(25,0,-14) & 3667210825824 \\  \hline
		(25,0,-13) & 13163221094712 \\  \hline
		(25,0,-12) & 45182542960512 \\  \hline
		(25,0,-11) & 148662826021776 \\  \hline
		(25,0,-10) & 469629919909200 \\  \hline
		(25,0,-9) & 1425524443410600 \\  \hline
		(25,0,-8) & 4157177177438208 \\  \hline
		(25,0,-7) & 11632258851459120 \\  \hline
		(25,0,-6) & 31142251091455056 \\  \hline
	\end{longtable}
\end{minipage}
\begin{minipage}{3in}
	\begin{longtable}{|C|C|}
		\hline
		(m,n,\ell) & \textbf{Degeneracy} \\  \hline
		(25,0,-5) & 79392136978466280 \\  \hline
		(25,0,-4) & 191354425929177600 \\  \hline
		(25,0,-3) & 431942335951930920 \\  \hline
		(25,0,-2) & 903444520233320160 \\  \hline
		(25,0,-1) & 1734243812507148504 \\ \hline
		(30,-1,-30) & 32 \\  \hline
		(30,-1,-29) & 744 \\  \hline
		(30,-1,-28) & 9720 \\  \hline
		(30,-1,-27) & 92800 \\  \hline
		(30,-1,-26) & 718200 \\  \hline
		(30,-1,-25) & 4758912 \\  \hline
		(30,-1,-24) & 27916720 \\  \hline
		(30,-1,-23) & 148262400 \\  \hline
		(30,-1,-22) & 724285800 \\  \hline
		(30,-1,-21) & 3293232000 \\  \hline
		(30,-1,-20) & 14063484600 \\  \hline
		(30,-1,-19) & 56807412480 \\  \hline
		(30,-1,-18) & 218286358680 \\  \hline
		(30,-1,-17) & 801606412800 \\  \hline
		(30,-1,-16) & 2823908929200 \\  \hline
		(30,-1,-15) & 9572984572928 \\  \hline
		(30,-1,-14) & 31308644147760 \\  \hline
		(30,-1,-13) & 98994300336408 \\  \hline
		(30,-1,-12) & 303118553078000 \\  \hline
		(30,-1,-11) & 899973382487040 \\  \hline
		(30,-1,-10) & 2593304705881944 \\  \hline
		(30,-1,-9) & 7256059956824960 \\  \hline
		(30,-1,-8) & 19714696668645120 \\  \hline
		(30,-1,-7) & 51987444460877112 \\  \hline
		(30,-1,-6) & 132888904085878840 \\  \hline
		(30,-1,-5) & 328541317658460288 \\  \hline
	\end{longtable}
\end{minipage}
\newpage
\begin{minipage}{3.1in}
	\begin{longtable}{|C|C|}
		\hline
		(m,n,\ell) & \textbf{Degeneracy} \\  \hline
		(30,-1,-4) & 782726769159905520 \\  \hline
		(30,-1,-3) & 1786711012854816640 \\  \hline
		(30,-1,-2) & 3873826859341935240 \\  \hline
		(30,-1,-1) & 7877457636088694664 \\  \hline
		(30,-1,0) & 14774702983837211616 \\  \hline
		(30,0,-30) & 17280 \\  \hline
		(30,0,-29) & 225504 \\  \hline
		(30,0,-28) & 2150400 \\  \hline
		(30,0,-27) & 16621200 \\  \hline
		(30,0,-26) & 109983744 \\  \hline
		(30,0,-25) & 644232000 \\  \hline
		(30,0,-24) & 3415965696 \\  \hline
		(30,0,-23) & 16658573400 \\  \hline
		(30,0,-22) & 75601152000 \\  \hline
		(30,0,-21) & 322181647200 \\  \hline
		(30,0,-20) & 1298455142400 \\  \hline
		(30,0,-19) & 4976928977904 \\  \hline
		(30,0,-18) & 18225998438400 \\  \hline
	\end{longtable}
\end{minipage}
\begin{minipage}{3.1in}
	\begin{longtable}{|C|C|}
		\hline
		(m,n,\ell) & \textbf{Degeneracy} \\  \hline
		(30,0,-17) & 64008602395200 \\  \hline
		(30,0,-16) & 216236828000256 \\  \hline
		(30,0,-15) & 704444493333240 \\  \hline
		(30,0,-14) & 2217472328601600 \\  \hline
		(30,0,-13) & 6755213522446272 \\  \hline
		(30,0,-12) & 19937873507586048 \\  \hline
		(30,0,-11) & 57052739503386000 \\  \hline
		(30,0,-10) & 158314644037023360 \\  \hline
		(30,0,-9) & 425845859881717296 \\  \hline
		(30,0,-8) & 1109163700842332160 \\  \hline
		(30,0,-7) & 2791657574839862400 \\  \hline
		(30,0,-6) & 6767317387211658624 \\  \hline
		(30,0,-5) & 15722850796882492680 \\  \hline
		(30,0,-4) & 34777416386698174464 \\  \hline
		(30,0,-3) & 72630503135639181864 \\  \hline
		(30,0,-2) & 141950626331053105152 \\  \hline
		(30,0,-1) & 257636988474238025304 \\ \hline
	\end{longtable}
\end{minipage}

\vspace{0.5in}

\bibliography{Dyon_DH}

\bibliographystyle{JHEP}
\markboth{}{}

\end{document}